\newtheorem{theorem}{Theorem}
\newtheorem{lemma}[theorem]{Lemma}
\newtheorem{corollary}[theorem]{Corollary}
\newtheorem{example}[theorem]{Example}
\newtheorem{conj}[theorem]{Conjecture}
\newcommand{\ord}{{\mathrm{ord}}}
\newcommand{\rank}{{\mathrm{rank}}}
\newcommand{\lcm}{{\mathrm{lcm}}}
\newcommand{\tr}{{\mathrm{Tr}}}
\newcommand{\Tr}{{\mathrm{Tr}}}
\newcommand{\gf}{{\mathrm{GF}}}
\newcommand{\PG}{{\mathrm{PG}}}
\newcommand{\wt}{{\mathtt{wt}}}
\newcommand{\Z}{\mathbb{{Z}}}
\newcommand{\m}{\mathbb{M}}
\newcommand{\C}{{\mathcal{C}}}
\newcommand{\M}{{\mathsf{M}}}
\newcommand{\bc}{{\mathbf{c}}}
\newcommand{\bx}{{\mathbf{x}}}
\newcommand{\bzero}{{\mathbf{0}}}
\newcommand{\bone}{{\mathbf{1}}}
\begin{document}
%
\title{The extended codes of a family of reversible MDS cyclic codes\thanks{Z. Sun's research was supported by The National Natural Science Foundation of China under Grant Number 62002093. C. Ding's research was supported by the Hong Kong Research Grants Council, Proj. No. 16301123.}}

\author{Zhonghua Sun\thanks{Z. Sun with the School of Mathematics, 
Hefei University of Technology, Hefei, 230601, Anhui, China (email:  sunzhonghuas@163.com)}, Cunsheng Ding\thanks{C. Ding is with the Department of Computer Science and Engineering, 
The Hong Kong University of Science and Technology, Clear Water Bay, Kowloon, Hong Kong, China (email: cding@ust.hk)}
}

\maketitle

\begin{abstract} 
A linear code  with parameters $[n, k, n-k+1]$ is called a maximum distance separable (MDS  for short) code. A linear code with parameters $[n, k, n-k]$ is said to be almost maximum distance separable (AMDS for short). A linear code is said to be near maximum distance separable (NMDS for short) if both the code and its dual are AMDS. MDS codes are very important in both theory and practice. There is a classical construction of a $[q+1, 2u-1, q-2u+3]$ MDS code for each $u$ with $1 \leq u \leq \lfloor\frac{q+1}2\rfloor$, which is a reversible and cyclic code. The objective of this paper is to study the extended codes of this family of MDS codes. Two families of MDS codes and several families of NMDS codes are obtained. The NMDS codes have applications in finite geometry, cryptography and distributed and cloud data storage systems. The weight distributions of some of the extended codes are determined. 
\end{abstract}
\begin{IEEEkeywords}
Cyclic code, \and extended code, \and linear code, \and locally recoverable code,  \and MDS code,  \and subfield code    
\end{IEEEkeywords}

%
\IEEEpeerreviewmaketitle

\section{Introduction of motivations, objectives and methodology}\label{sec-intro} 

By an $[n, k, d]$ code over $\gf(q)$ we mean a $k$-dimensional linear subspace of $\gf(q)^n$ 
having minimum Hamming distance $d$.  The parameters of a linear code refer to its length, 
dimension and minimum distance. 
An $[n, k, d]$ 
code over $\gf(q)$ is called \textit{distance-optimal} (respectively, 
\textit{dimension-optimal} and \textit{length-optimal}) if there is no $[n, k, d' \geq d+1]$ (respectively, 
$[n, k' \geq k+1, d]$ and $[n' \leq n-1, k, d]$) linear code over $\gf(q)$. An optimal code 
is a code that is length-optimal, or dimension-optimal, or distance-optimal, or meets a bound for linear codes. 
    
Let $A_i(\C)$ denote the number of codewords of Hamming weight $i$ 
in a linear code $\C$. The vector $(A_0(\C), A_1(\C), \ldots, A_n(\C))$ is called the weight distribution of $\C$, and the 
weight enumerator is defined by the polynomial 
$ 
\sum_{i=0}^n A_i(\C)z^i. 
$ 
Throughout this paper, we use $\dim(\C)$ and $d(\C)$ to denote the dimension and minimum distance of a linear code $\C$. 

Denote $[n]=\{0,1,\cdots,n-1\}$ for each positive integer $n$. We usually index the coordinates of the codewords in $\C$ with the elements in $[n]$. 
An $[n, k, d]$ code $\C$ over $\gf(q)$ is called an 
$(n, k, d,q; r)$-LRC (locally recoverable code) if for each $i \in [n]$ there is a subset $R_i \subseteq [n] \setminus \{i\}$  
of size $r$ and a function $f_i(x_1,x_2, \ldots, x_r)$ on $\gf(q)^r$ such that $c_i=f_i(\bc_{R_i})$ for each codeword 
$\bc=(c_0, c_1,\ldots, c_{n-1})$ in $\C$, where $\bc_{R_i}$ is the projection of $\bc$ at $R_i$. The symbol $c_i$ is called 
the $i$-th \emph{code symbol} and the set $R_i$ is called the \emph{repair set} or \emph{recovering set} of the 
code symbol $c_i$. In this definition of LRCs, the degrees of the functions $f_i$ are not restricted. If we require that 
each $f_i$ be a homogeneous function of degree $1$ in the definition above, then we say that $\C$ is $(n, k, d, q; r)$-LLRC 
(linearly local recoverable code) and has linear locality $r$.  
For general theory about minimum locality and minimum linear locality, the reader is referred to \cite{TFZD21}. 

For any $(n, k, d, q; r)$-LLRC, Gopalan et al. proved the following upper bound on the minimum distance $d$ \cite{GHSY12}: 
\begin{equation}\label{eq-bound}
  d\leq n-k-\left\lceil\frac{k}r\right\rceil+2.
\end{equation}
The bound in \eqref{eq-bound} is similar to the Singleton bound, and is thus called the \emph{Singleton-like bound}. If an $(n,k,d,q;r)$-LLRC meets the Singleton-like bound with equality, then we say that the $(n,k,d,q;r)$-LLRC is distance-optimal ($d$-optimal for short). 
If an $(n,k,d,q;r)$-LLRC meets the Singleton-like bound minus one with equality, then we say that the $(n,k,d,q;r)$-LLRC is almost distance-optimal (almost $d$-optimal for short).  
 Note that the Singleton-like bound is not tight for codes over small finite
fields, as it is independent of the alphabet size $q$.

For any $(n, k, d, q; r)$-LLRC, Cadambe and Mazumdar developed the following bound on the dimension $k$ \cite{CM13}, \cite{CM15}:
\begin{equation*}\label{cm-bound}
k\leq\min_{t\in \Z_+}[tr+k_{{\rm opt}}^{(q)}(n-t(r+1),d)],
\end{equation*}
where $\Z_+$ denotes the set of all positive integers, and 
$k_{{\rm opt}}^{(q)}(n, d)$ is the largest possible dimension of a code with length $n$, minimum distance $d$, and alphabet size $q$.   An $(n,k,d,q;r)$-LLRC that attains the
CM bound with equality is said to be dimension-optimal ($k$-optimal for short).

Any linear code with parameters $[n, k, n-k+1]$ for some positive integers $n$ and $k$ is called an MDS (maximum distance separable) code. Any linear code with parameters $[n, k, n-k]$ for some positive integers $n$ and $k$ is said to be almost maximum distance separable (AMDS for short). A linear code is said to be near maximum distance separable (NMDS for short) if both the code and its dual are almost maximum distance separable. By definition, an $[n, k]$ linear code $\C$ is NMDS if and only if $d(\C) + d(\C^\perp)=n$, where $d(\C)$ and $d(\C^\perp)$ denote the minimum distance of $\C$ and $\C^\perp$, respectively.  

MDS codes are very important in theory and have important applications. For example, the Reed-Solomon codes 
are widely used in communication systems and data storage devices. It is easy to verify that each MDS code is a 
$d$-optimal and $k$-optimal LLRC. In theory, MDS codes over finite fields are the same as arcs in projective geometry, 
and some MDS codes are hrperovals and can be used to construct $t$-designs \cite[Chapter 12]{Dingbook18}. 
Motivated by the importance of MDS codes in both theory and practice, our first objective is to construct several families of 
MDS codes by extending a family of known MDS cyclic codes. Our methodology is the extending technique in coding theory.  

The error-correcting capability of NMDS codes is slightly worse than that of MDS codes. However, NMDS codes have the following advantages over MDS codes: 
\begin{itemize}
\item Some infinite families of NMDS codes support infinite families of $t$-designs for $t \in \{2,3,4\}$  \cite{DingTang19,HW22,TangDing20,XCQ22,YZ22}, while MDS codes support only trivial designs. 
\item The length $n$ of an NMDS code over $\gf(q)$ with parameters $[n, k, n-k]$ can be more than $q+1$, while that of an MDS code with parameters $[n, k, n-k+1]$ over $\gf(q)$ is conjectured to be at most $q+1$ for $k \not\in \{3, n-3\}$. 
\end{itemize} 
NMDS codes correspond also to certain objects in projective geometry \cite{AL05,AL08,AGS21,BGZ16,CCMP,DeBoer96,DodLan95,FaldumWillems97,GP07,MMP02}. In addition, NMDS codes have applications in cryptography \cite{MS19}. Recently, it was shown in \cite{TFZD21} that an NMDS code is either a $d$-optimal and $k$-optimal LLRC or an almost $d$-optimal and $k$-optimal LLRC (see \cite{LiHeng22a,LiHeng22b} for further information). This application of NMDS codes in distributed and cloud data storage as LLRCs is the second motivation of this paper. Compared with MDS codes, there are a very small number of families of NMDS codes reported in the literature \cite{AL05, AL08, AGS21, BGZ16, CCMP, DingTang19, DodLan95, HLW22, LiHeng22a, LiHeng22b, TangDing20, WH2020, XCQ22}. Our second objective is to construct several families of NMDS codes by extending a family of known MDS cyclic codes.  

While a lot of infinite families of MDS codes over finite fields with variable parameters have been constructed in the literature, very limited results on the extended codes of MDS codes are known \cite[Chapter 7]{HP03}, \cite[Chapter 11]{MS77}. Our third objective is to study the extended codes of a family of MDS cyclic codes.



The rest of this paper is organised as follows. Section \ref{sec-prelimi} presents some auxiliary results about linear codes. 
Section \ref{sec-LCDMDScycliccode} introduces a family of known MDS codes over $\gf(q)$. Section  \ref{sec-extcodes} 
investigates the extended codes of this family of known MDS codes over $\gf(q)$. Section \ref{sec-final202171} summarizes 
this paper and makes concluding remarks. 
 
\section{Preliminaries}\label{sec-prelimi}
 
\subsection{Some operations on linear codes} 

Let $\C$ be an $[n, k, d]$ code over $\gf(q)$. The \emph{extended code\index{entended code}} 
$\overline{\C}$ of $\C$ is defined by 
$$ 
\overline{\C}=\left\{(c_0,c_1,\ldots, c_{n-1}, c_{n}) \in \gf(q)^{n+1}: (c_0,c_1,\ldots, c_{n-1}) \in \C \mbox{ with } 
\sum_{i=0}^{n} c_i =0\right\}.  
$$

Let $\mathbf{H}$ and $\overline{\mathbf{H}}$ denote the check matrix of $\C$ and $\overline{\C}$, respectively. Then we have the following known result whose proof is straightforward.   

\begin{lemma}\label{lem-extendedCodeParam} \cite{HP03}
Let $\C$ be an $[n, k, d]$ code over $\gf(q)$. Then $\overline{\C}$ is an $[n+1, k, \overline{d}]$ code over $\gf(q)$, where $\overline{d}=d$ or $d+1$. In the binary case, $\overline{d}=d$ if $d$ is even, and $\overline{d}=d+1$ otherwise. In addition, the check matrix $\overline{\mathbf{H}}$ of $\overline{\C}$ can be deduced from that of $\C$ by 
\begin{align*}
\overline{\mathbf{H}}
=\begin{bmatrix} 
\bone & 1 \\
\mathbf{H} &     \bzero^T  
\end{bmatrix},  
\end{align*} 
where $\bone$ is the all-one vector of length $n$ and $\bzero$ is the all-zero vector of length $n$.    
\end{lemma}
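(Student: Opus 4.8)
The statement to prove is Lemma~\ref{lem-extendedCodeParam}, a standard fact about extended codes. Here is how I would organize the proof.

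\textbf{Plan of proof.} The claim has three parts: (i) the parameters of $\overline{\C}$, (ii) the refinement of $\overline{d}$ in the binary case, and (iii) the shape of the check matrix $\overline{\mathbf{H}}$. I would treat the check-matrix part first, since it drives everything else, then deduce the length, dimension and minimum distance from it.

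\textbf{Step 1: the check matrix.} Let $\mathbf{H}$ be an $(n-k)\times n$ check matrix of $\C$, so that $\bc \in \C$ iff $\mathbf{H}\bc^T = \bzero^T$. By definition, $\overline{\bc} = (c_0,\ldots,c_{n-1},c_n) \in \overline{\C}$ iff $(c_0,\ldots,c_{n-1}) \in \C$ and $\sum_{i=0}^n c_i = 0$. The first condition is $\mathbf{H}(c_0,\ldots,c_{n-1})^T = \bzero^T$, and since the last column of $\mathbf{H}$ in the displayed matrix is $\bzero^T$ this is equivalent to asking that the bottom $n-k$ rows of $\overline{\mathbf{H}}$ annihilate $\overline{\bc}^T$. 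The second condition, $c_0 + c_1 + \cdots + c_{n-1} + c_n = 0$, is exactly the equation given by the first row $(\bone, 1)$ of $\overline{\mathbf{H}}$. Hence $\overline{\bc} \in \overline{\C}$ iff $\overline{\mathbf{H}}\,\overline{\bc}^T = \bzero^T$. It remains to check that $\overline{\mathbf{H}}$ has full row rank $n-k+1$: the bottom $n-k$ rows are linearly independent because $\mathbf{H}$ has rank $n-k$, and the first row $(\bone,1)$ cannot lie in their span because it has a nonzero entry in the last coordinate while every other row is $0$ there. So $\overline{\mathbf{H}}$ is a valid check matrix of the code $\overline{\C}$, which therefore has length $n+1$ and dimension $(n+1)-(n-k+1) = k$.

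\textbf{Step 2: the minimum distance.} The projection map $\pi:\overline{\C}\to\C$ deleting the last coordinate is a linear bijection (it is injective since $c_n$ is determined by $c_0+\cdots+c_{n-1}$, and surjective onto $\C$ since every $\bc\in\C$ extends). For $\overline{\bc}$ with $\wt(\overline{\bc})\ge 1$ we have $\wt(\overline{\bc}) \in \{\wt(\pi(\overline{\bc})),\ \wt(\pi(\overline{\bc}))+1\}$ according as $c_n=0$ or $c_n\neq 0$. Taking minima over nonzero codewords gives $d \le \overline{d} \le d+1$. For the binary refinement: if $d$ is even, take a codeword $\bc\in\C$ of weight $d$; its coordinate sum is $d \bmod 2 = 0$, so it extends by $c_n=0$, giving a weight-$d$ word in $\overline{\C}$, whence $\overline{d}=d$. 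If $d$ is odd, then every weight-$d$ codeword of $\C$ has odd coordinate sum, hence extends with $c_n=1$ to weight $d+1$; and no codeword of $\overline{\C}$ can have odd weight because in the binary case the parity-check row $(\bone,1)$ forces $\sum_{i=0}^n c_i = 0$, i.e.\ even weight. So the minimum weight of $\overline{\C}$ is even and at least $d$, hence at least $d+1$; combined with the upper bound $\overline{d}\le d+1$ this gives $\overline{d}=d+1$.

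\textbf{Main obstacle.} There is really no hard step here; the only point that needs a little care is the full-rank verification of $\overline{\mathbf{H}}$ (equivalently, that the parity-check equation $\sum c_i = 0$ is genuinely new and not already implied by $\mathbf{H}$), and, in the binary case, the observation that the single parity bit makes all codeword weights even. Everything else is bookkeeping with the definitions. Since the excerpt already attributes the result to \cite{HP03} and calls its proof straightforward, I would keep the writeup to the few lines above.
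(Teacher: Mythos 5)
Your proof is correct, and it is precisely the standard argument the paper is invoking: the paper itself offers no proof of this lemma, citing \cite{HP03} and calling the verification straightforward. Your three steps (identifying $\overline{\mathbf{H}}$ as a check matrix via the new parity equation together with the full-rank check on the last column, the weight comparison through the bijective projection onto $\C$, and the even-weight observation in the binary case) are exactly the routine details being left to the reader, and they are carried out without gaps.
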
     

The all-one vector $\bone$ in $\gf(q)^n$ and the rows of a generator matrix of $\C$ span a linear subspace $\widetilde{\C}$ 
of $\gf(q)^n$, which is called the \emph{augmented code} of $\C$. If $\bone  \in \C$, then $\widetilde{\C}=\C$. Otherwise, 
$\C$ is a subcode of $\widetilde{\C}$ and $\dim(\widetilde{\C})=\dim(\C)+1$. The following lemma will be useful. 

\begin{lemma}\label{lem-augment-extend-code} 
Let $q$ be a power of a prime $p$, $n$ be a positive integer with $n\equiv -1\pmod{p}$. Let $\C$ be a linear code of length $n$ over $\gf(q)$. Then $\overline{\widetilde{\C}}=\widetilde{\overline{\C}}$. 
\end{lemma}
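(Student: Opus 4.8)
The plan is to show that both $\overline{\widetilde{\C}}$ and $\widetilde{\overline{\C}}$ equal the same explicitly described code, namely the span of $\overline{\C}$ together with one extra vector obtained by extending the all-one vector $\bone \in \gf(q)^n$. First I would record the two elementary operations on codewords. If $\bc = (c_0,\dots,c_{n-1}) \in \gf(q)^n$, then extending appends the symbol $c_n = -\sum_{i=0}^{n-1} c_i$, and augmenting adjoins $\bone$ to a spanning set. The hypothesis $n \equiv -1 \pmod p$ enters precisely here: the extension of the all-one vector $\bone \in \gf(q)^n$ is $(\bone, -n) = (\bone, 1) \in \gf(q)^{n+1}$, since $-n \equiv 1 \pmod p$. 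Denote this vector by $\bone^*$; note it is \emph{not} the all-one vector of length $n+1$ unless one is careless, but it is the natural ``extended all-one vector.''

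Next I would compute $\widetilde{\overline{\C}}$. By definition $\overline{\C} \subseteq \gf(q)^{n+1}$ consists of all $(\bc, c_n)$ with $\bc \in \C$ and $\sum_{i=0}^{n} c_i = 0$, and $\widetilde{\overline{\C}}$ is the span of $\overline{\C}$ together with the all-one vector $\bone_{n+1} \in \gf(q)^{n+1}$ of length $n+1$. I claim this span equals the span of $\overline{\C}$ together with $\bone^* = (\bone,1)$: indeed $\bone_{n+1} - \bone^* = (\bzero, \ldots, 0, 1-1)$... — more carefully, I should exhibit a generator matrix. Taking a generator matrix $G$ of $\C$, a generator matrix of $\widetilde{\C}$ is $\left[\begin{smallmatrix} \bone \\ G \end{smallmatrix}\right]$ (discarding $\bone$ if $\bone \in \C$), and then a generator matrix of $\overline{\widetilde{\C}}$ is obtained by appending to each row the negative of its coordinate sum, giving $\left[\begin{smallmatrix} \bone & 1 \\ G & \bv \end{smallmatrix}\right]$ where $\bv$ is the vector of row-sum negations of $G$ and the top-right entry is $-n = 1$ by the congruence hypothesis. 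On the other hand, $\overline{\C}$ has generator matrix $\left[\begin{smallmatrix} G & \bv \end{smallmatrix}\right]$, so $\widetilde{\overline{\C}}$ has generator matrix $\left[\begin{smallmatrix} \bone_{n+1} \\ G & \bv \end{smallmatrix}\right]$. The two row spaces coincide because the difference of the two candidate top rows, $(\bone,1) - \bone_{n+1} = \bzero$, is zero — that is, $(\bone,1)$ \emph{is} $\bone_{n+1}$ precisely when $-n = 1$ in $\gf(q)$, which is our hypothesis. So both generator matrices are literally equal (up to possibly deleting a redundant row when $\bone \in \C$, a case I would handle separately but trivially), and hence $\overline{\widetilde{\C}} = \widetilde{\overline{\C}}$.

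The main obstacle, and the only place care is genuinely needed, is the bookkeeping around whether $\bone \in \C$ (so that $\widetilde{\C} = \C$) versus $\bone \notin \C$ (so $\dim \widetilde{\C} = \dim \C + 1$), and the parallel question of whether the extended all-one vector already lies in $\overline{\C}$. These must be shown to be the same case: $\bone \in \C$ iff $(\bone, 1) \in \overline{\C}$, which holds because $(\bone,1)$ has coordinate sum $n + 1 \equiv 0 \pmod p$, so the extension constraint is automatically satisfied and membership in $\overline{\C}$ is equivalent to $\bone \in \C$. Once this equivalence is in hand, the generator-matrix computation above goes through uniformly in both cases, and the proof is complete. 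I would also remark that the congruence $n \equiv -1 \pmod p$ is exactly the condition making the construction self-consistent: without it, the extended all-one vector has a nonzero last coordinate different from $1$ and need not be proportional to $\bone_{n+1}$, so the two operations genuinely fail to commute in general.
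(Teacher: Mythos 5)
Your proposal is correct and takes essentially the same route as the paper: both arguments come down to the observation that, since $n+1\equiv 0\pmod p$, the extension of the length-$n$ all-one vector $\bone$ is the length-$(n+1)$ all-one vector (equivalently, the two candidate last coordinates differ by $(n+1)a=0$), the paper phrasing this via an explicit parametrization of the codewords of $\widetilde{\overline{\C}}$ and $\overline{\widetilde{\C}}$ and you via generator matrices. The only blemish is your early remark that $(\bone,1)$ is \emph{not} the all-one vector of length $n+1$, which is contradicted, and corrected, by your own later (and correct) observation that under the hypothesis it is exactly that vector.
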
  

\begin{proof}
Let $\bc=(c_0, c_1, \ldots, c_{n-1}) \in \C$. By definition, $\overline{\C}=\left\{\left(\bc, -\bc \bone^T  \right): \bc \in \C\right\}$, and 
\begin{eqnarray}\label{NEQ1::1}
\widetilde{\overline{\C}}=\left\{\left(a\bone + \bc, a-\bc \bone^T \right): a\in \gf(q), \, \bc \in \C\right\},   
\end{eqnarray} 
where $\bone$ is the all-one vector of length $n$. Similarly, we have 
\begin{eqnarray*}
\widetilde{\C} = \{a\bone+\bc: a \in \gf(q), \, \bc \in \C\}
\end{eqnarray*} 
and 
\begin{eqnarray*}
\overline{\widetilde{\C}} =\left\{\left(a\bone+\bc, -an-\bc \bone^T \right): a \in \gf(q), \, \bc \in \C\right\}.
\end{eqnarray*} 
Note that 
$$ 
\left( a-\bc \bone^T \right)-\left(-an-\bc \bone^T \right)=(n+1)a=0
$$
for each $a \in \gf(q)$. 
The desired result follows. 
\end{proof}

The following lemma will also be needed later. 

\begin{lemma} \label{lem-mixedoperations} 
Let $\C$ be a linear code of length $n$ over $\gf(q)$ and the all-one vector $\bone \in \C$. Then the following hold.
\begin{enumerate}
\item $(\overline{\C})^\perp=\widetilde{\overline{\C^\perp}}$. 
\item If $\bone\notin \C^\perp$ and $\C$ is an MDS code, then $d((\overline{\C})^\perp)=d( \widetilde{\C^\perp})+1$.	
\end{enumerate}
\end{lemma}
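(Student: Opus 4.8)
The plan is to prove part (1) first and then derive part (2) from it together with the hypothesis that $\C$ is MDS. For part (1), I would recall the standard identity $(\overline{\C})^\perp = \widetilde{\overline{\C^\perp}}$, which holds for any linear code $\C$ of length $n$; a clean way to see it is via the check matrix description in Lemma~\ref{lem-extendedCodeParam}. Since $\bone \in \C$, a generator matrix $G$ of $\C$ can be taken with $\bone$ as one of its rows, and then a parity-check matrix $\mathbf{H}$ of $\C$ is a generator matrix of $\C^\perp$. The extended code $\overline{\C}$ has parity-check matrix $\overline{\mathbf{H}}$ as displayed in Lemma~\ref{lem-extendedCodeParam}, i.e.\ the rows of $\mathbf{H}$ padded with a zero column together with the all-one row $(\bone, 1)$ of length $n+1$. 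Hence $(\overline{\C})^\perp$ is generated by $\overline{\mathbf{H}}$; the rows coming from $\mathbf{H}$ generate $\overline{\C^\perp}$ (each codeword of $\C^\perp$ extended by its zero overall parity check, which is automatic because $\bone \in \C$ forces every codeword of $\C^\perp$ to have coordinate-sum zero), and adjoining the all-one row of length $n+1$ is exactly the augmentation operation. So $(\overline{\C})^\perp = \widetilde{\overline{\C^\perp}}$.

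For part (2), I would use part (1) to write $d((\overline{\C})^\perp) = d(\widetilde{\overline{\C^\perp}})$ and then compare $\widetilde{\overline{\C^\perp}}$ with $\overline{\widetilde{\C^\perp}}$. The hypothesis $\bone \in \C$, equivalently that $\C^\perp$ has all codewords of coordinate-sum zero, together with $\C$ being MDS of parameters $[n,k,n-k+1]$, means $\C^\perp$ is MDS with parameters $[n, n-k, k+1]$ and $\bone \notin \C^\perp$; thus $\widetilde{\C^\perp}$ has parameters $[n, n-k+1, \,d(\widetilde{\C^\perp})]$ with $d(\widetilde{\C^\perp}) \le k$ (adding $\bone$ to an MDS code lowers the dimension count's Singleton slack by one, and in fact $\widetilde{\C^\perp}$ is NMDS or MDS). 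The key point is that for a code $\cD$ whose dual contains $\bone$ — here $\cD = \widetilde{\C^\perp}$, and $\cD^\perp = (\widetilde{\C^\perp})^\perp \subseteq \C$ does contain $\bone$ since $\bone \in \C$ and $\bone$ is orthogonal to $\widetilde{\C^\perp}$ precisely when $n+1 \equiv \bzero$... — wait, more carefully: $\bone \in (\widetilde{\C^\perp})^\perp$ iff $\bone$ is orthogonal to $\C^\perp$ and to $\bone$ itself, i.e.\ iff $n \equiv 0$; so in general $\bone \notin (\widetilde{\C^\perp})^\perp$. I would therefore instead argue directly: extending $\widetilde{\C^\perp}$ adds a new coordinate, and because $\bone \notin \widetilde{\C^\perp}$... no — the cleanest route is that $\widetilde{\overline{\C^\perp}}$ and $\overline{\widetilde{\C^\perp}}$ coincide when $n \equiv -1 \pmod p$ by Lemma~\ref{lem-augment-extend-code}, but here $n$ need not satisfy that. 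So I would argue as follows.

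Since $\C$ is MDS and $\bone \in \C$ but $\bone \notin \C^\perp$, the augmented code $\widetilde{\C^\perp}$ strictly contains $\C^\perp$, so $\dim(\widetilde{\C^\perp}) = n-k+1$, and by the Singleton bound $d(\widetilde{\C^\perp}) \le k$. On the other hand, extending a code never decreases the minimum distance, so from part (1), $d((\overline{\C})^\perp) = d(\widetilde{\overline{\C^\perp}}) \ge d(\overline{\C^\perp}) \ge d(\C^\perp) = k+1$. Combined with Lemma~\ref{lem-extendedCodeParam} applied to $\widetilde{\C^\perp}$, which gives $d(\overline{\widetilde{\C^\perp}}) \in \{d(\widetilde{\C^\perp}), d(\widetilde{\C^\perp})+1\}$, and with the observation that $\widetilde{\overline{\C^\perp}} = \overline{\widetilde{\C^\perp}}$ here because every codeword of $\widetilde{\C^\perp}$ that lies in $\C^\perp$ already has overall parity zero while $\bone$ gets overall parity $-n$, so the augmentation-by-$\bone$ and extension operations are interchangeable on this code — I would verify this interchange directly as in the proof of Lemma~\ref{lem-augment-extend-code}, noting it only requires that $\overline{\C^\perp}$ already contains the extension of every codeword, which holds. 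Putting these together: $k+1 \le d((\overline{\C})^\perp) = d(\overline{\widetilde{\C^\perp}}) \le d(\widetilde{\C^\perp}) + 1 \le k+1$, forcing $d(\widetilde{\C^\perp}) = k$ and $d((\overline{\C})^\perp) = d(\widetilde{\C^\perp}) + 1$.

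The main obstacle I anticipate is the bookkeeping around the interchange $\widetilde{\overline{\C^\perp}} = \overline{\widetilde{\C^\perp}}$: unlike Lemma~\ref{lem-augment-extend-code}, we are not assuming $n \equiv -1 \pmod p$, so the two operations need not commute in general, and one must exploit specifically that $\bone \in \C$ (equivalently $\C^\perp \subseteq \bone^\perp$) to make the argument go through — or, alternatively, sidestep the interchange entirely by working with generator matrices: take a generator matrix of $\widetilde{\overline{\C^\perp}}$ consisting of the rows of a generator matrix of $\C^\perp$ extended by $0$, plus the row $(\bone_n, 0)$ (augmentation in the first $n$ coordinates) — and then check that $d$ of the resulting code is exactly one more than $d(\widetilde{\C^\perp})$ using that no codeword of $\widetilde{\C^\perp}$ of weight $k$ extends to weight $k$ (its last coordinate is nonzero, since a weight-$k$ minimum-weight word of the $[n,n-k+1,k]$ code $\widetilde{\C^\perp}$ cannot lie in $\C^\perp$, which has minimum distance $k+1$, hence is $a\bone + \bx$ with $a \ne 0$, and its overall coordinate sum over the $n+1$ coordinates is $na + 0 = -a \cdot 1 \ne 0$ when... ) — this sign/characteristic check is where care is needed, and I would present it carefully rather than by analogy.
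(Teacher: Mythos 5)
Your part (1) is correct and is essentially the paper's argument: since $\bone\in\C$ every codeword of $\C^\perp$ has coordinate sum zero, so $\overline{\C^\perp}$ is generated by $[\mathbf{H}\ \bzero^T]$, and adjoining the row $(\bone,1)$ of $\overline{\mathbf{H}}$ is exactly the augmentation.

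Your part (2), however, has a genuine gap, in two places. First, the inequality $d(\widetilde{\overline{\C^\perp}})\geq d(\overline{\C^\perp})$ is backwards: augmentation produces a \emph{supercode}, so it can only lower (or preserve) the minimum distance. Consequently your chain $k+1\leq d((\overline{\C})^\perp)\leq d(\widetilde{\C^\perp})+1\leq k+1$, and the forced conclusion $d(\widetilde{\C^\perp})=k$, are false in general. Indeed, in the paper's own application (Theorem \ref{thm-fund21jproj} with Lemma \ref{nlem:10}), $\widetilde{\C_u^\perp}$ is the narrow-sense BCH code $\C_{(q,q+1,u,1)}$, whose minimum distance can be as small as $u$ while $k=2u-1$; so $d((\overline{\C})^\perp)=u+1$ can be far below $k+1=2u$. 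Second, the interchange $\widetilde{\overline{\C^\perp}}=\overline{\widetilde{\C^\perp}}$ that you lean on does not hold here: since $\bone\notin\C^\perp$, a codeword of $\widetilde{\overline{\C^\perp}}$ is $(a\bone+\bc,\,a)$ while the corresponding codeword of $\overline{\widetilde{\C^\perp}}$ is $(a\bone+\bc,\,-na)$, and these codes coincide only when $n\equiv-1\pmod p$ (that is precisely Lemma \ref{lem-augment-extend-code}); in the intended application $n=q+1\equiv 1\pmod p$, so your bound $d(\overline{\widetilde{\C^\perp}})\leq d(\widetilde{\C^\perp})+1$ is applied to the wrong code (e.g.\ when $n\equiv 0\pmod p$ the last coordinate of $\overline{\widetilde{\C^\perp}}$ vanishes identically and its minimum distance really is $d(\widetilde{\C^\perp})$, not $d(\widetilde{\C^\perp})+1$). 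Your closing generator-matrix sketch also augments by $(\bone_n,0)$ instead of the correct $(\bone_n,1)$ and again presupposes $d(\widetilde{\C^\perp})=k$.

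The repair is the paper's direct weight count on $(\overline{\C})^\perp=\{(a\bone+\bc,\,a): a\in\gf(q),\ \bc\in\C^\perp\}$, which follows from part (1) and $\bone\in\C$. Since $\bone\notin\C^\perp$ and $\C^\perp$ is $[n,n-k,k+1]$ MDS, $\widetilde{\C^\perp}$ has dimension $n-k+1$, so by Singleton $d(\widetilde{\C^\perp})\leq k<d(\C^\perp)$; hence any minimum-weight word of $\widetilde{\C^\perp}$ has $a\neq 0$, and appending $a$ gives a word of weight $d(\widetilde{\C^\perp})+1$ in $(\overline{\C})^\perp$, proving the upper bound. For the lower bound, a nonzero $(a\bone+\bc,a)$ with $a=0$ has weight $\geq d(\C^\perp)\geq d(\widetilde{\C^\perp})+1$, and with $a\neq 0$ has weight $1+\wt(a\bone+\bc)\geq 1+d(\widetilde{\C^\perp})$ because $a\bone+\bc\neq\bzero$ (again $\bone\notin\C^\perp$). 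This is where the MDS hypothesis actually enters, and it avoids both the augmentation inequality and the interchange entirely.
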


\begin{proof}
1) Let $\mathbf{H}$ denote the check matrix of $\C$. Since $\bone \in \C$, the sum of all the coordinates in each codewords of $\C^\perp$ is zero. As a result, we conclude that the code $\overline{\C^\perp}$ has generator matrix $\begin{bmatrix}
\mathbf{H} & \bzero^T	
\end{bmatrix}$, where $\bzero$ is the all-zero vector. The desired conclusion follows from Lemma \ref{lem-extendedCodeParam}.  

2) By Result 1 and Equation (\ref{NEQ1::1}), we deduce that
\begin{eqnarray}\label{NEQ1::2}
(\overline{\C})^\perp=\left\{\left(a\bone + \bc, a \right): a\in \gf(q), \, \bc \in \C^\perp \right\}. 
\end{eqnarray}
Suppose $\C$ is an $[n, k, n-k+1]$ MDS code over $\gf(q)$, then $\C^\perp$ is an $[n, n-k, k+1]$ MDS code over $\gf(q)$. Since $\bone \notin \C^\perp$, we deduce that $\widetilde{\C^\perp}$ is an $[n, n-k+1]$ code over $\gf(q)$. By the Singleton bound, $d(\widetilde{\C^\perp})\leq k=d(\C^\perp)-1$. It follows that there exist $a\in \gf(q)\backslash \{0\}$ and $\bc\in \C^\perp$ such that $\wt(a\bone+\bc)=d(\widetilde{\C^\perp})$. Note that $\left(a\bone + \bc, a \right)\in (\overline{\C})^\perp$, we get that  
\begin{eqnarray}\label{NEQ1::3}
	d((\overline{\C})^\perp )\leq d(\widetilde{\C^\perp})+1.
\end{eqnarray}
On the other hand, let $ \overline{\bc}\in (\overline{\C})^\perp$ and $\overline{\bc}\neq \bzero$. By Equation (\ref{NEQ1::2}), there exist $\bc \in \C^\perp$ and $a\in \gf(q)$ such that $\overline{\bc}=\left(a\bone + \bc, a \right)$. If $a=0$, it follows from $\overline{\bc}\neq \bzero$ that $\bc\neq \bzero$. Consequently, 
\begin{align*}
	\wt(\overline{\bc})&=\wt(\bc)\geq d(\C^\perp) \geq d( \widetilde{\C^{\perp}})+1.
\end{align*}
If $a\neq 0$, $\wt(\overline{\bc})=1+\wt(a\bone +\bc)$. It follows from $\bone \notin \C^\perp$ that $a\bone+\bc\neq \bzero$. Consequently, 
$$\wt(\overline{\bc})\geq d( \widetilde{\C^{\perp}})+1.$$
In summary, 
\begin{eqnarray}\label{NEQ1::4}
	d((\overline{\C})^\perp )\geq d(\widetilde{\C^\perp})+1.
\end{eqnarray}
Combining Inequalities (\ref{NEQ1::3}) and (\ref{NEQ1::4}), we deduce that $d((\overline{\C})^\perp )= d(\widetilde{\C^\perp})+1$. This completes the proof. 
\end{proof}

When $\C$ is a cyclic code of length $n$ over $\gf(q)$, the augmented code of $\C$ has the following result.

\begin{lemma}\label{nlem:1}
Let $\C$ be the cyclic code of length $n$ over $\gf(q)$ with generator polynomial $g(x)$. Then $\widetilde{\C}$ is the cyclic code of length $n$ over $\gf(q)$ with generator polynomial $\gcd(\frac{x^n-1}{x-1}, g(x))$.
\end{lemma}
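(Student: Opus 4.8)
The plan is to translate everything into the language of ideals. Identify $\gf(q)^n$ with $R := \gf(q)[x]/(x^n-1)$ by sending $(c_0,\dots,c_{n-1})$ to the class of $c_0+c_1x+\cdots+c_{n-1}x^{n-1}$; under this identification cyclic codes of length $n$ over $\gf(q)$ are exactly the ideals of $R$, the code $\C$ is the principal ideal $(g(x))$ with $g(x)$ a monic divisor of $x^n-1$, and the all-one vector $\bone$ corresponds to $h(x):=1+x+\cdots+x^{n-1}=\frac{x^n-1}{x-1}$, again a monic divisor of $x^n-1$.

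The first step is the observation that $x\,h(x)\equiv h(x)\pmod{x^n-1}$, i.e.\ $\bone$ is fixed by the cyclic shift. Hence $a(x)h(x)\equiv a(1)h(x)\pmod{x^n-1}$ for every $a(x)\in\gf(q)[x]$, so the principal ideal $(h(x))$ of $R$ is just the one-dimensional subspace $\gf(q)\,\bone$, which is precisely the repetition code $\langle\bone\rangle$. Consequently the augmented code $\widetilde{\C}=\mathrm{span}_{\gf(q)}(\C\cup\{\bone\})$ coincides with the sum of ideals $(g(x))+(h(x))$; in particular $x\cdot\widetilde{\C}=x\,(g(x))+\gf(q)\,(x\bone)=(g(x))+\gf(q)\,\bone=\widetilde{\C}$, so $\widetilde{\C}$ is itself an ideal of $R$ and therefore a cyclic code.

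The second step is to identify its generator polynomial. Since $\gf(q)[x]$ is a principal ideal domain, B\'ezout gives $u(x),v(x)$ with $u(x)g(x)+v(x)h(x)=\gcd(g(x),h(x))$, so $\gcd(g(x),h(x))\in(g(x))+(h(x))$; conversely $\gcd(g(x),h(x))$ divides both $g(x)$ and $h(x)$, whence $(g(x))+(h(x))\subseteq(\gcd(g(x),h(x)))$ in $R$. Therefore $\widetilde{\C}=(\gcd(g(x),h(x)))$, and as $\gcd(g(x),h(x))$ divides $g(x)$ and hence $x^n-1$, this exhibits $\widetilde{\C}$ as the cyclic code of length $n$ over $\gf(q)$ with generator polynomial $\gcd\!\left(\frac{x^n-1}{x-1},\,g(x)\right)$. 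Alternatively, one may simply quote the standard fact that the sum of two cyclic codes of the same length is cyclic with generator polynomial the gcd of the two generator polynomials, applied to $\C$ and $\langle\bone\rangle$.

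No step is genuinely difficult; the only point deserving care is the passage from ``$\gf(q)$-span of $\C$ and $\bone$'' to ``$R$-ideal generated by $g(x)$ and $h(x)$'', which works precisely because $\bone$ is shift-invariant so that $(h(x))$ is already one-dimensional. It is also worth noting the degenerate case $\bone\in\C$ (equivalently $g(x)\mid h(x)$), where $\gcd(g(x),h(x))=g(x)$ and the statement correctly reduces to $\widetilde{\C}=\C$, consistent with the remark preceding the lemma.
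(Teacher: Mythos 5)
Your proof is correct and takes essentially the same route as the paper: the paper writes $\widetilde{\C}=\C'+\C$, where $\C'$ is the cyclic code with generator polynomial $\frac{x^n-1}{x-1}$ (the repetition code spanned by $\bone$), and then invokes the standard fact (Huffman--Pless, Theorem 4.3.7) that the sum of two cyclic codes is cyclic with generator polynomial the gcd of the two generator polynomials. The only difference is that you prove that standard fact yourself via the ideal-theoretic B\'ezout argument instead of citing it, which is a harmless (and self-contained) elaboration.
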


\begin{proof}
	Let $\C'$ be the cyclic code of length $n$ over $\gf(q)$ with generator polynomial $\frac{x^n-1}{x-1}$, then $$\widetilde{\C}=\C'+\C=\{\bc_1+\bc_2: \bc_1\in \C', \bc_2\in \C \}.$$ The desired result follows from \cite[Theorem 4.3.7]{HP03}. 
\end{proof}

\subsection{Cyclic codes and BCH codes over finite fields} 

An $[n, k, d]$ code $\C$ over $\gf(q)$ is called {\em cyclic} if $(c_0,c_1, \ldots, c_{n-1}) \in \C$ implies 
$$(c_{n-1}, c_0, c_1,\ldots, c_{n-2}) \in \C.$$  
We identify any vector $(c_0,c_1, \ldots, c_{n-1}) \in \gf(q)^n$ with the polynomial 
$$c(x)=\sum_{i=0}^{n-1} c_ix^i \in \gf(q)[x]/(x^n-1).$$  
Then a code $\C$ of length $n$ over $\gf(q)$ corresponds to a subset $\C(x)$ of the ring $\gf(q)[x]/(x^n-1)$, where 
$$ 
\C(x):=\left\{\sum_{i=0}^{n-1} c_ix^i : \bc=(c_0,c_1, \ldots, c_{n-1}) \in \C\right\}. 
$$
A linear code $\C$ is cyclic if and only if $\C(x)$ is an ideal of the ring $\gf(q)[x]/(x^n-1)$. 

Note that every ideal of $\gf(q)[x]/(x^n-1)$ must be principal. Let $\C=( g(x) )$ be a cyclic code of length $n$ over $\gf(q)$, where $g(x)$ is monic and has the smallest degree among all the generators of $\C$. Then $g(x)$ is unique and called the {\em generator polynomial}, and $h(x)=(x^n-1)/g(x)$ is referred to as the {\em check polynomial} of $\C$. 

Let $n$ be a positive integer with $\gcd(n, q)=1$. Let $\Z_n$ denote the ring of integers modulo $n$. For any integer $s$ with $0 \leq s <n$, the \emph{$q$-cyclotomic coset of $s$ modulo $n$\index{$q$-cyclotomic coset modulo $n$}} is defined by 
$$ 
C_s^{(q,n)}=\{s, s q, s q^2, \cdots, sq^{\ell_s-1}\} \bmod n \subseteq \Z_n,  
$$
where $\ell_s$ is the smallest positive integer such that $s \equiv s q^{\ell_s} \pmod{n}$, and is the size of the $q$-cyclotomic coset. The smallest integer in $C_s^{(q, n)}$ is called the \emph{coset leader\index{coset leader}} of $C_s^{(q, n)}$. Let $\Gamma_{(n, q)}$ be the set of all the coset leaders. It is easily seen that $C_s^{(q, n)} \cap C_t^{(q, n)} = \emptyset$ for any two distinct elements $s$ and $t$ in  $\Gamma_{(n, q)}$, and  
\begin{eqnarray*}\label{eqn-cosetPP}
\bigcup_{s \in  \Gamma_{(n,q)} } C_s^{(q, n)} = \Z_n. 
\end{eqnarray*}
Hence, the distinct $q$-cyclotomic cosets modulo $n$ partition $\Z_n$. 

Let $m=\ord_{n}(q)$ be the order of $q$ modulo $n$, and let $\alpha$ be a generator of the group $\gf(q^m)^*$. Put $\beta=\alpha^{(q^m-1)/n}$, then $\beta$ must be a primitive $n$-th root of unity in $\gf(q^m)$. The minimal polynomial $\M_{\beta^s}(x)$ of $\beta^s$ over $\gf(q)$ is the monic polynomial of the smallest degree over $\gf(q)$ with $\beta^s$ as a root and is given by 
\begin{eqnarray*}
\M_{\beta^s}(x)=\prod_{i \in C_s^{(q, n)}} (x-\beta^i) \in \gf(q)[x], 
\end{eqnarray*} 
which is irreducible over $\gf(q)$. 

Let $\delta$ be an integer with $2 \leq \delta \leq n$ and let $h$ be an integer. A \emph{BCH code\index{BCH codes}} over $\gf(q)$ with length $n$ and \emph{designed distance} $\delta$, denoted by $\C_{(q,n,\delta,h)}$, is a cyclic code with generator polynomial 
\begin{eqnarray}\label{eqn-BCHdefiningSet}
\lcm(\M_{\beta^h}(x), \M_{\beta^{h+1}}(x), \cdots, \M_{\beta^{h+\delta-2}}(x)).  
\end{eqnarray}
If $h=1$, the code $\C_{(q,n,\delta,h)}$ with the generator polynomial in (\ref{eqn-BCHdefiningSet}) is called a \emph{narrow-sense\index{narrow sense}} BCH code. If $n=q^m-1$, then $\C_{(q,n,\delta,h)}$ is referred to as a \emph{primitive\index{primitive BCH}} BCH code. 

BCH codes are a subclass of cyclic codes with interesting properties and applications. In many cases BCH codes are the best linear codes. For example, among all binary cyclic codes of odd lengths $n$ with $n \leq 125$ the best cyclic code is always a BCH code except for two special cases \cite{Dingbook15}. Reed-Solomon codes are also BCH codes and are widely used in communication devices and consumer electronics. In the past decade, a lot of progress on the study of BCH codes has been made (see, for example, \cite{LWL19,LiSIAM,LLFLR,SYW,YLLY,ZSH21}).  

A cyclic code $\C$ is said to be reversible if $\C \cap \C^\perp = \{\bzero\}$. Such code is said to be linear complementary dual (LCD) code. The family of MDS codes presented in Section \ref{sec-LCDMDScycliccode} are cyclic BCH codes and reversible.

\subsection{The weight distributions of MDS codes and NMDS codes} 

The weight distribution of MDS codes is known and given by the following lemma \cite[p. 321]{MS77}. We will use it later. 

\begin{lemma}\label{lem-sdjoin1}
Let $\C$ be an $[n, k, d]$ code over $\gf(q)$ with $d=n-k+1$, and let the weight enumerator of $\C$ be 
$1+\sum_{i=d}^{n} A_iz^i$. Then 
$$ 
A_i =\binom{n}{i} (q-1) \sum_{j=0}^{i-d} (-1)^j  \binom{i-1}{j} q^{i-j-d}  
$$
for all $d \leq i \leq n$.
\end{lemma}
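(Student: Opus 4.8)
The plan is to prove the weight distribution formula for MDS codes by a standard combinatorial argument based on the MDS property of the dual code. First I would recall the key structural fact: a code $\C$ with parameters $[n, k, n-k+1]$ is MDS if and only if every $k$ columns of a generator matrix are linearly independent, equivalently every set of $n-k$ coordinates is an information set for $\C^\perp$, and every set of $k$ coordinates supports a codeword of $\C$ of that exact weight pattern in a controlled way. Concretely, the crucial lemma is that for any subset $S \subseteq [n]$ with $|S| = n - i$, the number of codewords of $\C$ whose support is contained in $[n] \setminus S$ (i.e.\ that vanish on $S$) is exactly $q^{k - (n-i)} = q^{i - d + 1}$ whenever $n - i \le k$, because the restriction of $\C$ to any $n - i \le k$ coordinates is all of $\gf(q)^{n-i}$ — this is again the MDS property (the punctured code is MDS of dimension $\min\{k, i\}$). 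So the count of codewords vanishing on a fixed set of size $n-i$ equals $q^{i-d+1}$ for $i \ge d - 1$, and this is the engine of the whole computation.

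Next I would set up an inclusion–exclusion. Let $A_i$ be the number of codewords of weight exactly $i$. A codeword of weight $\le i$ is one that vanishes on some set of $n - i$ coordinates. For a fixed weight-$i$ codeword, its support has size $i$, so it vanishes on exactly $\binom{n-i+j}{j}$ \dots actually the cleaner route is: fix a coordinate subset $T$ of size $i$ and count $N_i$, the number of codewords with support contained in $T$ and which are nonzero; then relate $N_i$ to the $A_j$ for $j \le i$ and invert. Precisely, the number of codewords supported on a fixed $i$-set $T$ is $q^{\dim \C_T} - 1$ where $\C_T$ is the shortened/punctured code; by the MDS property this dimension is $i - d + 1$ when $i \ge d$, so there are $q^{i-d+1} - 1$ nonzero such codewords. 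On the other hand, summing over all $i$-sets $T$ and grouping by the actual weight $j \le i$ of the codeword gives
\begin{equation*}
\binom{n}{i}\left(q^{i-d+1}-1\right) = \sum_{j=d}^{i} \binom{n-j}{i-j} A_j .
\end{equation*}
This is a triangular linear system in the $A_j$ with the $A_j$ determined recursively starting from $A_d = \binom{n}{d}(q-1)$.

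Then I would solve the triangular system explicitly. The standard technique is Möbius inversion on the subset lattice, or equivalently using the binomial inversion identity: from $\binom{n}{i}(q^{i-d+1}-1) = \sum_{j=d}^{i}\binom{n-j}{i-j}A_j$ one isolates $A_i$ and expands. Writing $\binom{n}{i}\binom{?}{?}$ identities and reindexing, the inversion yields
\begin{equation*}
A_i = \binom{n}{i} \sum_{j=0}^{i-d} (-1)^j \binom{i}{j}\left(q^{i-j-d+1}-1\right),
\end{equation*}
and then a short manipulation — using $\sum_{j=0}^{i-d}(-1)^j\binom{i}{j}$ telescoping against the $-1$ terms, together with the identity $\binom{i}{j} = \binom{i-1}{j} + \binom{i-1}{j-1}$ — collapses this into the claimed closed form $A_i = \binom{n}{i}(q-1)\sum_{j=0}^{i-d}(-1)^j\binom{i-1}{j}q^{i-j-d}$. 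I expect the main obstacle to be precisely this last algebraic simplification: massaging the alternating binomial sum from the ``raw'' inverted form into the stated one requires choosing the right binomial identity and bookkeeping the index shifts carefully, and it is easy to be off by a term at the boundary $j = i - d$. Everything before that (the MDS punctured-code dimension count and setting up the triangular system) is routine; the combinatorial identity crunch at the end is where care is needed, though it is still just elementary manipulation and can be cross-checked against the known value $A_d = \binom{n}{d}(q-1)$ and the total $\sum_i A_i = q^k$.
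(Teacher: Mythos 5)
Your proposal is correct, and there is no proof in the paper to compare it against: the lemma is quoted as a known result with a citation to \cite{MS77}, and your route (any $n-i\le k$ coordinates of an MDS code are projected onto surjectively, so exactly $q^{i-d+1}$ codewords vanish on a fixed $(n-i)$-set; double counting gives $\binom{n}{i}\left(q^{i-d+1}-1\right)=\sum_{j=d}^{i}\binom{n-j}{i-j}A_j$; binomial inversion with $\binom{n}{j}\binom{n-j}{i-j}=\binom{n}{i}\binom{i}{j}$ then yields your intermediate formula) is precisely the standard textbook derivation behind that reference. The final simplification you flagged as the delicate step does go through cleanly: writing $\binom{i}{j}=\binom{i-1}{j}+\binom{i-1}{j-1}$, shifting the index in the second sum, and combining terms cancels the $-1$'s pairwise and leaves $(q-1)\sum_{j=0}^{i-d}(-1)^j\binom{i-1}{j}q^{i-j-d}$, including the boundary term $j=i-d$, consistent with the checks $A_d=\binom{n}{d}(q-1)$ and $\sum_i A_i=q^k$.
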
 

We have the following weight distribution formulas for NMDS codes. 

\begin{lemma}\cite{FaldumWillems97}\label{lem-DLwtd}
Let $\C$ be an $[n, k, n-k]$ NMDS code. Then the weight distributions of $\C$ and $\C^\perp$ are given by 
\begin{eqnarray}\label{eqn-DL281}
A_{n-k+s} = \binom{n}{k-s} \sum_{j=0}^{s-1} (-1)^j \binom{n-k+s}{j}(q^{s-j}-1) + 
             (-1)^s \binom{k}{s}A_{n-k}
\end{eqnarray} 
for $s \in \{1,2, \cdots, k\}$, and 
\begin{eqnarray}\label{eqn-DL282}
A_{k+s}^\perp = \binom{n}{k+s} \sum_{j=0}^{s-1} (-1)^j \binom{k+s}{j}(q^{s-j}-1) + (-1)^s \binom{n-k}{s}A_{k}^\perp 
\end{eqnarray} 
for $s \in \{1,2, \cdots, n-k\}$. 
\end{lemma}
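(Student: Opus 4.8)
The plan is to reduce the weight distribution of an NMDS code to a single free parameter by means of elementary shortening identities, and then to solve the resulting triangular linear system. First I would record the structural constraints coming from $d(\C)=n-k$ and $d(\C^\perp)=k$: one has $A_0=1$ and $A_w=0$ for $1\le w\le n-k-1$, and, since every nonzero codeword of $\C^\perp$ has weight at least $k$, for any coordinate set $L\subseteq[n]$ with $|L|=\ell\le k-1$ the projection of $\C$ onto the positions in $L$ is all of $\gf(q)^\ell$. Hence the shortened code $\C(L)=\{\bc\in\C:\support(\bc)\cap L=\emptyset\}$ has exactly $q^{k-\ell}$ elements for every such $L$. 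Double-counting the pairs $(\bc,L)$ with $\bc\in\C$, $|L|=\ell$ and $\support(\bc)\cap L=\emptyset$ then gives, for $0\le\ell\le k-1$,
\[
\binom{n}{\ell}q^{k-\ell}=\sum_{|L|=\ell}|\C(L)|=\sum_{\bc\in\C}\binom{n-\wt(\bc)}{\ell}=\binom{n}{\ell}+\sum_{s=0}^{k}A_{n-k+s}\binom{k-s}{\ell}.
\]

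Substituting $\ell=k-t$ for $1\le t\le k$ and using $\binom{k-s}{k-t}=0$ for $s>t$ together with $\binom{k-t}{k-t}=1$, this becomes the triangular system
\[
\sum_{s=0}^{t}A_{n-k+s}\binom{k-s}{k-t}=\binom{n}{k-t}\bigl(q^{t}-1\bigr),\qquad 1\le t\le k,
\]
which determines $A_{n-k+1},\dots,A_n$ recursively once $A_{n-k}$ is fixed. To finish I would verify that the right-hand side of \eqref{eqn-DL281} solves this system. Writing the claimed value as $P_t+(-1)^t\binom{k}{t}A_{n-k}$ with $P_t=\binom{n}{k-t}\sum_{j=0}^{t-1}(-1)^j\binom{n-k+t}{j}(q^{t-j}-1)$, the identity $\binom{k}{s}\binom{k-s}{k-t}=\binom{k}{k-t}\binom{t}{s}$ together with $\sum_{s=0}^{t}(-1)^s\binom{t}{s}=0$ for $t\ge1$ shows that the $A_{n-k}$-terms cancel in the $t$-th equation; it then remains to check $\sum_{s=0}^{t}P_s\binom{k-s}{k-t}=\binom{n}{k-t}(q^{t}-1)$. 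This last identity is exactly the assertion that $\{P_t\}$ is the weight distribution of an $[n,k,n-k+1]$ MDS code, i.e. that $P_t$ agrees with the value in Lemma~\ref{lem-sdjoin1}; it can be proved as a finite binomial identity by induction on $t$ using Pascal's rule, or deduced from the fact that Reed--Solomon codes realise Lemma~\ref{lem-sdjoin1} while satisfying the shortening identities, both sides being polynomials in $n$. Uniqueness of the solution of the triangular system then yields \eqref{eqn-DL281}.

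For the dual statement, note that $\C$ being NMDS forces $\C^\perp$ to be an $[n,n-k,k]$ code whose dual $\C$ has minimum distance $n-k$, so $\C^\perp$ is itself NMDS. Applying the formula \eqref{eqn-DL281} just established to $\C^\perp$, that is, replacing the pair $(k,A_{n-k})$ by $(n-k,A_k^\perp)$ and using $\binom{n}{n-k-s}=\binom{n}{k+s}$, immediately gives \eqref{eqn-DL282}. The only genuinely technical point is the binomial bookkeeping in the previous paragraph, namely identifying the particular solution $P_t$ with the MDS weight distribution and checking $\sum_{s=0}^{t}P_s\binom{k-s}{k-t}=\binom{n}{k-t}(q^{t}-1)$; the derivation of the shortening identities and the cancellation of the $A_{n-k}$-terms are routine.
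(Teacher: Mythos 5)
The paper does not prove this lemma; it is quoted from Faldum--Willems, so there is no in-paper argument to compare against. Your derivation is the standard one and its skeleton is sound: the surjectivity of the projection of $\C$ onto any $\ell \le k-1$ coordinates (equivalently, the vanishing of the shortened dual) is correctly justified from $d(\C^\perp)=k$, the double count giving
$\sum_{s=0}^{t} A_{n-k+s}\binom{k-s}{k-t}=\binom{n}{k-t}(q^{t}-1)$ is right, the system is indeed unitriangular, the cancellation of the $A_{n-k}$-terms via $\binom{k}{s}\binom{k-s}{k-t}=\binom{k}{t}\binom{t}{s}$ is correct, and the reduction of the dual statement to the primal one applied to $\C^\perp$ is clean.

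The one step you should not leave in its current form is the identity $\sum_{s=0}^{t}P_s\binom{k-s}{k-t}=\binom{n}{k-t}(q^{t}-1)$. Outsourcing it to "MDS codes satisfy the shortening identities and $P_t$ is their weight distribution" is circular in a mild but real way: Lemma \ref{lem-sdjoin1} only asserts the formula \emph{for codes that exist}, and for the $(n,k,q)$ of a given NMDS code there may be no $[n,k,n-k+1]$ MDS code at all --- the paper itself emphasises that NMDS codes can have length exceeding $q+1$. The interpolation patch ("both sides are polynomials in $n$") also needs more care than you give it, since for fixed $q$ and $k$ Reed--Solomon-type codes supply only the finitely many lengths $n\le q+1$, and one must check that this exceeds the degree $k-1$ of the polynomials involved (it need not when $k$ is close to $n$); varying $q$ as well can rescue this, but at that point the purely combinatorial route is shorter. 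The honest fix is the one you mention in passing: prove $\sum_{s=0}^{t}P_s\binom{k-s}{k-t}=\binom{n}{k-t}(q^{t}-1)$ directly as a binomial identity (it follows by absorbing $\binom{n}{k-s}\binom{k-s}{k-t}=\binom{n}{k-t}\binom{n-k+t}{t-s}$ and then applying Vandermonde-type cancellation to the inner alternating sum). Write that computation out and the proof is complete.
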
 

By definition, we have $\sum_{i=0}^n A_i=q^{k}$ and $\sum_{i=0}^n A_i^\perp=q^{n-k}$. The weight distribution of an NMDS code $\C$ or its dual cannot be determined by Equations (\ref{eqn-DL281}) and (\ref{eqn-DL282}). It was shown in \cite{DingTang19} that two $[n, k, n-k]$ NMDS codes over $\gf(q)$ could have different weight distributions. Thus, the weight distribution of an $[n, k, n-k]$ NMDS code over $\gf(q)$ depends on not only $n$, $k$ and $q$, but also some other parameters of the code. This is a major difference between MDS codes and NMDS codes. Actually, it could be very difficult to determine the weight distribution of an NMDS code. This difficulty will be justified in this paper later.

\section{The family of classical MDS cyclic codes}\label{sec-LCDMDScycliccode} 

Starting from now on, let $q=p^m>2$, where $p$ is a prime and $m$ is a positive integer. Let $\alpha$ be a primitive element of $\gf(q^2)$. Define $\beta=\alpha^{q-1}$, then $\beta$ is a primitive $(q+1)$-th root of unity in $\gf(q^2)$. Let $\m_{\beta^i}(x)$ denote the minimal polynomial of $\beta^i$ over $\gf(q)$. It is easily verified that
$$ x^{q+1}-1=\prod_{i=0}^{\lfloor\frac{q+1}2\rfloor}\m_{\beta^i}(x),$$
where $\m_{\beta^0}(x)=x-1$,
\begin{align*}
\m_{\beta^i}(x)&=(x-\beta^i)(x-\beta^{q+1-i})\\
&=(x-\beta^i)(x-\beta^{-i})	
\end{align*}
for all $1 \leq i \leq \frac{q}2$, and
\begin{align*}
\m_{\beta^{\frac{q+1}2}}(x)&=x-\beta^{\frac{q+1}2}=x+1	
\end{align*}
for $q$ being odd.

For each $u$ with $1 \leq u \leq \lfloor \frac{q+1}2 \rfloor$, define 
\begin{eqnarray*}
g_u(x)=\m_{\beta^u}(x)\m_{\beta^{u+1}}(x) \cdots \m_{\beta^{\lfloor \frac{q+1}2 \rfloor}}(x).  
\end{eqnarray*}
Let $\C_u$ be the cyclic code of length $q+1$ over $\gf(q)$ with generator polynomial $g_u(x)$. We have the following result. 

\begin{theorem}\label{thm-sdjoin1} \cite[Chapter 11]{MS77} \cite{DahlPed1992}
For each $u$ with $1 \leq u \leq \lfloor \frac{q+1}2 \rfloor$, $\C_u$ is a $[q+1, 2u-1, q-2u+3]$ MDS cyclic code and $(\C_u)^\perp$ is a $[q+1, q+2-2u, 2u]$ MDS cyclic code. Both codes are reversible. 
\end{theorem}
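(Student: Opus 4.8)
The plan is to first establish that $\C_u$ is a BCH code with a known designed distance, then apply the BCH bound together with the Singleton bound to pin down the minimum distance, and finally use the standard structure theory of cyclic codes to handle the dual and the reversibility. First I would observe that the defining set of $\C_u$, namely the set of exponents $i$ with $\beta^i$ a root of $g_u(x)$, is
$$
Z(\C_u) = \bigcup_{j=u}^{\lfloor (q+1)/2 \rfloor} C_j^{(q,\,q+1)}
        = \{\, i : u \le i \le q+1-u \,\},
$$
where the last equality comes from the explicit factorization of $x^{q+1}-1$ recorded just before the theorem: each $q$-cyclotomic coset modulo $q+1$ is $\{i, -i\} = \{i, q+1-i\}$, so taking the cosets of $u, u+1, \ldots, \lfloor (q+1)/2\rfloor$ sweeps out exactly the consecutive block from $u$ to $q+1-u$. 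Hence $\C_u$ contains $q+1-u-u+1 = q-2u+2$ consecutive roots $\beta^u, \beta^{u+1}, \ldots, \beta^{q+1-u}$, so by the BCH bound $d(\C_u) \ge q-2u+3$. The dimension is $\dim(\C_u) = (q+1) - \deg g_u(x) = (q+1) - |Z(\C_u)| = (q+1) - (q-2u+2) = 2u-1$. Then the Singleton bound gives $d(\C_u) \le (q+1)-(2u-1)+1 = q-2u+3$, so equality holds and $\C_u$ is a $[q+1,\,2u-1,\,q-2u+3]$ MDS code.

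Next I would treat the dual. By the standard description of the dual of a cyclic code, $(\C_u)^\perp$ is cyclic of length $q+1$ with defining set
$$
Z\big((\C_u)^\perp\big) = \{\, -i \bmod (q+1) : i \in [q+1] \setminus Z(\C_u) \,\}
= \{\, -i : i \notin Z(\C_u)\,\}.
$$
Since $[q+1]\setminus Z(\C_u) = \{0,1,\ldots,u-1\}\cup\{q+2-u,\ldots,q\}$, negating modulo $q+1$ maps this set to itself, so $Z((\C_u)^\perp) = \{0,1,\ldots,u-1\}\cup\{q+2-u,\ldots,q\}$, whose complement in $\Z_{q+1}$ is $\{u, u+1,\ldots,q+1-u\}$ — the same block of $q-2u+2$ consecutive integers. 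Thus $(\C_u)^\perp$ also has $q-2u+2$ consecutive roots (for instance $\beta^{u},\ldots,\beta^{q+1-u}$, or one can shift to a zero-based block), and has dimension $(q+1) - |Z((\C_u)^\perp)| = (q+1) - 2u$; hence $d((\C_u)^\perp) \ge q-2u+3$ from the BCH bound and $d((\C_u)^\perp) \le (q+1)-(q+2-2u)+1 = 2u$ from Singleton. These two inequalities together with the MDS property of $\C_u$ (which forces $(\C_u)^\perp$ to be MDS, with minimum distance $\dim(\C_u)+1 = 2u$) give that $(\C_u)^\perp$ is a $[q+1,\,q+2-2u,\,2u]$ MDS code; alternatively, one just invokes the general fact that the dual of an MDS code is MDS, which makes the $d((\C_u)^\perp)=2u$ immediate without re-deriving the BCH bound.

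Finally, for reversibility I would note that a cyclic code $\C$ of length $n$ with $\gcd(n,q)=1$ is reversible (here in the sense $\C \cap \C^\perp = \{\bzero\}$, i.e.\ LCD) if and only if its defining set $Z(\C)$ is closed under negation modulo $n$ — equivalently, whenever $\beta^i$ is a root of the generator polynomial, so is $\beta^{-i}$. For $\C_u$ this is exactly the symmetry $Z(\C_u) = \{i : u \le i \le q+1-u\}$ being invariant under $i \mapsto q+1-i$, which we already used; hence $\C_u$ is reversible, and since $Z((\C_u)^\perp)$ was also shown to be negation-invariant, $(\C_u)^\perp$ is reversible as well. (If one prefers the other common meaning of "reversible cyclic code" — that $c(x)\in\C$ implies the reciprocal $x^{n-1}c(1/x)\in\C$ — it is again equivalent to $-1$ fixing $Z(\C)$ setwise, so the same computation applies.) The one place that needs genuine care, and which I would state as a lemma up front, is the identification $\bigcup_{j=u}^{\lfloor(q+1)/2\rfloor} C_j^{(q,q+1)} = \{u,\ldots,q+1-u\}$, since it must be checked separately that the endpoint coset behaves correctly when $q$ is odd (the coset $C_{(q+1)/2}$ is the singleton $\{(q+1)/2\}$) versus when $q$ is even (where $\lfloor(q+1)/2\rfloor = q/2$ and $C_{q/2} = \{q/2,\, q/2+1\}$); both cases are handled by the explicit factorization of $x^{q+1}-1$ quoted in the text, so this is bookkeeping rather than a real obstacle.
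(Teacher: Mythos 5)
The paper itself gives no proof of this theorem (it is quoted from the MacWilliams--Sloane book and from Dahl--Pedersen), so your argument stands on its own. Its skeleton is the standard one and is sound: since $q\equiv -1 \pmod{q+1}$, each $q$-cyclotomic coset modulo $q+1$ is $\{i,-i\}$, so the defining set of $\C_u$ is $\bigcup_{j=u}^{\lfloor (q+1)/2\rfloor} C_j^{(q,q+1)}=\{u,\dots,q+1-u\}$ (with the endpoint bookkeeping for $q$ odd versus $q$ even exactly as you describe); the BCH bound and the Singleton bound then pin down $[q+1,2u-1,q-2u+3]$, and negation-invariance of the defining set gives the LCD/reversibility claim for both codes (your criterion ``LCD iff the defining set is negation-closed'' is correct here because $\gcd(q+1,q)=1$ makes $x^{q+1}-1$ squarefree; it deserves a one-line justification, e.g.\ $\C\cap\C^\perp$ is cyclic with generator $\lcm(g(x),h^*(x))$, which equals $x^{q+1}-1$ precisely when the defining set is closed under negation).

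The paragraph on the dual, however, is garbled and contains false intermediate claims. The defining set $Z((\C_u)^\perp)=-(\Z_{q+1}\setminus Z(\C_u))=\{0,1,\dots,u-1\}\cup\{q+2-u,\dots,q\}$ has $u+(u-1)=2u-1$ elements, not $2u$; hence $\dim((\C_u)^\perp)=(q+1)-(2u-1)=q+2-2u$, not $q+1-2u$ (the latter would violate $\dim\C_u+\dim(\C_u)^\perp=q+1$). Moreover, the zeros of $(\C_u)^\perp$ are $\beta^j$ for $j$ in that set, i.e.\ the wraparound run $q+2-u,\dots,q,0,1,\dots,u-1$ of $2u-1$ consecutive exponents; the elements $\beta^u,\dots,\beta^{q+1-u}$ you name are the nonzeros, not zeros. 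Consequently the BCH bound applied to the dual gives $d((\C_u)^\perp)\geq 2u$, which together with Singleton $d\leq 2u$ finishes the dual directly; your claimed bound $d((\C_u)^\perp)\geq q-2u+3$ is false in general (for $u<(q+3)/4$ it would exceed the Singleton bound $2u$). Your proof survives only because you also invoke the correct fact that the dual of an MDS code is MDS, which, combined with the correct dimension $q+2-2u$, yields $[q+1,q+2-2u,2u]$ immediately; delete the erroneous BCH computation for the dual or replace it by the wraparound run just described.
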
 

The codes $\C_u$ and $(\C_u)^\perp$ are called the classical MDS cyclic codes.  
Reed-Solomon codes over $\gf(q)$ are MDS codes over $\gf(q)$ with length $q-1$. Hence, the codes $\C_u$ and $(\C_u)^\perp$
in Theorem \ref{thm-sdjoin1} are not Reed-Solomon codes. The  
objective of this paper is to study the extended codes of some of the MDS codes $\C_u$.  

\section{The extended codes of the family of MDS cyclic codes}\label{sec-extcodes} 

The objective of this section is to study the extended codes $\overline{\C_u}$ of the family of MDS codes $\C_u$ introduced in Section \ref{sec-LCDMDScycliccode} and their duals. As will be seen later, we can settle the parameters of the extended code $\overline{\C_u}$ only for a few values of $u$. 

\subsection{Some general results of the extended code $\overline{\C_u}$ and its dual}

By Lemma \ref{lem-extendedCodeParam}, the extended code $\overline{\C_u}$ has length $q+2$, dimension $2u-1$ and 
$$ 
d(\overline{\C_u})= d(\C_u) \mbox{ or } d(\overline{\C_u})=d(\C_u)+1. 
$$ 
As will be seen later, both cases can happen. 

In the case that $d(\overline{\C_u})=d(\C_u)+1$, $\overline{\C_u}$ is an MDS code and the parameters and weight distributions of $\overline{\C_u}$ are easily obtained from Lemma \ref{lem-sdjoin1}. A trivial verification shows that if $u=1$, then $\overline{\C}_u$ is a $[q+2, 1, q+2]$ MDS code over $\gf(q)$. It will be proved later that $\overline{\C_2}$ and $\overline{\C_{q/2}}$ are MDS codes for even $q$. The classical MDS code conjecture leads to the conjecture that $\overline{\C_u}$ is an AMDS code for the other cases.  

In the case that $d(\overline{\C_u})=d(\C_u)$, $\overline{\C_u}$ is a $[q+2, 2u-1, q+3-2u]$ AMDS code over $\gf(q)$, but its weight distribution cannot be derived from Lemma \ref{lem-DLwtd}. As will be seen later, settling the weight distribution of  $\overline{\C_u}$ is very difficult in general. In this case, we have 
\begin{eqnarray*}
d((\overline{\C_u})^\perp) \leq 2u-1. 
\end{eqnarray*} 
If $d((\overline{\C_u})^\perp) =2u-1$, then $\overline{\C_u}$ is an NMDS code. We are much more interested in 
this case. As will be seen later, in some cases we indeed have $d((\overline{\C_u})^\perp) < 2u-1$. The most difficult 
problem in this project is to determine $d((\overline{\C_u})^\perp)$. The following theorem may be useful in determining $d((\overline{\C_u})^\perp)$. 

\begin{theorem}\label{thm-fund21jproj}
Let notation be the same as before. Let $2\leq u\leq \lfloor \frac{q+1}2 \rfloor$. Then 
\begin{eqnarray*}
d( (\overline{\C_u})^\perp ) = d( \C_{(q, q+1, u, 1)} ) +1. 
\end{eqnarray*}
\end{theorem}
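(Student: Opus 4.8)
The plan is to combine Lemma~\ref{lem-mixedoperations} with Lemma~\ref{nlem:1} to identify $(\overline{\C_u})^\perp$ with an augmented-and-extended BCH code, and then track how augmentation and extension change the minimum distance. First I would note that $\C_u$ is a cyclic code with defining set $\{u, u+1, \ldots, \lfloor\frac{q+1}2\rfloor\}$ together with the conjugates $\{q+1-u, \ldots\}$; since $0$ is \emph{not} in the defining set of $\C_u$ (its generator $g_u(x)$ does not have $x-1$ as a factor), the all-one vector $\bone$ lies in $\C_u$. This puts us in the situation of Lemma~\ref{lem-mixedoperations}: we have $(\overline{\C_u})^\perp = \widetilde{\overline{(\C_u)^\perp}}$, and by part~2, provided $\bone\notin(\C_u)^\perp$ and $\C_u$ is MDS (both true: $\C_u$ is MDS by Theorem~\ref{thm-sdjoin1}, and $\bone\in\C_u$ forces $\bone\notin(\C_u)^\perp$ since $\C_u\cap(\C_u)^\perp=\{\bzero\}$ by reversibility, as long as $\bone\neq\bzero$, i.e. $q+1\not\equiv 0$, which holds when $p\nmid q+1$; the characteristic-$p$ subtlety when $p\mid q+1$ needs a separate check), we get
\[
d((\overline{\C_u})^\perp) = d(\widetilde{(\C_u)^\perp}) + 1.
\]
So the theorem reduces to showing $d(\widetilde{(\C_u)^\perp}) = d(\C_{(q,q+1,u,1)})$.

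Next I would compute $\widetilde{(\C_u)^\perp}$ explicitly as a cyclic code. The dual $(\C_u)^\perp$ is cyclic of length $q+1$; its generator polynomial is $x^{\dim\C_u}\,\widetilde{g_u}(1/x)$ up to normalization, equivalently its defining set is the complement (in $\Z_{q+1}$) of $-1$ times the defining set of $\C_u$. Using the palindromic structure recorded in Section~\ref{sec-LCDMDScycliccode} — namely $\m_{\beta^i}(x)=(x-\beta^i)(x-\beta^{-i})$ — one finds that $(\C_u)^\perp$ has generator polynomial $\prod_{i=0}^{u-1}\m_{\beta^i}(x)$, i.e. its defining set is $\bigcup_{i=0}^{u-1}C_i^{(q,q+1)}$, which contains $0$. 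By Lemma~\ref{nlem:1}, $\widetilde{(\C_u)^\perp}$ is the cyclic code with generator polynomial $\gcd\!\big(\frac{x^{q+1}-1}{x-1},\ \prod_{i=0}^{u-1}\m_{\beta^i}(x)\big) = \prod_{i=1}^{u-1}\m_{\beta^i}(x)$, i.e. the defining set drops the coset of $0$ and becomes $\bigcup_{i=1}^{u-1}C_i^{(q,q+1)}$. But $\bigcup_{i=1}^{u-1}C_i^{(q,q+1)}$ is exactly the set $\{1,2,\ldots,u-1\}$ of consecutive integers together with their conjugates, which is precisely the defining set of the narrow-sense BCH code $\C_{(q,q+1,u,1)}$ (designed distance $u$, so zeros at $\beta^1,\ldots,\beta^{u-1}$). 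Hence $\widetilde{(\C_u)^\perp} = \C_{(q,q+1,u,1)}$ as cyclic codes, and in particular they have the same minimum distance. Substituting back gives $d((\overline{\C_u})^\perp) = d(\C_{(q,q+1,u,1)})+1$, as claimed.

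The main obstacle I anticipate is the bookkeeping around the all-one vector and the characteristic. Lemma~\ref{lem-mixedoperations} requires $\bone\in\C$ (here $\C=\C_u$), $\bone\notin\C^\perp$, and $\C$ MDS; I need to verify the first two from the defining-set description — $\bone\in\C_u \iff \beta^0$ is a zero of the \emph{check} polynomial $\iff 0$ is not in the defining set of $\C_u$, which holds since $g_u(x)=\prod_{i=u}^{\lfloor(q+1)/2\rfloor}\m_{\beta^i}(x)$ omits $\m_{\beta^0}(x)=x-1$ when $u\geq 1$ — fine. The more delicate point is whether $\bone$ is genuinely a nonzero vector and whether the reversibility/LCD property really gives $\bone\notin(\C_u)^\perp$; when $p\mid q+1$ one has $\bone\cdot\bone^T=q+1=0$ so $\bone$ is self-orthogonal, and the clean argument "$\bone\in\C_u$, $\C_u\cap(\C_u)^\perp=\{\bzero\}$, so $\bone\notin(\C_u)^\perp$" still works as long as $\bone\neq\bzero$, which is automatic. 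I'd also double-check the degenerate small cases ($u=\lfloor\frac{q+1}2\rfloor$, and the extra middle factor $x+1$ when $q$ is odd) to be sure the gcd computation in Lemma~\ref{nlem:1} comes out as stated, and confirm that $\C_{(q,q+1,u,1)}$ is indeed nontrivial for the range $2\le u\le\lfloor\frac{q+1}2\rfloor$ so that "$d$" is well-defined. None of these is deep, but getting the conventions on dual-code defining sets and narrow-sense BCH defining sets to line up exactly is where errors would creep in.
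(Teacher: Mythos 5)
Your proposal is correct and follows essentially the same route as the paper: it reduces the claim via Result 2 of Lemma~\ref{lem-mixedoperations} to $d\bigl(\widetilde{(\C_u)^\perp}\bigr)+1$ and then uses Lemma~\ref{nlem:1} (with $h_u(x)=(x-1)\m_{\beta}(x)\cdots\m_{\beta^{u-1}}(x)$) to identify $\widetilde{(\C_u)^\perp}$ with the narrow-sense BCH code $\C_{(q,q+1,u,1)}$. The only cosmetic differences are that the paper verifies $\bone\notin(\C_u)^\perp$ directly from the factor $x-1$ of $h_u(x)$ rather than through the LCD property, and your worry about the case $p\mid(q+1)$ is vacuous since $q+1\equiv 1\pmod{p}$.
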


\begin{proof}
Recall that $\C_u$ has generator polynomial 
$$ 
g_u(x)=\m_{\beta^u}(x) \m_{\beta^{u+1}}(x) \cdots \m_{\beta^{\lfloor \frac{q+1}2 \rfloor}}(x).  
$$ 
It is easily verified that $(\C_u)^\perp$ has generator polynomial 
$$ 
h_u(x)=(x-1)\m_{\beta}(x)\cdots \m_{\beta^{u-1}}(x).  
$$ 
 It is easily seen that $g_u(x)\mid (\frac{x^{q+1}-1}{x-1})$ and $h_u(x)\nmid (\frac{x^{q+1}-1}{x-1})$, then $\bone \in \C_u$ and $\bone \notin (\C_u)^\perp$. By Result 2 of Lemma \ref{lem-mixedoperations}, we deduce that $d((\overline{\C_u})^\perp )= d (\widetilde{\C_u^\perp})+1$. 
It is easily seen that 
\begin{eqnarray}\label{eqn-tracecudualaug}
\gcd\left( \frac{x^n-1}{x-1}, h_u(x)\right)=\m_{\beta}(x)\m_{\beta^2}(x)\cdots \m_{\beta^{u-1}}(x).	
\end{eqnarray}
It follows from Lemma \ref{nlem:1} that $\widetilde{\C_u^\perp}$ is the narrow-sense BCH code of length $q+1$ over $\gf(q)$ with designed distance $u$.  The desired result follows. 
\end{proof}

It follows from Equation (\ref{eqn-tracecudualaug}) that $\left(\widetilde{\C_u^\perp}\right)^\perp $ is the cyclic code of length $q+1$ over $\gf(q)$ with generator polynomial $(x-1) g_u(x)$, which is subcode of $\C_u$. Specifically, 
\begin{eqnarray}\label{eqn-21j313}
\C_u= \widetilde{\left(\widetilde{\C_u^\perp}\right)^\perp}. 
\end{eqnarray} 
To simplify notation, we will use $\C(u)$ to denote the code $\left(\widetilde{\C_u^\perp}\right)^\perp$. 
With this new notation, Equation (\ref{eqn-21j313}) becomes $\C_u=\widetilde{\C(u)}$. Note that $\C(u)\subseteq \C_u$, we get that $d(\C_u)\leq d(\C(u))$. Hence, $\C(u)$ is a $[q+1, 2u-2, d]$ code over $\gf(q)$, where $d\in \{q-2u+3, q-2u+4 \}$. Furthermore, we have the following result.

\begin{theorem}\label{NTHM29:1}
Let notation be the same as before. Let $2\leq u \leq \lfloor\frac{q+1}2\rfloor$	. Then $d(\overline{\C_u})=d(\C(u))$ and $A_{q-2u+3}(\overline{\C_u})=A_{q-2u+3}(\C(u))$.
\end{theorem}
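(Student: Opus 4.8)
The plan is to relate $\overline{\C_u}$ directly to $\C(u)$ via the extended/augmented code machinery already developed. The key observation is that $\C_u = \widetilde{\C(u)}$ (Equation~(\ref{eqn-21j313})) together with the hypothesis $n = q+1 \equiv -1 \pmod p$, which is exactly the congruence needed to invoke Lemma~\ref{lem-augment-extend-code}. Applying that lemma to the code $\C(u)$ gives $\overline{\widetilde{\C(u)}} = \widetilde{\overline{\C(u)}}$, i.e. $\overline{\C_u} = \widetilde{\overline{\C(u)}}$. So $\overline{\C_u}$ is obtained from $\overline{\C(u)}$ by augmenting with the all-one vector. Now $\C(u)$ is a $[q+1, 2u-2]$ code whose dual $\widetilde{\C_u^\perp}$ contains the all-one vector (it is a cyclic code whose generator polynomial divides $\frac{x^{q+1}-1}{x-1}$, by Equation~(\ref{eqn-tracecudualaug})); equivalently $\bone \notin \C(u)$, so the augmentation genuinely increases the dimension by one, consistent with $\dim \C_u = 2u-1 = \dim \C(u) + 1$.

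First I would record that $\overline{\C(u)}$ has length $q+2$ and, since $\C(u)$ has minimum distance $d(\C(u)) \in \{q-2u+3, q-2u+4\}$, its extended code has $d(\overline{\C(u)}) \geq d(\C(u))$ by Lemma~\ref{lem-extendedCodeParam}. Next I would show the all-one vector $\bone$ of length $q+2$ lies in $\overline{\C(u)}$: indeed $\C_u = \widetilde{\C(u)}$ contains $\bone_{q+1}$ (shown in the proof of Theorem~\ref{thm-fund21jproj}), and since $q+1 \equiv -1 \pmod p$ the coordinate sum of $\bone_{q+1}$ is $q+1 \equiv 0$, so the extension of $\bone_{q+1}$ by a $0$ is the length-$(q+2)$ all-one vector and it lies in $\overline{\C_u} = \widetilde{\overline{\C(u)}}$. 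Actually the cleaner route: since $\bone_{q+2} \in \overline{\C_u}$ and $\overline{\C_u} = \widetilde{\overline{\C(u)}} = \langle \bone_{q+2}\rangle + \overline{\C(u)}$, we have the exact decomposition $\overline{\C_u} = \{ a\bone_{q+2} + \bc : a \in \gf(q),\ \bc \in \overline{\C(u)}\}$, with the union being a direct sum of dimensions since $\bone_{q+2}\notin\overline{\C(u)}$ (as $\bone_{q+1}\notin\C(u)$, using that the last coordinate of any word in $\overline{\C(u)}$ is determined by the first $q+1$).

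The heart of the argument is then a weight comparison for the augmented code, mirroring the proof of Lemma~\ref{lem-mixedoperations}(2) but run on $\C(u)$ rather than on a dual. I would argue: any nonzero codeword of $\overline{\C_u}$ has the form $a\bone_{q+2} + \bc$ with $\bc \in \overline{\C(u)}$. If $a = 0$ the weight is $\wt(\bc) \geq d(\overline{\C(u)}) \geq d(\C(u)) \geq d(\C_u)$. If $a \neq 0$, then $a\bone_{q+2} + \bc \in \overline{\C_u}$, and because $\overline{\C_u}$ itself has minimum distance $d(\C_u)$ or $d(\C_u)+1$ we already know any nonzero word has weight $\geq d(\C_u)$; the point to extract is that the minimum is \emph{attained} inside $\overline{\C(u)}$. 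For that I would use that $d(\overline{\C_u}) = d(\C_u)$ precisely when a minimum-weight word of $\C_u$ has coordinate sum $0$ (so it survives the extension with last coordinate $0$); I would show such a minimum-weight word can be chosen in $\C(u)$. Here is where the structure enters: $\C_u = \widetilde{\C(u)} = \langle \bone_{q+1}\rangle \oplus \C(u)$, so a minimum-weight $\bc_0 \in \C_u$ writes as $\bc_0 = a_0\bone_{q+1} + \bc_1$ with $\bc_1 \in \C(u)$; if $a_0 = 0$ we are done, and if $a_0 \neq 0$ I would derive a contradiction with minimality or with the MDS weight distribution of $\C_u$ (Lemma~\ref{lem-sdjoin1}) — the words of minimum weight $q-2u+3$ in the MDS code $\C_u$ whose support has size $q-2u+3 < q+1$ cannot be constant on their support in a way forcing $a_0\neq 0$, because $\C(u)\subseteq\C_u$ already contains many codewords and a counting/support argument (using $\dim\C(u) = \dim\C_u - 1$ and the explicit $A_{q-2u+3}(\C_u)$ from Lemma~\ref{lem-sdjoin1}) shows at least one minimum-weight word lies in $\C(u)$. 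Once the minimum weight of $\C_u$ is attained in $\C(u)$, then $d(\overline{\C_u}) = d(\C(u))$ follows: $\leq$ because a min-weight word of $\C(u)\subseteq\C_u$ with the right coordinate sum extends, and the matching count $A_{q-2u+3}(\overline{\C_u}) = A_{q-2u+3}(\C(u))$ falls out by noting the extension map $\bc \mapsto (\bc, 0)$ on $\C(u)$ is a weight-preserving bijection onto the weight-$(q-2u+3)$ part of $\overline{\C_u}$ — no word of that weight in $\overline{\C_u}$ can have $a\neq 0$ (it would have a nonzero last coordinate contributing to weight, forcing $\wt(a\bone_{q+1}+\bc_1) \leq q-2u+2 < d(\C_u)$, impossible).

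The step I expect to be the main obstacle is showing that a minimum-weight codeword of the MDS code $\C_u$ actually lies in the subcode $\C(u)$ (equivalently that $d(\C(u)) = d(\C_u)$ with the same multiplicity at the bottom weight). The containment $\C(u)\subseteq\C_u$ only gives $d(\C(u)) \geq d(\C_u)$, so one must rule out that \emph{every} minimum-weight word of $\C_u$ has nonzero $\bone_{q+1}$-component. I anticipate handling this by a dimension/counting argument: $\C(u)$ has codimension $1$ in $\C_u$, so it misses at most a $(1 - 1/q)$-fraction of codewords of each weight, and since $A_{q-2u+3}(\C_u) > 0$ with the explicit value from Lemma~\ref{lem-sdjoin1}, as long as that value is not \emph{entirely} concentrated in a single coset of $\C(u)$ we get a min-weight word in $\C(u)$; alternatively, a direct cyclic-code argument using that $\C(u)$ is the cyclic code with generator polynomial $(x-1)g_u(x)$ and that cyclic shifts of a min-weight word of $\C_u$ that is not supported on all coordinates cannot all have the same $\bone$-component. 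I would present whichever of these is cleanest once the details are in hand.
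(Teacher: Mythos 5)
Your opening identification is invalid, and the step you yourself single out as the main obstacle is an attempt to prove something that is false in general. First, Lemma~\ref{lem-augment-extend-code} requires $n\equiv -1\pmod p$, whereas here $n=q+1\equiv 1\pmod p$; the hypothesis holds only when $p=2$. For odd $q$ the claimed identity $\overline{\C_u}=\widetilde{\overline{\C(u)}}$ is actually false: every codeword $\bc_1\in\C(u)$ has zero coordinate sum (its generator polynomial $(x-1)g_u(x)$ is divisible by $x-1$), while $\bone_{q+1}$ has coordinate sum $q+1=1$ in $\gf(q)$, so the coordinate sum of $a\bone+\bc_1$ is $a$ and hence $\overline{\C_u}=\{(a\bone+\bc_1,-a)\}$, whereas $\widetilde{\overline{\C(u)}}=\{(a\bone+\bc_1,a)\}$; these codes differ whenever $2a\neq 0$. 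The decomposition you need is the former one (augmentation by $(\bone,-1)$, not by $\bone_{q+2}$), and this coordinate-sum computation — which the paper carries out via the trace representation from Delsarte's theorem — is really the whole content of the proof.

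The deeper problem is your plan to show that a minimum-weight codeword of $\C_u$ can be chosen inside $\C(u)$, i.e.\ $d(\C(u))=d(\C_u)=q-2u+3$. No counting or cyclic-shift argument can deliver this, because it is false in general: for $q=2^m$ and $u=2$ (Theorem~\ref{nthm:13}, Corollary~\ref{thm-sdjoint2}) one has $d(\overline{\C_2})=q=q-2u+4$, so $d(\C(2))=q-2u+4>d(\C_2)$ and \emph{every} minimum-weight codeword of $\C_2$ has nonzero $\bone$-component; the same happens for $u=q/2$ (Theorem~\ref{thm-sdjoint18}). Fortunately the theorem does not assert $d(\overline{\C_u})=d(\C_u)$ and does not need this step. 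Once $\overline{\C_u}=\{(a\bone+\bc_1,-a): a\in\gf(q),\ \bc_1\in\C(u)\}$ is in hand, the bijection argument in your final display is already complete and unconditional: a codeword with $a\neq 0$ has weight at least $1+d(\C_u)=q-2u+4$, so the weight-$(q-2u+3)$ codewords of $\overline{\C_u}$ are exactly the vectors $(\bc_1,0)$ with $\bc_1\in\C(u)$ of that weight, which gives $A_{q-2u+3}(\overline{\C_u})=A_{q-2u+3}(\C(u))$; and since $d(\overline{\C_u})$ and $d(\C(u))$ both lie in $\{q-2u+3,\,q-2u+4\}$, equality of these counts — whether the common value is zero or positive — forces $d(\overline{\C_u})=d(\C(u))$. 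That is precisely the paper's proof; the preliminary reduction you propose is both unnecessary and unachievable.
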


\begin{proof}Let $\tr(x)$ denote the trace function from $\gf(q^2)$ to $\gf(q)$. By Delsarte's theorem, 
$$\C(u)=\left\{ \bc(a_1, a_2,\ldots, a_{u-1}) : a_i\in \gf(q^2) \right \}, $$
and $\C_u=\left\{ a\bone+\bc(a_1, a_2,\ldots, a_{u-1}) : a\in \gf(q), a_i\in \gf(q^2) \right \}$, where 
$$\bc(a_1, a_2,\ldots, a_{u-1})=\left (\tr\left(\sum_{i=1}^{u-1}a_i  \beta^{i j}  \right)\right)_{j=0}^q.$$
For any $i$ with $1\leq i\leq u-1$, $\beta^i\neq 1$. Then $\sum_{j=0}^q \beta^{i j}=0$. It is easily verified that  
\begin{align*}
	(a\bone +\bc(a_1, a_2,\ldots, a_{u-1})) \bone^T=&~a+\sum_{j=0}^q \tr\left(\sum_{i=1}^{u-1}a_i  \beta^{i j}  \right)\\
	=&~a+ \tr\left(\sum_{i=1}^{u-1}a_i \sum_{j=0}^q \beta^{i j}  \right)\\
	=&~a.
\end{align*}
Therefore, $$\overline{\C_u}=\{ (a\bone +\bc(a_1, a_2,\ldots, a_{u-1}),-a): a\in \gf(q) ,\ a_i\in \gf(q^2) \}.$$
Since $d(\C_u)=q-2u+3$, we deduce that 
$$\wt((a\bone +\bc(a_1, a_2,\ldots, a_{u-1}),-a))=q-2u+3$$
if and only if $a=0$ and $\wt(\bc(a_1, a_2,\ldots, a_{u-1}))=q-2u+3$. As a result, 
\begin{equation}\label{NEQ29:1}
A_{q-2u+3}(\overline{\C_u})=A_{q-2u+3}(\C(u)).	
\end{equation}
Notice that 
$$d(\overline{\C_u}), d(\C(u))\in \{q-2u+3,q-2u+4\},$$
by Equation (\ref{NEQ29:1}), we deduce that $d(\overline{\C_u})=d(\C(u))$. This completes the proof. 
\end{proof}

Theorems \ref{thm-fund21jproj} and \ref{NTHM29:1} show that $\C_u$, $\C_{(q, q+1,u,1)}$, $\overline{\C_u}$ and their duals have the following relationship:
\begin{align*}
\begin{matrix}
\C_u:~[q+1,2u-1,q-2u+3] &\stackrel{{\rm Dual}}\longleftrightarrow &(\C_u)^\perp:~[q+1,q-2u+2, 2u]\\	
\bigcup & & \bigcap\\
\C(u):~[q+1,2u-2,d_2] &\stackrel{{\rm Dual}}\longleftrightarrow &\C_{(q,q+1, u,1)}:~[q+1,q-2u+3,d_1]\\	
d_2\in \{q-2u+3,q-2u+4\} & &u\leq d_1\leq 2u-1\\	
\overline{\C_u}:~[q+2,2u-1,d_2] &\stackrel{{\rm Dual}}\longleftrightarrow &(\overline{\C_u})^\perp:~[q+2,q-2u+3, d_1+1].\\	
\end{matrix}
\end{align*}
Hence, the minimum distance of $(\overline{\C_u})^\perp$ (resp. $\overline{\C_u}$ ) can be determined if the minimum distance of the code $\C_{(q, q+1, u, 1)}$ (resp. $\C(u)$ ) is known and vice versa. The following lemma comes from \cite{TH1970}. 

\begin{lemma}\label{nlem:10}
Let $q$ be a prime power and let $u$ with $2\leq u\leq \lfloor \frac{q+1}2 \rfloor$. Then the following hold.
\begin{enumerate}
\item If $u\mid (q+1)$, then $d(\C_{(q, q+1, u, 1)})=u$.	
\item If $u\nmid (q+1)$, then $d(\C_{(q, q+1, u, 1)})\geq u+1$.	
\end{enumerate}
\end{lemma}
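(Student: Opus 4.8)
\emph{Overview.} The code $\C_{(q, q+1, u, 1)}$ is the cyclic code of length $n=q+1$ over $\gf(q)$ with generator polynomial $\m_{\beta}(x)\m_{\beta^2}(x)\cdots\m_{\beta^{u-1}}(x)$, so its set of zeros is $\{\beta^t : t\in T\}$ with $T=\{1,2,\ldots,u-1\}\cup\{q+2-u,\ldots,q\}$; note $T$ is symmetric modulo $q+1$, since $q\equiv -1$ forces $C_i^{(q,q+1)}=\{i,-i\}$. For the lower bound $d(\C_{(q, q+1, u, 1)})\ge u$ (which settles one half of part 1 and is the starting point for part 2) I would simply invoke the BCH bound, using that $T$ contains the $u-1$ consecutive integers $1,\ldots,u-1$. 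So the real work is: (i) exhibit a codeword of weight $u$ when $u\mid(q+1)$, and (ii) rule out codewords of weight $u$ when $u\nmid(q+1)$.

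\emph{Part 1, the matching codeword.} Assuming $u\mid(q+1)$, write $q+1=um$ and take $f(x)=1+x^m+x^{2m}+\cdots+x^{(u-1)m}=\frac{x^{q+1}-1}{x^m-1}\in\gf(q)[x]$. The roots of $f$ are exactly the $(q+1)$-th roots of unity that are not $m$-th roots of unity, i.e. the $\beta^i$ with $u\nmid i$; since $u\nmid i$ for every $i$ in $\{1,\ldots,u-1\}$ and for every $i$ in $\{q+2-u,\ldots,q\}$, all the $\beta^t$ with $t\in T$ are roots of $f$, so $f(x)$ represents a codeword of $\C_{(q, q+1, u, 1)}$. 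It has exactly $u$ nonzero coefficients, whence $d(\C_{(q, q+1, u, 1)})\le u$, and part 1 follows.

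\emph{Part 2, the main obstacle.} Suppose $u\nmid(q+1)$ and, for contradiction, that there is a nonzero codeword of weight $w\le u$; by the BCH bound $w=u$. Write it as $\sum_{j=1}^{u}v_jx^{i_j}$ with distinct exponents $i_j$ and all $v_j\ne 0$, and put $y_j=\beta^{i_j}$, so the $y_j$ are $u$ distinct $(q+1)$-th roots of unity in $\gf(q^2)$. Vanishing at $\beta^1,\ldots,\beta^{u-1}$ gives the $u-1$ equations $\sum_j v_jy_j^t=0$ $(1\le t\le u-1)$; the coefficient matrix has rank $u-1$ (it is Vandermonde after scaling column $j$ by $y_j^{-1}$), so the solution space is one-dimensional, and a standard Lagrange/partial-fraction identity identifies it as the span of $\big(\frac{1}{y_j\prod_{k\ne j}(y_j-y_k)}\big)_j$. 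Vanishing at $\beta^{-1},\ldots,\beta^{-(u-1)}$ similarly forces $(v_j)_j$ to be proportional to $\big(\frac{1}{y_j^{-1}\prod_{k\ne j}(y_j^{-1}-y_k^{-1})}\big)_j$. Equating these two one-dimensional spaces and simplifying with $y_j^{-1}-y_k^{-1}=(y_k-y_j)/(y_jy_k)$, everything cancels down to the assertion that $y_j^{u}$ takes the same value for all $j$. But then the $u$ distinct ratios $y_j/y_1$ are $u$-th roots of unity as well as $(q+1)$-th roots of unity, hence lie in a group of order $\gcd(u,q+1)$; so $\gcd(u,q+1)\ge u$, i.e. $u\mid(q+1)$ --- a contradiction. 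Therefore $d(\C_{(q, q+1, u, 1)})\ge u+1$.

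\emph{Difficulty and an alternative.} The only step that is not pure bookkeeping is the cancellation in part 2 that collapses the two rank-one conditions into the statement ``all $y_j^{u}$ equal''; once that is in hand the divisibility conclusion is immediate, and everything else is routine manipulation with Vandermonde matrices and cyclotomic cosets. A shorter but less self-contained route to part 2 is to apply the Hartmann--Tzeng bound to the pattern $\{-(u-1)+i_1+i_2u : 0\le i_1\le u-2,\ i_2\in\{0,1\}\}\subseteq T$, whose step sizes $c_1=1$ and $c_2=u$ satisfy $\gcd(c_1,q+1)=1$ and (when $u\nmid(q+1)$) $\gcd(c_2,q+1)<u$, yielding $d\ge (u-1)+1+1=u+1$ at once.
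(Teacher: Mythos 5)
Your proof is correct, but there is nothing in the paper to compare it against step by step: the paper does not prove this lemma at all, it simply quotes it from the Tzeng--Hartmann reference \cite{TH1970} on reversible cyclic codes. Your argument is a sound self-contained replacement. Part 1 is the standard pairing of the BCH bound (the zeros $\beta^{\pm 1},\dots,\beta^{\pm(u-1)}$ contain the run $1,\dots,u-1$) with the explicit weight-$u$ codeword $\frac{x^{q+1}-1}{x^{(q+1)/u}-1}$, whose roots avoid exactly the exponents divisible by $u$ and hence cover the defining set; this checks out. Part 2 is where you add genuine content: the two families of parity checks at $\beta^{t}$ and $\beta^{-t}$, $1\le t\le u-1$, each pin the coefficient vector of a putative weight-$u$ codeword into a one-dimensional space with an explicit Lagrange-type spanning vector, and the identity $y_j^{-1}-y_k^{-1}=(y_k-y_j)/(y_jy_k)$ does collapse the proportionality of the two spanning vectors to the condition that $y_j^{u}$ is constant in $j$; the $u$ distinct ratios $y_j/y_1$ then lie among the at most $\gcd(u,q+1)$ solutions of $z^{\gcd(u,q+1)}=1$, forcing $u\mid(q+1)$, a contradiction. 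I verified the cancellation (the second spanning vector equals $(-1)^{u-1}\bigl(\prod_k y_k\bigr)\,y_j^{u-1}/\prod_{k\ne j}(y_j-y_k)$), so the step you flag as the only nontrivial one is fine. Your alternative via the Hartmann--Tzeng bound, with $c_1=1$, $c_2=u$, $\delta=u$, $s=1$ and $\gcd(u,q+1)<u$ when $u\nmid(q+1)$, is also valid and is essentially the route taken in the cited literature; what your first argument buys is independence from that bound, at the cost of the Vandermonde/Lagrange bookkeeping, while the HT route is shorter but imports the generalized BCH machinery that the paper chose to cite rather than reprove.
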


\begin{theorem}\label{nthm:11}
Let $q=p^m>2$, where $p$ is a prime and $m$ is a positive integer. Let $u$ be an integer with $2\leq u\leq \lfloor \frac{q+1}2 \rfloor$. Then the following hold.
\begin{enumerate}
\item If $u\mid (q+1)$, then $\overline{\C_u}$ has parameters $[q+2, 2u-1, q-2u+3]$ and $(\overline{\C_u})^\perp$ has parameters $[q+2, q-2u+3, u+1]$.	
\item If $u\nmid (q+1)$ and $(u+1)\mid (q+1)$, then $\overline{\C_u}$ has parameters $[q+2, 2u-1, d]$, where $d=q-2u+4$ if $u=p=2$ and $m$ is odd, $d=q-2u+3$ otherwise. The dual code $(\overline{\C_u})^\perp$ has parameters $[q+2, q-2u+3, u+2]$.	
\end{enumerate}
\end{theorem}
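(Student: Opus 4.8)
The plan is to combine the structural identity of Theorem~\ref{thm-fund21jproj}, namely $d((\overline{\C_u})^\perp) = d(\C_{(q,q+1,u,1)})+1$, with Lemma~\ref{nlem:10} to pin down $d((\overline{\C_u})^\perp)$ exactly, and then use Theorem~\ref{NTHM29:1} together with the MDS-ness of $\C_u$ to handle $d(\overline{\C_u})$. For part~1, assume $u \mid (q+1)$. Lemma~\ref{nlem:10}(1) gives $d(\C_{(q,q+1,u,1)}) = u$, so Theorem~\ref{thm-fund21jproj} yields $d((\overline{\C_u})^\perp) = u+1$ at once. Since $(\overline{\C_u})^\perp$ has length $q+2$ and, by Lemma~\ref{lem-extendedCodeParam}, dimension equal to $\dim((\C_u)^\perp) = q+2-2u$, wait --- more directly, $\overline{\C_u}$ has dimension $2u-1$ and length $q+2$, so $(\overline{\C_u})^\perp$ has dimension $q+2-(2u-1) = q-2u+3$; hence $(\overline{\C_u})^\perp$ has parameters $[q+2, q-2u+3, u+1]$. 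For $\overline{\C_u}$ itself, since $d((\overline{\C_u})^\perp) = u+1 < 2u-1$ whenever $u \geq 3$ (and one checks the boundary $u=2$ separately), $\overline{\C_u}$ cannot be MDS, so by the dichotomy recorded before Theorem~\ref{thm-fund21jproj} we have $d(\overline{\C_u}) = d(\C_u) = q-2u+3$; thus $\overline{\C_u}$ has parameters $[q+2, 2u-1, q-2u+3]$.

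For part~2, assume $u \nmid (q+1)$ but $(u+1) \mid (q+1)$. The idea is to compute $d(\C_{(q,q+1,u+1,1)})$ by Lemma~\ref{nlem:10}(1), obtaining the value $u+1$, and relate $\C_{(q,q+1,u,1)}$ to $\C_{(q,q+1,u+1,1)}$. Observe that the narrow-sense BCH code $\C_{(q,q+1,u+1,1)}$ is a subcode of $\C_{(q,q+1,u,1)}$ obtained by adjoining the factor $\m_{\beta^u}(x)$ to the generator polynomial; equivalently, $\widetilde{\C_{u}^\perp}$ relates to $\widetilde{\C_{u+1}^\perp}$ in the chain of codes. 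I would argue that $d(\C_{(q,q+1,u,1)}) = u+1$: the lower bound $d \geq u+1$ is Lemma~\ref{nlem:10}(2) since $u \nmid (q+1)$, and for the matching upper bound I would exhibit a weight-$(u+1)$ codeword --- the natural candidate is a minimum-weight codeword of the subcode $\C_{(q,q+1,u+1,1)}$, which has weight exactly $u+1$ by Lemma~\ref{nlem:10}(1) and lies in $\C_{(q,q+1,u,1)}$. This gives $d((\overline{\C_u})^\perp) = u+2$ via Theorem~\ref{thm-fund21jproj}, hence the dual has parameters $[q+2, q-2u+3, u+2]$.

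It remains to determine $d(\overline{\C_u})$ in part~2, and this is where the exceptional case $u=p=2$, $m$ odd enters. By Theorem~\ref{NTHM29:1}, $d(\overline{\C_u}) = d(\C(u))$ and $A_{q-2u+3}(\overline{\C_u}) = A_{q-2u+3}(\C(u))$, where $\C(u) = (\widetilde{\C_u^\perp})^\perp$ is a $[q+1, 2u-2, d_2]$ code with $d_2 \in \{q-2u+3, q-2u+4\}$; so $\overline{\C_u}$ is MDS precisely when $\C(u)$ is. Equivalently, I need to decide whether $\C_u$ (an MDS code of dimension $2u-1$) has a codeword of weight $q-2u+3$ that is not a scalar multiple of $\bone$ plus something --- more precisely, whether the minimum-weight codewords of $\C_u$ all come from the coset structure forcing $a=0$ in the description $\overline{\C_u} = \{(a\bone + \bc(a_1,\dots,a_{u-1}), -a)\}$. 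So the task reduces to: does $\C(u) = \{\bc(a_1,\dots,a_{u-1})\}$, the image under the trace map of $\sum_{i=1}^{u-1} a_i \beta^{ij}$, contain a codeword of weight $q-2u+3 = n - (2u-2)$, i.e.\ is $\C(u)$ MDS? For general $u$ I expect this fails (so $d(\overline{\C_u}) = q-2u+3$), and the content of the exception is that for $u=p=2$, $m$ odd, the relevant code $\C(2) = \{(\tr(a_1\beta^j))_{j=0}^q : a_1 \in \gf(q^2)\}$ --- a $[q+1,2]$ code --- does turn out to be MDS, equivalently every nonzero codeword $(\tr(a_1\beta^j))_j$ has at most one zero coordinate, which translates into the statement that $\tr(a_1 \beta^j) = \tr(a_1 \beta^{j'}) = 0$ with $j \neq j'$ is impossible; this is an explicit condition on when $a_1$ is orthogonal (under the trace form) to two distinct powers of $\beta$, and it should come down to a Kloosterman-sum-free, elementary field-theoretic computation about $\gf(q^2)/\gf(q)$ with $q = 2^m$, $m$ odd. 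I expect this exceptional-case verification --- showing $\C(2)$ is MDS exactly when $p=2$ and $m$ is odd, and AMDS otherwise --- to be the main obstacle; the rest is bookkeeping with dimensions and the two cited lemmas.
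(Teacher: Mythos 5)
There is a genuine gap in part 2: you never actually determine $d(\overline{\C_u})$ there. You correctly reduce the question to whether $\overline{\C_u}$ (equivalently $\C(u)$) is MDS, but then you only say you ``expect'' non-MDS-ness for general $u$ and you flag the exceptional case $u=p=2$, $m$ odd as ``the main obstacle,'' proposing an explicit trace-form computation for $\C(2)$ that you do not carry out. The paper's point is that no such computation is needed: once you know $d((\overline{\C_u})^\perp)=u+2$ exactly (which you do prove, by the same subcode argument as the paper), the standard fact that a linear code is MDS if and only if its dual is MDS finishes everything. Indeed $(\overline{\C_u})^\perp$ has parameters $[q+2,\,q-2u+3,\,u+2]$, so it is MDS iff $u+2=(q+2)-(q-2u+3)+1=2u$, i.e.\ iff $u=2$; and within the hypotheses of part 2, $u=2$ forces $2\nmid(q+1)$ and $3\mid(q+1)$, hence $p=2$ and $m$ odd. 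Thus $\overline{\C_u}$ is MDS, giving $d=q-2u+4$, exactly in that case, and otherwise the dual is not MDS, so $\overline{\C_u}$ is not MDS and $d=q-2u+3$. Your detour through Theorem \ref{NTHM29:1} and the code $\C(2)$ is unnecessary, and your side claim that $\C(2)$ is MDS ``exactly when $p=2$ and $m$ is odd'' is also inaccurate: it is MDS for all even $q$ (cf.\ Theorem \ref{nthm:13}); the parity of $m$ enters only through the hypothesis $(u+1)\mid(q+1)$ of part 2, not through the MDS-ness itself.

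The same duality observation also tidies up part 1, where your threshold is off by one: if $\overline{\C_u}$ were MDS its dual would have minimum distance $2u$, not $2u-1$, and since $u+1<2u$ for all $u\geq 2$ no separate check at $u=2$ is needed. That slip is harmless (the boundary case is trivial), but in part 2 the missing duality step is precisely what the theorem's exceptional case rests on, so as written the proposal does not establish the stated value of $d$ in either branch of part 2.
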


\begin{proof}
By Theorem \ref{thm-sdjoin1}, $\C_{u}$ has parameters $[q+1, 2u-1, q-2u+3]$. It follows from Lemma \ref{lem-extendedCodeParam} that $\overline{\C_{u}}$ has parameters $[q+2, 2u-1, d]$, where $d \in \{q-2u+3, q-2u+4\}$. By Theorem \ref{thm-fund21jproj}, we obtain that $d((\overline{\C_u})^\perp)=d(\C_{(q, q+1, u, 1)})+1$.

1) Since $u\mid(q+1)$, by Lemma \ref{nlem:10}, we deduce that $d(\C_{(q, q+1, u, 1)} )=u$. Therefore, $(\overline{\C_u})^\perp$ has parameters $[q+2, q+3-2u, u+1]$. Since $u>2$, $(\overline{\C_u})^\perp$ is not MDS. As a result,  $\overline{\C_u}$ is not MDS. Therefore, $d=q-2u+3$.

2) It is easily verified that $\C_{(q, q+1, u+1, 1)}\subseteq \C_{(q, q+1, u, 1)}$. Consequently, 
$$d(\C_{(q, q+1, u, 1)}) \leq d (\C_{(q, q+1, u+1, 1)}).$$  
Since $(u+1)\mid (q+1)$, we deduce that $d (\C_{(q, q+1, u+1, 1)})=u+1$. Since $u\nmid(q+1)$, it follows from Lemma \ref{nlem:10} that $d(\C_{(q, q+1, u, 1)})\geq u+1$. Consequently, $d(\C_{(q, q+1, u, 1)})=u+1$. It then following from Theorem \ref{thm-fund21jproj} that $d((\overline{\C_u})^\perp)=u+2$. In this case, $(\overline{\C_u})^\perp$ is an MDS code if and only if $u=2$. By assumption, $2\nmid(q+1)$ and $3\mid(q+1)$, we deduce that $p=2$ and $m$ is odd. Therefore, we have the following hold.
\begin{itemize}
\item If $u=p=2$ and $m$ is odd, $(\overline{\C_u})^\perp$ is a $[q+2, q-1, 4]$ MDS code and $\overline{\C_u}$ is a $[q+2, 3, q]$ MDS code. 
\item Otherwise, $(\overline{\C_u})^\perp$ is not MDS. Consequently, $\overline{\C_u}$ is not MDS. Hence, $d=q-2u+3$.  
\end{itemize}
This completes the proof. 
\end{proof}

\begin{lemma}\cite{LDM21} \label{nlem:12}
Let $u$ be a power of a prime $p$ and $q=u^e$ with $e\geq 2$. Then $d(\C_{(q, q+1, u, 1)})=u+1$.	
\end{lemma}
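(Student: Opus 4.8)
The plan is to prove the two inequalities $d(\C_{(q, q+1, u, 1)}) \ge u+1$ and $d(\C_{(q, q+1, u, 1)}) \le u+1$ separately. The lower bound is immediate: since $q = u^e$ we have $q \equiv 0 \pmod u$, hence $q + 1 \equiv 1 \pmod u$, so $u \nmid (q+1)$ because $u \ge 2$; moreover $2u \le u^2 \le u^e < q+1$, so $2 \le u \le \lfloor \frac{q+1}{2} \rfloor$ and Part 2 of Lemma \ref{nlem:10} gives $d(\C_{(q, q+1, u, 1)}) \ge u+1$. Everything then reduces to exhibiting one codeword of Hamming weight exactly $u+1$; this step will use only that $q$ is large relative to $u$, not the primality of $u$.

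To set up the construction, recall that since $q \equiv -1 \pmod{q+1}$ the $q$-cyclotomic coset of $j$ modulo $q+1$ is $\{j, -j\}$ for $1 \le j \le u-1$, so the defining set of $\C_{(q, q+1, u, 1)}$ is $\{\pm 1, \pm 2, \ldots, \pm (u-1)\} \bmod (q+1)$. Identifying $\bc = (c_0, \ldots, c_q) \in \gf(q)^{q+1}$ with $c(x) = \sum_j c_j x^j \in \gf(q)[x]/(x^{q+1}-1)$, the vector $\bc$ is a codeword if and only if $c(\beta^i) = 0$ for $i = 1, \ldots, u-1$ (the conditions at the exponents $-i$ follow automatically, since $\bc$ is over $\gf(q)$ and $\beta^q = \beta^{-1}$). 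Hence it suffices to find $u+1$ distinct $(q+1)$-th roots of unity $z_1, \ldots, z_{u+1}$ in $\gf(q^2)$ and nonzero scalars $c_1, \ldots, c_{u+1} \in \gf(q)$ with $\sum_{t=1}^{u+1} c_t z_t^i = 0$ for $i = 1, \ldots, u-1$; writing $z_t = \beta^{j_t}$, the polynomial $\sum_t c_t x^{j_t}$ is then a codeword of weight exactly $u+1$.

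I would take the support $\{z_1, \ldots, z_{u+1}\}$ to be invariant under the Frobenius $\sigma : x \mapsto x^q$, which acts on the group $U$ of $(q+1)$-th roots of unity by $z \mapsto z^{-1}$ with orbits of size $1$ or $2$; such an invariant set of size $u+1$ exists since $q = u^e \ge u^2$ leaves ample room (for instance $u+1 \le q-1$). Let $\pi$ be the involution of $\{1, \ldots, u+1\}$ with $z_{\pi(t)} = z_t^{-1}$. The homogeneous system $\sum_t c_t z_t^i = 0$ ($i = 1, \ldots, u-1$) has coefficient matrix $(z_t^i)$, equal to a full-rank $(u-1) \times (u+1)$ Vandermonde matrix in the distinct nonzero nodes $z_t$ times $\mathrm{diag}(z_1, \ldots, z_{u+1})$; hence it has rank $u-1$ and its solution space $K \subseteq \gf(q^2)^{u+1}$ satisfies $\dim_{\gf(q^2)} K = 2$. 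Raising each equation to the $q$-th power and using $z_t^{-1} = z_{\pi(t)}$ shows that the $\sigma$-semilinear involution $\tau : (c_t)_t \mapsto (c_{\pi(t)}^q)_t$ maps $K$ onto $K$; by the standard descent argument (the fixed subspace of a $\sigma$-semilinear involution on a $d$-dimensional $\gf(q^2)$-space is a $\gf(q)$-structure of dimension $d$), the space $K_0 = \{v \in K : \tau(v) = v\}$ is a $\gf(q)$-subspace with $\dim_{\gf(q)} K_0 = 2$, so $|K_0| = q^2$.

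It remains to pick $v = (c_1, \ldots, c_{u+1}) \in K_0$ with all coordinates nonzero. For each fixed $t$, the set $\{v \in K : c_t = 0\}$ is the solution space of the same system with the $t$-th column deleted; that matrix still has rank $u-1$, so this set is $1$-dimensional over $\gf(q^2)$ and in particular a proper subspace of $K$, whence $\{v \in K_0 : c_t = 0\}$ is a proper $\gf(q)$-subspace of $K_0$, of size at most $q$. Since $(u+1)q < q^2 = |K_0|$ (using $q = u^e \ge u^2 > u+1$), the union $\bigcup_{t=1}^{u+1} \{v \in K_0 : c_t = 0\}$ does not cover $K_0$, so a valid $v$ exists; it gives a codeword of weight $u+1$, so $d(\C_{(q, q+1, u, 1)}) \le u+1$, and together with the lower bound $d(\C_{(q, q+1, u, 1)}) = u+1$. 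The step I expect to demand the most care is the bookkeeping around the semilinear involution $\tau$ — that the support can be taken $\sigma$-invariant, that $\tau$ is a well-defined involution preserving $K$, and that $K_0$ has $\gf(q)$-dimension $2$ — together with making the count $(u+1)q < q^2$ airtight in the smallest cases; for $u = 2$ one may bypass all of this and simply invoke the Singleton bound, as $\C_{(q, q+1, 2, 1)}$ has dimension $q-1$.
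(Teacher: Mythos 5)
Your lower-bound half is fine, and so are the facts that $\dim_{\gf(q^2)}K=2$ and that the fixed space of a $\sigma$-semilinear involution is a $\gf(q)$-form; but the upper-bound half has a genuine gap, located exactly where you put the codeword back over $\gf(q)$. The fixed space $K_0$ of $\tau\colon (c_t)_t\mapsto (c_{\pi(t)}^q)_t$ consists of the \emph{conjugate-symmetric} vectors, those with $c_{\pi(t)}=c_t^q$, not of vectors with all coordinates in $\gf(q)$. A vector $v\in K_0$ with all coordinates nonzero therefore gives a polynomial $c(x)=\sum_t c_tx^{j_t}$ with coefficients in $\gf(q^2)$, and for such a polynomial the parity checks at the negative exponents are no longer automatic: your remark that the conditions at $-i$ ``follow automatically'' is valid only for coefficient vectors over $\gf(q)$. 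Indeed, for $v\in K_0$ one can only deduce $c(\beta^{-i})=c(\beta^{-i})^q$, i.e.\ $c(\beta^{-i})\in\gf(q)$, not $c(\beta^{-i})=0$, so no codeword of $\C_{(q,q+1,u,1)}$ is produced. What is actually needed is a nonzero element of $K\cap\gf(q)^{u+1}$, equivalently a nonzero $\gf(q)$-solution of the full $(2u-2)\times(u+1)$ system whose rows are indexed by $\pm1,\dots,\pm(u-1)$; by \cite[Lemma 16]{LDM21} (used in the paper in Lemmas \ref{NLEM::17} and \ref{NLEM2::33}) such a solution exists if and only if that conjugate-closed matrix has rank at most $u$, and arranging a support of size $u+1$ with this rank deficiency is precisely the hard step, which your argument never addresses: the two-dimensionality of the kernel of the positive-exponent half-system holds for every support and every $u$, and cannot by itself force an $\gf(q)$-rational low-weight codeword.

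A second, independent sign that the approach cannot be repaired without new input is your own observation that it uses only that $q$ is large relative to $u$, never that $u$ is a power of $p$ with $q=u^e$. If the argument were correct it would give $d(\C_{(q,q+1,4,1)})=5$ for all $q=5^m$ with $m\ge 3$ odd, but exactly this case is left open in the paper (Theorem \ref{NTH2::36}, Case 3, and the conjecture that follows it); for $u=3,4$ the paper has to solve a concrete equation such as $\sigma_{5,2}=0$ on the unit circle to force the rank drop. The hypothesis $u\mid q$ must enter the construction of the weight-$(u+1)$ codeword. Note also that the paper itself gives no proof of this lemma --- it is quoted from \cite{LDM21} --- so there is no in-paper argument your proposal could be matching; as written, it does not establish the upper bound.
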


\begin{theorem}\label{nthm:13}
Let $q=p^m>2$, where $p$ is a prime and $m$ is a positive integer. If $u= p^s$, where $s$ is a positive integer with $s\mid m$ and $s<m$. Then $\overline{\C_u}$ has parameters $[q+2, 2u-1, d]$, where $d=q-2u+4$ if $u=2$, and $d=q-2u+3$ if $u>2$. The dual code $(\overline{\C_u})^\perp$ has parameters $[q+2, q-2u+3, u+2]$.	
\end{theorem}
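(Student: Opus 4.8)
\emph{Proof proposal.} The plan is to reduce the statement to Lemma~\ref{nlem:12}, Theorem~\ref{thm-fund21jproj} and a dimension count. The first step is to notice that the divisibility hypotheses force $q$ to be a proper power of $u$: since $s \mid m$ and $s < m$, we may write $m = se$ with an integer $e \geq 2$, and then $q = p^m = (p^s)^e = u^e$. Thus $u$ is a power of the prime $p$ and $q = u^e$ with $e \geq 2$, so Lemma~\ref{nlem:12} applies and yields $d(\C_{(q, q+1, u, 1)}) = u+1$. Along the way I would also record the range bound: $u = p^s \geq 2$, while $u = q^{1/e} \leq \sqrt{q}$, and since $(\sqrt q - 1)^2 \geq 0$ we have $\sqrt q \leq (q+1)/2$, hence $2 \leq u \leq \lfloor \frac{q+1}{2} \rfloor$. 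In particular $\C_u$ is one of the classical MDS codes of Theorem~\ref{thm-sdjoin1} and Theorem~\ref{thm-fund21jproj} is applicable. (Note also $m \geq 2s \geq 2$, so $q \geq p^2 \geq 4 > 2$, consistent with the standing assumption.)

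Next I would invoke Theorem~\ref{thm-fund21jproj} to get $d((\overline{\C_u})^\perp) = d(\C_{(q, q+1, u, 1)}) + 1 = u+2$. By Lemma~\ref{lem-extendedCodeParam}, $\overline{\C_u}$ has length $q+2$ and dimension $2u-1$, so $(\overline{\C_u})^\perp$ has dimension $(q+2)-(2u-1) = q-2u+3$. Hence $(\overline{\C_u})^\perp$ has parameters $[q+2, q-2u+3, u+2]$, which is the second assertion of the theorem.

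Finally, to determine $d(\overline{\C_u})$, recall from Lemma~\ref{lem-extendedCodeParam} that $d(\overline{\C_u}) \in \{q-2u+3, q-2u+4\}$, and that $\overline{\C_u}$ is an MDS code precisely when $d(\overline{\C_u}) = q-2u+4 = (q+2)-(2u-1)+1$, equivalently precisely when its dual $(\overline{\C_u})^\perp$ is MDS. But $(\overline{\C_u})^\perp$ is a $[q+2, q-2u+3, u+2]$ code, and this is MDS iff $u+2 = (q+2)-(q-2u+3)+1 = 2u$, i.e.\ iff $u = 2$. Therefore, if $u = 2$ then $(\overline{\C_u})^\perp$ is a $[q+2, q-1, 4]$ MDS code, so $\overline{\C_u}$ is MDS with $d(\overline{\C_u}) = q-2u+4 = q$; and if $u > 2$ then $(\overline{\C_u})^\perp$ is not MDS, so neither is $\overline{\C_u}$, forcing $d(\overline{\C_u}) = q-2u+3$. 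This gives the stated parameters of $\overline{\C_u}$.

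I do not expect a genuine obstacle here: once one observes that $s \mid m$ and $s < m$ make $q$ a proper power of $u$, the argument is bookkeeping built on the previously established results. The only points demanding care are checking that $u$ lies in the admissible range $2 \leq u \leq \lfloor \frac{q+1}{2} \rfloor$ so that Theorems~\ref{thm-sdjoin1} and~\ref{thm-fund21jproj} may be used, and correctly identifying the MDS case of the dual with $u = 2$.
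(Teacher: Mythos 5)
Your proposal is correct and follows essentially the same route as the paper's proof: apply Lemma~\ref{nlem:12} (noting $q=u^e$ with $e\geq 2$) together with Theorem~\ref{thm-fund21jproj} to get $d((\overline{\C_u})^\perp)=u+2$, then conclude the minimum distance of $\overline{\C_u}$ from the observation that $(\overline{\C_u})^\perp$ is MDS if and only if $u=2$. Your extra bookkeeping (verifying $2\leq u\leq \lfloor\frac{q+1}{2}\rfloor$ and $q=u^e$) only makes explicit what the paper leaves implicit.
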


\begin{proof}
By Theorem \ref{thm-sdjoin1}, $\C_{u}$ has parameters $[q+1, 2u-1, q-2u+3]$. By Lemma \ref{lem-extendedCodeParam}, $\overline{\C_{u}}$ has parameters $[q+2, 2u-1, d]$, where $d \in \{q-2u+3, q-2u+4\}$. It follows from Theorem \ref{thm-fund21jproj} and Lemma \ref{nlem:12} that $d((\overline{\C_u})^\perp)=u+2$. Hence, $(\overline{\C_u})^\perp$ has parameters $[q+2, q-2u+3, u+2]$. Notice that $(\overline{\C_u})^\perp$ is an MDS code over $\gf(q)$ if and only if $u=2$. Consequently, $d(\overline{\C_u})=q-2u+4$ for $u=2$, and $d(\overline{\C_u})=q-2u+3$ for $u>2$. This completes the proof.
\end{proof}

\begin{example}
Let $q=3^2$. By Theorems \ref{nthm:11} and \ref{nthm:13}, we obtain the following results.
\begin{enumerate}
 \item If $u=2$, the extended code $\overline{\C_u}$ has parameters $[11,3,8]$ and $(\overline{\C_u})^\perp$ has parameters $[11,8,3]$. These two linear codes are optimal codes \cite{Grassl}.	
 \item If $u=3$, the extended code $\overline{\C_u}$ has parameters $[11,5,6]$ and $(\overline{\C_u})^\perp$ has parameters $[11,6,5]$. These two linear codes are optimal codes \cite{Grassl}.	 
 \item If $u=4$, the extended code $\overline{\C_u}$ has parameters $[11,7,4]$ and is an optimal code \cite{Grassl}. The code $(\overline{\C_u})^\perp$ has parameters $[11,4,6]$ and is a distance-almost-optimal code \cite{Grassl}.	
\end{enumerate}	
\end{example}

\subsection{The case $u=2$} 

In this subsection, we study the parameters of $\overline{\C_2}$ and its dual. By Theorems \ref{nthm:11} and \ref{nthm:13}, we have the following results.

\begin{corollary}\label{Cor9::1}
Let $q$ be an odd prime power. Then 	$\overline{\C_2}$ is a $[q+2, 3, q-1]$ NMDS code over $\gf(q)$.
\end{corollary}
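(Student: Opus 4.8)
The plan is to obtain this corollary as a direct specialization of Theorem \ref{nthm:11}(1) to $u=2$. First I would check that $u=2$ lies in the admissible range: since $q$ is an odd prime power we have $q=p^m>2$ with $p$ odd, so $q\geq 3$ and $\lfloor\frac{q+1}{2}\rfloor\geq 2$, whence $2\leq u\leq\lfloor\frac{q+1}{2}\rfloor$. Moreover $q$ odd forces $q+1$ to be even, so $u=2$ divides $q+1$, which is precisely the hypothesis of part (1) of Theorem \ref{nthm:11}.

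Applying Theorem \ref{nthm:11}(1) with $u=2$ then yields at once that $\overline{\C_2}$ has parameters $[q+2,\,2u-1,\,q-2u+3]=[q+2,\,3,\,q-1]$ and that $(\overline{\C_2})^\perp$ has parameters $[q+2,\,q-2u+3,\,u+1]=[q+2,\,q-1,\,3]$. It then remains only to recognise that these parameters describe an NMDS code: using the characterization recalled in Section \ref{sec-prelimi}, an $[n,k]$ linear code $\C$ is NMDS if and only if $d(\C)+d(\C^\perp)=n$, and here $n=q+2$ with $d(\overline{\C_2})=q-1$ and $d((\overline{\C_2})^\perp)=3$, so $d(\overline{\C_2})+d((\overline{\C_2})^\perp)=(q-1)+3=q+2=n$. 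Equivalently, $\overline{\C_2}$ is $[q+2,3,(q+2)-3]$ and hence AMDS, while $(\overline{\C_2})^\perp$ is $[q+2,q-1,(q+2)-(q-1)]$ and hence also AMDS; in either formulation $\overline{\C_2}$ is NMDS with the stated parameters $[q+2,3,q-1]$.

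I do not expect any real obstacle here, since all the substantive work is already contained in Theorem \ref{nthm:11} (which in turn rests on Theorem \ref{thm-fund21jproj} and the BCH lower bound of Lemma \ref{nlem:10}). The only point requiring a line of care is confirming that the range restriction $2\leq u\leq\lfloor\frac{q+1}{2}\rfloor$ holds for every odd prime power $q$, which is immediate as such $q$ is at least $3$.
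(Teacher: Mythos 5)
Your proposal is correct and follows exactly the paper's route: the paper also derives Corollary \ref{Cor9::1} directly from Result 1 of Theorem \ref{nthm:11} with $u=2$ (using $2\mid(q+1)$ for odd $q$). Your explicit check that $d(\overline{\C_2})+d((\overline{\C_2})^\perp)=(q-1)+3=q+2$ just spells out the NMDS verification the paper leaves implicit.
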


\begin{proof}
The desired result follows directly from Result 1 of Theorem \ref{nthm:11}. 	
\end{proof}

\begin{example}
Let $q=5$. Then the extended code $\overline{\C_2}$ has parameters $[7,3,4]$ and $(\overline{\C_2})^\perp$ has parameters $[7,4,3]$. These two linear codes are optimal codes \cite{Grassl}.	
\end{example}


\begin{corollary}\label{thm-sdjoint2}
Let $q=2^m$ with $m\geq 2$. Then $\overline{\C_2}$ is a $[q+2, 3, q]$ MDS code over $\gf(q)$ with weight enumerator 
$$ 
1 + \frac{(q+2)(q^2-1)}{2} z^q + \frac{q(q-1)^2}{2} z^{q+2}. 
$$ 
\end{corollary}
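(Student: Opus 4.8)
The plan is to use Theorem \ref{nthm:13} (with $u=2=p$, $q=2^m$, $s=1\mid m$, $s<m$) to obtain that $\overline{\C_2}$ is a $[q+2,3,q]$ code; since $q=(q+2)-3+1$, this is automatically MDS, so only the weight enumerator remains. Because $\overline{\C_2}$ is a $[q+2,3,q]$ MDS code over $\gf(q)$, its weight enumerator is completely determined by the parameters $n=q+2$, $k=3$, $d=q$ via Lemma \ref{lem-sdjoin1}. So the entire task reduces to substituting $n=q+2$, $k=3$, $d=q$ into the formula $A_i=\binom{n}{i}(q-1)\sum_{j=0}^{i-d}(-1)^j\binom{i-1}{j}q^{i-j-d}$ for $i\in\{q,q+1,q+2\}$ and simplifying.

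The key steps, in order. First I would record that the only nonzero weights are $i=q$, $i=q+1$, $i=q+2$, since $d=q$ and $n=q+2$. Second, for $i=q$ the inner sum has only the $j=0$ term, giving $A_q=\binom{q+2}{q}(q-1)q^{0}=\binom{q+2}{2}(q-1)=\frac{(q+2)(q+1)(q-1)}{2}=\frac{(q+2)(q^2-1)}{2}$, matching the claimed coefficient of $z^q$. Third, for $i=q+1$ the inner sum runs over $j\in\{0,1\}$: $A_{q+1}=\binom{q+2}{q+1}(q-1)\bigl(q^{1}-\binom{q}{1}q^{0}\bigr)=(q+2)(q-1)(q-q)=0$, so there is no $z^{q+1}$ term, consistent with the stated enumerator. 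Fourth, for $i=q+2$ the inner sum runs over $j\in\{0,1,2\}$: $A_{q+2}=(q-1)\bigl(q^{2}-\binom{q+1}{1}q+\binom{q+1}{2}\bigr)$; expanding $q^2-(q+1)q+\binom{q+1}{2}=q^2-q^2-q+\frac{(q+1)q}{2}=\frac{q(q-1)}{2}$ gives $A_{q+2}=\frac{q(q-1)^2}{2}$, matching the coefficient of $z^{q+2}$. Finally I would note the sanity check $A_0+A_q+A_{q+1}+A_{q+2}=1+\frac{(q+2)(q^2-1)}{2}+0+\frac{q(q-1)^2}{2}=q^3=q^{k}$, confirming the computation.

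There is essentially no obstacle here: the result is an immediate corollary of Theorem \ref{nthm:13} together with the closed-form MDS weight distribution in Lemma \ref{lem-sdjoin1}, and the only work is the elementary binomial simplification sketched above. The one point deserving a line of justification is why $d(\overline{\C_2})=q$ rather than $q-1$ — i.e. that the extension genuinely increases the distance — but this is exactly the content of Theorem \ref{nthm:13} for $u=2$ (equivalently Result 2 of Theorem \ref{nthm:11}), where it is shown that $(\overline{\C_2})^\perp$ has minimum distance $4$, hence is MDS, hence $\overline{\C_2}$ is MDS with $d=q$; I would simply cite that. Thus the proof is short: invoke Theorem \ref{nthm:13}, then apply Lemma \ref{lem-sdjoin1} and simplify.
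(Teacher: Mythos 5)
Your proposal is correct and follows exactly the paper's route: the parameters $[q+2,3,q]$ come from Theorem \ref{nthm:13} (with $u=2=p^1$, $s=1<m$), and the weight enumerator then follows from the MDS weight distribution formula of Lemma \ref{lem-sdjoin1}. Your explicit binomial simplification (including the check that $A_{q+1}=0$ and that the weights sum to $q^3$) just spells out what the paper leaves implicit.
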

\begin{proof}
The desired conclusion on the parameters of $\overline{\C_2}$ follows from Theorem \ref{nthm:13}. The desired conclusion on the weight distribution follows from Lemma \ref{lem-sdjoin1}. This completes the proof. 	
\end{proof}

Any $[2^m+2, 3, 2^m]$ code over $\gf(2^m)$ is called a \emph{hyperoval code}, as it corresponds to a hyperoval in the Desarguesian projective plane $\PG(2, \gf(2^m))$ \cite[Chapter 12]{Dingbook18}. 
\begin{example} 
Let $q=2^3$. The extended code $\overline{\C_2}$ has parameters $[10, 3, 8]$ and weight enumerator $1+315 z^{8}+196 z^{10}$. The dual code $(\overline{\C_2})^\perp$ has parameters $[10, 7, 4]$. 
\end{example}

\subsection{The case $u=3$} 

In this subsection, we study the parameters of $\overline{\C_3}$ and its dual. Let $\C(3)$ be the dual of the BCH code $\C_{(q, q+1, 3, 1)}$, then $\C(3)$ is the cyclic code of length $q+1$ over $\gf(q)$ with check polynomial $\m_{\beta}(x) \m_{\beta^2}(x)$, where $\beta$ is a primitive $(q+1)$-th root of unity in $\gf(q^2)$. The following lemmas will be needed later.

\begin{lemma}\label{NLEM::17}
Let notation be the same as before. Let $U_{q+1}=\{1, \beta, \cdots, \beta^q \}$. Then $d(\C_{(q, q+1, 3, 1)})\leq 4$ if and only if there are four pairwise distinct elements $x, y, z, w$ in $U_{q+1}$ such that	
\begin{equation*}\label{NNEQ::22}
xy+x z+x w+y z+y w+zw=0.
\end{equation*}
\end{lemma}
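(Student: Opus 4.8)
The plan is to translate the condition $d(\C_{(q,q+1,3,1)}) \le 4$ into the existence of a low-weight codeword in the dual code $\C(3)$, and then to unwind the parity-check equations defining that codeword into the symmetric-function identity in the statement. First I would recall that $\C_{(q,q+1,3,1)}$ is the narrow-sense BCH code of length $q+1$ with designed distance $3$, whose generator polynomial has roots $\beta, \beta^2$ (and their conjugates, but over $\gf(q)$ the defining set is exactly $C_1^{(q,q+1)} \cup C_2^{(q,q+1)}$). Hence a vector $\bc = (c_0, c_1, \ldots, c_q) \in \gf(q)^{q+1}$ lies in this BCH code if and only if $\sum_{j=0}^q c_j \beta^{j} = 0$ and $\sum_{j=0}^q c_j \beta^{2j} = 0$. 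Equivalently, the code has a parity-check matrix whose columns are $(\beta^j, \beta^{2j})$ for $j \in [q+1]$, i.e.\ these are points of (a conic piece of) the $q+1$ elements $\{(\xi, \xi^2) : \xi \in U_{q+1}\}$.

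Next I would argue the equivalence of $d \le 4$ with a $4$-column linear dependence. Since $\C_{(q,q+1,3,1)}$ has designed distance $3$, its minimum distance is at least $3$, so $d \le 4$ holds if and only if $d \in \{3,4\}$, which by the standard parity-check criterion is equivalent to the existence of $4$ (or fewer, but $\le 3$ is impossible to have a genuine dependence with all-nonzero coefficients on $\le 2$ columns here — one should check the weight-$3$ case is subsumed) columns of the check matrix that are linearly dependent over $\gf(q)$. Concretely, $d \le 4$ iff there exist pairwise distinct $x, y, z, w \in U_{q+1}$ and scalars $a,b,c,e \in \gf(q)$, not all zero, with
\begin{align*}
a x + b y + c z + e w &= 0,\\
a x^2 + b y^2 + c z^2 + e w^2 &= 0.
\end{align*}
The key algebraic step is then to show that such a nonzero $(a,b,c,e)$ exists precisely when the $2 \times 4$ matrix with columns $(x,x^2),(y,y^2),(z,z^2),(w,w^2)$ has a specific further degeneracy, and to identify the scalar obstruction. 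Here I would use the classical fact that for the $3\times 4$ Vandermonde-type matrix with rows $(1,1,1,1)$, $(x,y,z,w)$, $(x^2,y^2,z^2,w^2)$ the kernel is one-dimensional and spanned by the vector of signed $2\times 2$ minors, i.e.\ by $\bigl(\tfrac{1}{\prod_{t \ne x}(x-t)}, \ldots\bigr)$ up to scaling. The two-row system above has a nonzero solution iff that unique (up to scalar) kernel vector of the three-row system also kills the all-ones row — no wait, more carefully: the two-row system always has a $2$-dimensional solution space; I want a solution that is \emph{not} forced, so the right statement is that $d \le 4$ (as opposed to the code containing no weight-$\le 4$ word supported on exactly these positions with the extra constraint coming from... ) — the cleanest route is to observe that a weight-$4$ codeword supported on $\{x,y,z,w\}$ exists iff the $3 \times 4$ matrix with rows the $0$th, $1$st, $2$nd powers has rank $3$ with a kernel vector having all coordinates nonzero, and to compute that this kernel vector, the signed minors $e_2$-type expression, has its "compatibility" governed by $\sum_{\text{pairs}} (\text{product}) = 0$; expanding $\det$ of the $3\times 4$ augmented by the relation gives exactly $x y + x z + x w + y z + y w + z w = 0$ after dividing out the nonzero Vandermonde factor $\prod_{s<t}(\text{differences})$.

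The main obstacle I anticipate is bookkeeping the passage from "the BCH code has a weight-$4$ word" to the clean symmetric identity: one must verify (i) that weight $\le 3$ cannot occur given designed distance $3$ with the particular structure, so $d \le 4 \iff$ weight-$4$ word exists; (ii) that a weight-$4$ word supported on four \emph{distinct} positions of $U_{q+1}$ has all four coefficients nonzero automatically (else it would be lower weight, contradicting $d \ge 3$); and (iii) the determinant computation: form the $3 \times 3$ minors of the matrix $\begin{bmatrix} 1 & 1 & 1 & 1 \\ x & y & z & w \\ x^2 & y^2 & z^2 & w^2\end{bmatrix}$, observe the kernel is spanned by $(\,(y-z)(y-w)(z-w),\, -(x-z)(x-w)(z-w),\, (x-y)(x-w)(y-w),\, -(x-y)(x-z)(y-z)\,)$, impose that this vector lies in the kernel of just the rows $(x,x^2)$ and $(y,y^2)$ as well — equivalently that the full $4\times 4$ matrix obtained by repeating appropriately is singular — and extract that the single scalar condition that remains, after cancelling the (nonzero, since the points are distinct) product of differences, is $\sigma_2(x,y,z,w) = xy+xz+xw+yz+yw+zw = 0$. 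Once the determinant identity is in hand the equivalence is immediate in both directions, since every choice of distinct $x,y,z,w \in U_{q+1}$ satisfying the identity yields, via the displayed kernel vector scaled into $\gf(q)$ (using that the coefficient vector is fixed under the $q$-power Frobenius because the underlying point set $U_{q+1}$ is), an honest weight-$4$ codeword of $\C_{(q,q+1,3,1)}$.
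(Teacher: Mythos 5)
Your opening reduction is fine and matches the paper's: since the code has entries in $\gf(q)$ and zeros $\beta,\beta^2$, $d(\C_{(q,q+1,3,1)})\le 4$ holds if and only if there are four pairwise distinct $x,y,z,w\in U_{q+1}$ and a not-all-zero $(a,b,c,e)\in\gf(q)^4$ with $ax+by+cz+ew=0$ and $ax^2+by^2+cz^2+ew^2=0$; your points (i) and (ii) are also fine. The gap is in the key step: deciding when this pair of $\gf(q^2)$-linear equations has a nonzero solution \emph{over the subfield} $\gf(q)$. The paper handles this by adjoining the Frobenius-conjugate checks (for $c_i\in\gf(q)$, vanishing at $\beta,\beta^2$ forces vanishing at $\beta^{-1},\beta^{-2}$), obtaining the square matrix $\mathbf{M}$ with exponent rows $-2,-1,1,2$, invoking \cite[Lemma 16]{LDM21} (a Galois-descent statement: a nonzero $\gf(q)$-solution exists iff $\rank(\mathbf{M})<4$), and computing $\det(\mathbf{M})=\sigma_{x,y,z,w}\,(xy+xz+xw+yz+yw+zw)$ with $\sigma_{x,y,z,w}\neq 0$. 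Your route never introduces the rows with exponents $-1,-2$; instead you bring in the all-ones (zeroth-power) row, which is not a parity check of this code. Worse, it provably cannot be satisfied by the codeword you seek: a nonzero word of weight at most $4$ that also satisfied $\sum_i c_i=0$ would lie in the cyclic code whose zeros include the five consecutive powers $\beta^{-2},\beta^{-1},1,\beta,\beta^2$, hence have weight at least $6$ by the BCH bound. Consequently the kernel vector of your $3\times 4$ matrix with exponent rows $0,1,2$ is never the desired codeword; moreover that matrix is an ordinary Vandermonde system, so it always has rank $3$ with an all-nonzero kernel vector, and your stated criterion imposes no condition on $x,y,z,w$ at all, so it cannot be equivalent to $\sigma_2=0$. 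The instruction to make it ``lie in the kernel of the rows $(x,x^2)$ and $(y,y^2)$ as well'' does not define the $4\times 4$ matrix that is actually needed.

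The closing Frobenius argument is also incorrect: Frobenius acts on $U_{q+1}$ by $\xi\mapsto\xi^{-1}$, so it stabilizes $U_{q+1}$ as a set but fixes neither the chosen quadruple $\{x,y,z,w\}$ nor the vector of signed minors; there is no automatic $\gf(q)$-rationality of a solution. Producing a nonzero $\gf(q)$-rational solution is precisely the nontrivial content, and it is exactly what the singularity of the Galois-stable $4\times 4$ system, i.e.\ the condition $xy+xz+xw+yz+yw+zw=0$, delivers. To repair the proof, replace the exponent-$\{0,1,2\}$ matrix by the exponent-$\{-2,-1,1,2\}$ matrix $\mathbf{M}$, justify ``nonzero $\gf(q)$-solution iff $\mathbf{M}$ singular'' by the Frobenius-stability of its solution space (or cite \cite[Lemma 16]{LDM21} as the paper does), and then compute $\det(\mathbf{M})$: clearing the factor $(xyzw)^{-2}$ turns it into the generalized Vandermonde determinant with exponents $0,1,3,4$, which equals the Vandermonde determinant times the Schur/second elementary symmetric polynomial, giving the stated identity.
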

\begin{proof}
For any $4$ pairwise distinct elements $x, y, z, w$ in $U_{q+1}$, define a $4\times 4$ matrix $\mathbf{M}$ by  
 \begin{equation*}
 \begin{bmatrix}
   x^{-2}  &  y^{-2} &   z^{-2}   & w^{-2}\\
   x^{-1}  &  y^{-1} &   z^{-1}   & w^{-1}\\
   x       &  y      &   z        & w     \\
   x^2       &  y^2  &   z^2      &  w^2\\
 \end{bmatrix}.	
 \end{equation*}

It is clear that $d(\C_{(q, q+1, 3, 1)})\leq  4$ if and only if there are four pairwise distinct elements $x, y, z, w$ in $U_{q+1}$ such that the system of homogeneous linear equations 
\begin{equation}\label{NNEQ::23}
    \mathbf{M} \bx^T=\bzero
\end{equation}
has a nonzero solution in $\gf(q)^{4}$. By \cite[Lemma 16]{LDM21}, (\ref{NNEQ::23}) has a nonzero solution in $\gf(q)^{4}$ if and only if $\rank(\mathbf{M})< 4$, where $\rank(\mathbf{M})$ denotes the rank of the matrix $\mathbf{M}$. It is easily verified that  
\begin{equation}
\det(\mathbf{M}) 
 = \sigma_{x,y,z,w} \times (xy+xz+xw+yz+yw+zw). \label{NEQ21}
 \end{equation}
 where $\sigma_{x, y, z, w}=\frac{(y-x)(z-x)(z-y)(w-x)(w-y)(w-z)}{(x y z w)^2}$. Notice that $x, y, z, w$ are pairwise distinct, we get that $\sigma_{x, y, z, w}\neq 0$. Therefore, $\rank(\mathbf{M})< 4$ if and only if
  $$x y+x z+x w+y z+y w+z w=0.$$ 
  This completes the proof. 
\end{proof}

The parameters of $\C_{(q, q+1, 3, 1)}$ and $\C(3)$ are documented in the following theorem.


\begin{theorem}\label{NTHM:17}
Let $q\geq 5$ be a prime power. Then the following hold. 
\begin{enumerate}
\item If $3\mid(q+1)$, then $\C_{(q, q+1, 3, 1)}$ has parameters $[q+1, q-3, 3]$ and $\C(3)$ has parameters $[q+1, 4, q-3]$.
\item If $3\nmid(q+1)$, then $\C_{(q, q+1, 3, 1)}$ is a $[q+1, q-3, 4]$ NMDS code over $\gf(q)$.		
\end{enumerate}
\end{theorem}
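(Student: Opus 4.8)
The plan is to determine the minimum distance of $\C_{(q, q+1, 3, 1)}$ by analyzing when the BCH bound is tight, using Lemma~\ref{NLEM::17} as the main tool. First I would recall that $\C_{(q, q+1, 3, 1)}$ is the BCH code with defining set containing $\{1, 2\}$ (together with their $q$-cyclotomic cosets, i.e. also $\{-1, -2\}$), so by the BCH bound $d(\C_{(q,q+1,3,1)}) \geq 3$, and since $\C(3)$ has dimension $4$ (its check polynomial $\m_\beta(x)\m_{\beta^2}(x)$ has degree $4$ because $\beta, \beta^2 \notin \gf(q)$ for $q \geq 5$), the code $\C_{(q,q+1,3,1)}$ has parameters $[q+1, q-3, d]$ with $d \geq 3$. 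The Singleton bound gives $d \leq 5$, so the task is to pin down $d \in \{3, 4, 5\}$ in each of the two cases.

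For Part~(1), when $3 \mid (q+1)$: by Lemma~\ref{nlem:10}(1) with $u = 3$ we immediately get $d(\C_{(q,q+1,3,1)}) = 3$. Then $\C_{(q,q+1,3,1)}$ has parameters $[q+1, q-3, 3]$, and since it is an $[q+1, q-3]$ code, its dual $\C(3)$ has dimension $4$; to get the minimum distance of $\C(3)$ I would invoke the fact that $\C(3) = \widetilde{\C_3^\perp}$ — wait, more carefully, $\C(3)$ was defined as $(\widetilde{\C_3^\perp})^\perp = \C(u)$ for $u=3$, and by the discussion following Theorem~\ref{NTHM29:1}, $\C(3)$ is a $[q+1, 2u-2, d_2] = [q+1, 4, d_2]$ code with $d_2 \in \{q-2u+3, q-2u+4\} = \{q-3, q-2\}$; since $\C(3) \subseteq \C_3$ which is MDS with parameters $[q+1, 5, q-3]$ and dimension-$4$ subcodes of an MDS code need not be MDS, one shows $d_2 = q-3$ here — actually the cleanest route is: $\dim \C(3) + \dim \C_{(q,q+1,3,1)} = 4 + (q-3) = q+1$, consistent, and by Theorem~\ref{NTHM29:1}, $d(\overline{\C_3}) = d(\C(3))$ and $A_{q-3}(\overline{\C_3}) = A_{q-3}(\C(3))$; since $\overline{\C_3}$ is not MDS in this case (as $(\overline{\C_3})^\perp$ has minimum distance $u+1 = 4 < 5$ by Theorem~\ref{nthm:11}(1), so $\overline{\C_3}$ has minimum distance $q-3$), we conclude $d(\C(3)) = q-3$, so $\C(3)$ has parameters $[q+1, 4, q-3]$.

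For Part~(2), when $3 \nmid (q+1)$: by Lemma~\ref{nlem:10}(2) with $u = 3$ we get $d(\C_{(q,q+1,3,1)}) \geq 4$. The key step is to show $d(\C_{(q,q+1,3,1)}) = 4$, i.e. that $d \leq 4$, and this is where Lemma~\ref{NLEM::17} enters: I must exhibit four pairwise distinct elements $x, y, z, w \in U_{q+1}$ (the group of $(q+1)$-th roots of unity in $\gf(q^2)$) with $e_2(x,y,z,w) := xy + xz + xw + yz + yw + zw = 0$, where $e_2$ is the second elementary symmetric polynomial. The natural approach is to try a symmetric configuration: set $w = 1$ and look for a triple $\{x, y, z\} \subseteq U_{q+1}$, or better, exploit the structure of $U_{q+1}$ as a cyclic group of order $q+1$ under the norm-one condition $x\bar x = x^{q+1} = 1$. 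A clean idea is to seek $x, y, z, w$ forming a coset of a subgroup or a union of "antipodal" pairs: e.g. take $\{x, x^{-1}, y, y^{-1}\}$ (using that $U_{q+1}$ is closed under inversion), for which $e_2 = (x + x^{-1})(y + y^{-1}) + 2$ if I compute $x x^{-1} + y y^{-1} = 2$ and the cross terms $(x + x^{-1})(y + y^{-1})$; so the equation becomes $(x + x^{-1})(y + y^{-1}) = -2$, i.e. I need two elements $s = x + x^{-1}$ and $t = y + y^{-1}$ in the "trace-like" set $\{a + a^{-1} : a \in U_{q+1}\} \subseteq \gf(q)$ with $st = -2$; counting shows this set has size roughly $(q+1)/2$ or $(q+2)/2$ inside $\gf(q)$, and a pigeonhole/counting argument on the solvability of $st = -2$ should succeed, with care to ensure $x, x^{-1}, y, y^{-1}$ are genuinely pairwise distinct (excluding degenerate cases $x = \pm 1$, $y = \pm 1$, $x = y$, $x = y^{-1}$). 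The hard part will be making this existence argument fully rigorous — handling the small prime powers and the characteristic-$2$ case (where $-2 = 0$ and the analysis of $st = 0$ differs), and verifying the distinctness conditions; once $d = 4$ is established, $\C_{(q,q+1,3,1)}$ has parameters $[q+1, q-3, 4]$, and since its dual $\C(3)$ then has parameters $[q+1, 4, q-3]$ (by the same argument as in Part~(1), or by noting $4 + (q-3) = q+1$ and applying the NMDS criterion $d(\C) + d(\C^\perp) = n$), the code is NMDS, completing the proof.
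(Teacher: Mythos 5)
Your overall frame matches the paper's: dimension counts, Lemma \ref{nlem:10} for Part (1) and for the lower bound $d\geq 4$ in Part (2), Lemma \ref{NLEM::17} to reduce the upper bound $d\leq 4$ to exhibiting four distinct $x,y,z,w\in U_{q+1}$ with second elementary symmetric function zero, and the "dual not MDS, hence $d(\C(3))=q-3$" step (your detour through Theorems \ref{nthm:11} and \ref{NTHM29:1} is valid and non-circular, though the shorter argument is simply that $\C(3)$ is MDS iff its dual is). The genuine gap is precisely the existence step in Part (2), which you leave as a sketch. Your proposed configuration of two inverse-closed pairs $\{x,x^{-1},y,y^{-1}\}$ reduces the condition to $(x+x^{-1})(y+y^{-1})=-2$, but this \emph{cannot} work in characteristic $2$: there $-2=0$, and $x+x^{-1}=0$ forces $x^2=1$, i.e.\ $x=1$ (since $\gcd(2,q+1)=1$), destroying distinctness. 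Characteristic $2$ genuinely belongs to case (2), since $3\nmid(2^m+1)$ exactly when $m$ is even (e.g.\ $q=16$), so this is not a removable edge case; the paper disposes of $p\in\{2,3\}$ by citing \cite{DingTang19}. Moreover, even for odd $p>3$ your "pigeonhole/counting argument should succeed" is an assertion, not a proof: writing $T'=\{a+a^{-1}: a\in U_{q+1}\setminus\{\pm1\}\}$, one needs distinct $s,t\in T'$ with $st=-2$, which amounts to finding $s$ with both $s^2-4$ and $1-s^2$ nonsquares (plus $s^2\neq -2$); this is the same flavour of problem the paper needs two substantial lemmas (Lemmas \ref{NLEM::35} and \ref{NLEM9:11}) to settle in the $u=4$ case, so it cannot be waved through.

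What closes the gap in the paper, and what your writeup is missing, is a completely explicit and asymmetric choice: for odd $p$ take $y=-x$, $z\in U_{q+1}\setminus\{x,-x\}$ arbitrary, and $w=x^2z^{-1}$. Then $x+y=0$ kills all cross terms and $xy+zw=-x^2+x^2=0$, so $\sigma_2=0$, and pairwise distinctness is a three-line check ($w=x$, $w=y$, $w=z$ each force $z\in\{x,-x\}$). If you want to keep your symmetric-pairs idea, you would have to (i) supply an actual existence proof for $st=-2$ in $T'$ in odd characteristic (character-sum or case analysis, handling small $q$ separately), and (ii) replace the construction entirely in characteristic $2$ (e.g.\ by invoking \cite{DingTang19} as the paper does). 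As written, Part (2) is not proved.
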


\begin{proof}
It is clear that $\C_{(q, q+1, 3, 1)}$ has length $q+1$ and dimension $q-3$, and $\C(3)$ has length $q+1$ and dimension $4$, respectively. Notice that $d(\C(3))\in \{q-3, q-2\}$, and $d(\C(3))=q-2$ if and only if $\C(3)$ is MDS. Hence, we only need to prove that $d(\C_{(q, q+1, 3,1)})=3$ for $3\mid(q+1)$ and $d(\C_{(q, q+1, 3,1)})=4$ for $3\nmid(q+1)$. The rest of the proof is divided into the following two cases.

{\it Case 1}: $3\mid(q+1)$. It follows from Lemma \ref{nlem:10} that $d(\C_{(q, q+1, 3, 1)})=3$. The desired result follows.

{\it Case 2}: $3\nmid(q+1)$. Suppose $q=p^m$, where $p$ is a prime and $m$ is a positive integer. If $p\in \{2, 3\}$, the desired result was proved in \cite{DingTang19}. Now suppose $p$ is an odd prime. Since $3\nmid (q+1)$, by Lemma \ref{nlem:10}, $d(\C_{(q, q+1, 3, 1)} )\geq 4$. Below we prove that $d(\C_{(q, q+1, 3, 1)} )\leq  4$. Let $x\in U_{q+1}$, $y=-x$, $z\in U_{q+1}\backslash \{x, -x \}$ and $w=x^2 z^{-1}$. It is easily verified that $x, y, z, w\in U_{q+1}$ and 
$$xy+x z+x w+y z+y w+zw=0.$$ Since $q$ is odd, we deduce that $y\neq x$. We now prove that $x^2 z^{-1} \in U_{q+1}\backslash \{x, -x, z \}$. 
\begin{itemize}
\item If $x^2 z^{-1}=x$, we get that $z=x$, a contradiction.
\item If $x^2 z^{-1}=y$, we get that $z=-x$, a contradiction.	
\item If $x^2 z^{-1}=z$, we deduce that $z=x$ or $-x$, a contradiction.
\end{itemize}
By Lemma \ref{NLEM::17}, $d(\C_{(q, q+1, 3, 1)})\leq 4$. In summary, $d(\C_{(q, q+1, 3, 1)})=4$. This completes the proof.
\end{proof}

The parameters of $\overline{\C_3}$ are documented in the following theorem.

\begin{theorem}\label{NTHM18}
Let $q\geq 5$ be a prime power. Then the following hold.
\begin{enumerate}
\item If $3\mid(q+1)$, then 	$\overline{\C_3}$ has parameters $[q+2, 5, q-3]$ and $(\overline{\C_3})^\perp$ has parameters $[q+2, q-3, 4]$.
\item If $3\nmid(q+1)$, then $\overline{\C_3}$ is a $[q+2, 5, q-3]$ NMDS code over $\gf(q)$. 
\end{enumerate}
\end{theorem}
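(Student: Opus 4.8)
The plan is to reduce everything to Theorem \ref{NTHM:17}, which already settles the parameters of $\C_{(q,q+1,3,1)}$ and $\C(3)$, together with the general machinery of Theorems \ref{thm-fund21jproj} and \ref{NTHM29:1}. First, recall from Theorem \ref{thm-sdjoin1} that $\C_3$ has parameters $[q+1,5,q-3]$, so by Lemma \ref{lem-extendedCodeParam} the extended code $\overline{\C_3}$ has length $q+2$, dimension $5$, and minimum distance $d \in \{q-3, q-4+1\} = \{q-3, q-2\}$. The dimension of $(\overline{\C_3})^\perp$ is therefore $q+2-5 = q-3$. To pin down the minimum distances I would apply Theorem \ref{thm-fund21jproj} with $u=3$: it gives $d((\overline{\C_3})^\perp) = d(\C_{(q,q+1,3,1)}) + 1$. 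Similarly, by Theorem \ref{NTHM29:1}, $d(\overline{\C_3}) = d(\C(3))$, and since $\C(3) = (\widetilde{\C_3^\perp})^\perp = \left(\widetilde{\C_{(q,q+1,3,1)}}\right)^{\!?}$ is exactly the code analyzed in Theorem \ref{NTHM:17}, its minimum distance is available there.

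For part 1, assume $3 \mid (q+1)$. Theorem \ref{NTHM:17}(1) gives $d(\C_{(q,q+1,3,1)}) = 3$ and $d(\C(3)) = q-3$. Hence $d((\overline{\C_3})^\perp) = 3+1 = 4$ by Theorem \ref{thm-fund21jproj}, and $d(\overline{\C_3}) = d(\C(3)) = q-3$ by Theorem \ref{NTHM29:1}. This yields the announced parameters $[q+2, 5, q-3]$ for $\overline{\C_3}$ and $[q+2, q-3, 4]$ for $(\overline{\C_3})^\perp$. (One can also double-check consistency: $d(\overline{\C_3}) + d((\overline{\C_3})^\perp) = q-3+4 = q+1 < q+2$, so $\overline{\C_3}$ is \emph{not} NMDS in this case, which matches the fact that $\C(3)$ fails to be MDS when $3 \mid (q+1)$.)

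For part 2, assume $3 \nmid (q+1)$. Theorem \ref{NTHM:17}(2) gives that $\C_{(q,q+1,3,1)}$ is a $[q+1, q-3, 4]$ NMDS code, so its dual $\C(3)$ is a $[q+1, 4, q-3]$ code that is also AMDS, i.e.\ $d(\C(3)) = q-3$. Then Theorem \ref{NTHM29:1} gives $d(\overline{\C_3}) = q-3$, and Theorem \ref{thm-fund21jproj} gives $d((\overline{\C_3})^\perp) = 4+1 = 5$. Therefore $\overline{\C_3}$ has parameters $[q+2, 5, q-3]$ and $(\overline{\C_3})^\perp$ has parameters $[q+2, q-3, 5]$. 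Since $d(\overline{\C_3}) + d((\overline{\C_3})^\perp) = (q-3) + 5 = q+2 = n$, the code $\overline{\C_3}$ is NMDS by the characterization of NMDS codes recalled in Section \ref{sec-prelimi}. This completes both parts.

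The only genuinely nontrivial input is Theorem \ref{NTHM:17}, whose hard direction (the bound $d(\C_{(q,q+1,3,1)}) \le 4$ when $3 \nmid q+1$) rests on Lemma \ref{NLEM::17} and the explicit construction $y=-x$, $w = x^2 z^{-1}$; everything downstream in Theorem \ref{NTHM18} is a bookkeeping application of the dualization lemmas. So the main obstacle is not in this proof itself but already discharged: here I expect no obstacle beyond carefully tracking which of $\{q-3, q-2\}$ the minimum distance of $\overline{\C_3}$ takes, and that is forced by the equality $d(\overline{\C_3}) = d(\C(3))$ from Theorem \ref{NTHM29:1} together with $d(\C(3)) = q-3$ in both cases.
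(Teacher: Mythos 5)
Your proposal is correct and follows essentially the same route as the paper: both derive the length and dimension from Theorem \ref{thm-sdjoin1} and Lemma \ref{lem-extendedCodeParam}, then obtain $d(\overline{\C_3})=d(\C(3))=q-3$ via Theorem \ref{NTHM29:1} and $d((\overline{\C_3})^\perp)=d(\C_{(q,q+1,3,1)})+1\in\{4,5\}$ via Theorem \ref{thm-fund21jproj}, with Theorem \ref{NTHM:17} supplying the needed distances. Only trivial slips appear (the set of candidate distances should read $\{q-3,(q-3)+1\}$, and the NMDS characterization $d(\C)+d(\C^\perp)=n$ is stated in the introduction rather than the preliminaries); the argument itself matches the paper's.
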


\begin{proof}
By Theorem \ref{thm-sdjoin1}, $\C_{3}$ has parameters $[q+1, 5, q-3]$. Then $\overline{\C_{3}}$ has length $q+2$ and dimension $5$. Consequently, $(\overline{\C_3})^\perp$ has length $q+2$ and dimension $q-3$. By Theorems \ref{NTHM29:1} and \ref{NTHM:17}, we get that $d(\overline{\C_3})=d(\C(3))=q-3$. By Theorems \ref{thm-fund21jproj} and \ref{NTHM:17}, we obtain that
\begin{align*}
	d((\overline{\C_3})^\perp)=d(\C_{(q, q+1, 3, 1)})+1=\begin{cases}
	4~&{\rm if}~3\mid(q+1),\\
	5~&{\rm if}~3\nmid(q+1). 
\end{cases}
\end{align*}
This completes the proof.  	
\end{proof}

\begin{example}
Let $q=5^2$. Then the following hold.
\begin{enumerate}
\item The code $\C_3$ has parameters $[26, 5, 22]$ and $(\C_3)^\perp$ has parameters $[26,21, 6]$.	
\item The BCH code $\C_{(q, q+1,3,1)}$ has parameters $[26, 22, 4]$ and $\C(3)$ has parameters $[26, 4, 22]$.
\item The extended code $\overline{\C_3}$ has parameters $[27, 5, 22]$ and $(\overline{\C_3})^\perp$ has parameters $[27, 22, 5]$. 
\end{enumerate}
\end{example}

By Theorem \ref{NTHM29:1}, we have 
\begin{equation}\label{NEQ:23}
A_{q-3}(\overline{\C_3})	=A_{q-3}(\C(3)).
\end{equation}
When $q>5$ and $q\not\equiv 2\pmod{3}$, by Theorem \ref{NTHM18}, $\overline{\C_3}$ is an NMDS code over $\gf(q)$. From Lemma \ref{lem-DLwtd} and Equation (\ref{NEQ:23}), the weight distribution of $\overline{\C_3}$ is determined by $A_{q-3}(\C(3))$. It seems difficult to determine $A_{q-3}(\C(3))$ in general. We will determine the weight distribution of $\overline{\C_3}$ for two special cases. The following two lemmas will be needed later. 

\begin{lemma}\label{lem-sdjoint3} \cite{DingTang19}
Let $q=2^m$ with $m \geq 4$ being even. Then $\C(3)$ is a $[q+1, 4, q-3]$ NMDS code with the following weight distribution: 
\begin{itemize}
\item $A_0(\C(3))=1$. 
\item $A_{q-3}(\C(3))=  \frac{(q-4)(q-1)q(q+1)}{24}$. 
\item $A_{q-2}(\C(3))=  \frac{(q-1)q(q+1)}{2}$. 
\item $A_{q-1}(\C(3))= \frac{(q+1)q^2(q-1)}{4} $. 
\item $A_q(\C(3))= \frac{(q-1)(q+1)(2q^2+q+6)}{6} $. 
\item $A_{q+1}(\C(3))=  \frac{3q^4 - 4q^3 - 3q^2 + 4q}{8}$. 
\item $A_i(\C(3))=0$ for other $i$. 
\end{itemize}
\end{lemma}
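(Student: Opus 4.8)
The plan is to reduce the entire weight distribution to the single number $A_{q-3}(\C(3))$ via the Faldum--Willems identities and then to compute that number by an elementary count on the group $U_{q+1}=\{x\in\gf(q^2)^*:x^{q+1}=1\}$. Because $m$ is even we have $q\equiv 1\pmod 3$, so $3\nmid q+1$; hence Theorem~\ref{NTHM:17}(2) applies and shows $\C_{(q,q+1,3,1)}$ is a $[q+1,q-3,4]$ NMDS code, whence its dual $\C(3)$ is a $[q+1,4,q-3]$ NMDS code. Feeding $n=q+1$ and $k=4$ into Lemma~\ref{lem-DLwtd} expresses $A_{q-2}(\C(3)),A_{q-1}(\C(3)),A_q(\C(3)),A_{q+1}(\C(3))$ (the cases $s=1,2,3,4$ of \eqref{eqn-DL281}) as explicit polynomials in $A_{q-3}(\C(3))$, so once $A_{q-3}(\C(3))$ is known the remaining values follow by a routine simplification.

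To find $A_{q-3}(\C(3))$, I will use the trace description $\C(3)=\{\bc(a,b):a,b\in\gf(q^2)\}$ with $\bc(a,b)=(\tr(a\beta^j+b\beta^{2j}))_{j=0}^{q}$ (as in the proof of Theorem~\ref{NTHM29:1}). For $x\in U_{q+1}$ one has $x^q=x^{-1}$, so $\tr(ax+bx^2)=0$ if and only if $F_{a,b}(x):=bx^4+ax^3+a^qx+b^q=0$; therefore $\wt(\bc(a,b))=q+1-\#\{x\in U_{q+1}:F_{a,b}(x)=0\}$. A one-line computation shows $\bc(a,0)$ has weight $q$ for $a\ne 0$ and weight $0$ otherwise, so every weight-$(q-3)$ codeword has $b\ne 0$; in that case $F_{a,b}$ has degree $4$, and $\wt(\bc(a,b))=q-3$ precisely when $F_{a,b}=b\prod_{i=1}^{4}(x-x_i)$ with pairwise distinct $x_i\in U_{q+1}$.

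Next I will compare coefficients. Writing $e_i=e_i(x_1,\dots,x_4)$, the factorization forces $a=be_1$, $e_2=0$, $a^q=be_3$ and $b^{q-1}=e_4$. Since $x_i^{-1}=x_i^q$, one gets $e_3/e_4=\sum_i x_i^{-1}=e_1^q$ automatically, so $a^q=be_3$ is implied by $a=be_1$; and $b^{q-1}=e_4$ has exactly $q-1$ solutions $b\in\gf(q^2)^*$, because $e_4\in U_{q+1}$ and $z\mapsto z^{q-1}$ is $(q-1)$-to-one from $\gf(q^2)^*$ onto $U_{q+1}$. As the Delsarte map $(a,b)\mapsto\bc(a,b)$ is a bijection and a weight-$(q-3)$ codeword determines its $4$-element zero set in $U_{q+1}$, this gives $A_{q-3}(\C(3))=(q-1)N_4$, where $N_4=\#\{T\subseteq U_{q+1}:|T|=4,\ e_2(T)=0\}$.

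Finally I will compute $N_4$. For an ordered triple $(x_1,x_2,x_3)$ of distinct elements of $U_{q+1}$ with $x_1+x_2+x_3\ne 0$, the equation $e_2(x_1,x_2,x_3,x_4)=0$ has the unique solution $x_4=(x_1x_2+x_1x_3+x_2x_3)/(x_1+x_2+x_3)$, which again lies in $U_{q+1}$ by $x^q=x^{-1}$, and $x_4\in\{x_1,x_2,x_3\}$ exactly when one of $x_1^2=x_2x_3$, $x_2^2=x_1x_3$, $x_3^2=x_1x_2$ holds. Hence $24N_4$ is the number $(q+1)q(q-1)$ of ordered triples of distinct elements, minus those with $x_1+x_2+x_3=0$, minus those with some $x_i^2=x_jx_k$. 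Using $3\nmid q+1$ (equivalently, $U_{q+1}$ has no primitive cube root of unity), I will check that there are no triples of distinct elements summing to zero, that exactly $3(q+1)q$ triples satisfy some $x_i^2=x_jx_k$, and that these two families are disjoint; this gives $24N_4=(q+1)q(q-1)-3(q+1)q=(q+1)q(q-4)$, so $A_{q-3}(\C(3))=(q-1)N_4=\tfrac{(q-4)(q-1)q(q+1)}{24}$, and \eqref{eqn-DL281} then yields the remaining weights. \emph{The main obstacle} is this last count: one must argue carefully, all of it hinging on $3\nmid q+1$, that the various degenerate configurations (triples summing to zero, coincidences $x_i^2=x_jx_k$, repeated cube roots) are pairwise disjoint and have the stated cardinalities, and one must also be scrupulous about the bookkeeping --- the exclusion of the $b=0$ codewords and the multiplicity exactly $q-1$ per $4$-set --- that underlies the identity $A_{q-3}(\C(3))=(q-1)N_4$.
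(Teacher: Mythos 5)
Your proposal is correct; note that the paper does not prove this lemma at all --- it is imported verbatim from \cite{DingTang19} --- so there is no in-paper proof to compare against, and what you have written is a genuine self-contained derivation of the cited result. I checked the key steps: the reduction $\tr(ax+bx^2)=0\iff bx^4+ax^3+a^qx+b^q=0$ on $U_{q+1}$, the coefficient comparison giving $a=be_1$, $e_2=0$, $b^{q-1}=e_4$ with $a^q=be_3$ redundant via $e_1^q=e_3/e_4$, the multiplicity $q-1$ per admissible $4$-set, the nonexistence of zero-sum triples in $U_{q+1}$ when $3\nmid q+1$ (equivalent to $y^2+y+1=0$ having a root in $U_{q+1}$), the count $3(q+1)q$ of triples with some $x_i^2=x_jx_k$ (two such relations forcing a cube root of unity), and the resulting $24N_4=(q+1)q(q-4)$; substituting $A_{q-3}=\frac{(q-4)(q-1)q(q+1)}{24}$ into \eqref{eqn-DL281} does reproduce all the listed $A_i$. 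Two small points deserve attention. First, your appeal to Theorem~\ref{NTHM:17}(2) is mildly circular, since the paper's proof of that case for $p=2$ itself cites \cite{DingTang19}; but this is easily repaired, because your own argument gives $d(\C(3))\ge q-3$ (at most four zeros of $F_{a,b}$), $A_{q-3}>0$, and then $d(\C_{(q,q+1,3,1)})=4$ from Lemma~\ref{nlem:10} together with the fact that $\C_{(q,q+1,3,1)}$ cannot be MDS (its dual is not). Second, when you solve $e_2(x_1,x_2,x_3,x_4)=0$ for $x_4=e_2(x_1,x_2,x_3)/e_1(x_1,x_2,x_3)$ you need $x_4\ne 0$ before concluding $x_4\in U_{q+1}$; this holds because $e_2(x_1,x_2,x_3)^q=e_1(x_1,x_2,x_3)/e_3(x_1,x_2,x_3)$ on $U_{q+1}$, so $e_2$ of the triple vanishes only when $e_1$ does, a case you have already excluded. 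With these two clarifications the argument is complete.
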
 

\begin{lemma}\label{nlem20} \cite{DingTang19}
Let $q=3^m$ with $m \geq 2$. Then $\C(3)$ is a $[q+1, 4, q-3]$ NMDS code with the following weight distribution:
\begin{itemize}
\item $A_0(\C(3))=1$. 
\item $A_{q-3}(\C(3))=  \frac{(q-1)^2 q(q+1)}{24}$. 
\item $A_{q-1}(\C(3))=  \frac{(q-1)q(q+1)(q+3)}{4}$. 
\item $A_{q}(\C(3))= \frac{(q^2-1)(q^2-q+3)}{3} $. 
\item $A_{q+1}(\C(3))=  \frac{3 (q-1)^2q (q+1)}{8}$. 
\item $A_i(\C(3))=0$ for other $i$. 
\end{itemize}
\end{lemma}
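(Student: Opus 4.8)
The plan is to recognize the code $\C(3)$ as a well-studied object and reduce the computation of its weight distribution to two independent ingredients: the known NMDS weight-distribution formulas of Lemma \ref{lem-DLwtd}, and the single unknown coefficient $A_{q-3}(\C(3))$. First I would recall from Theorem \ref{NTHM:17} that since $q=3^m$ satisfies $3\mid(q+1)$ is false (indeed $q\equiv 0\pmod 3$, so $q+1\equiv 1\pmod 3$, i.e. $3\nmid(q+1)$), the code $\C_{(q,q+1,3,1)}$ is a $[q+1,q-3,4]$ NMDS code and hence its dual $\C(3)$ is a $[q+1,4,q-3]$ NMDS code. Thus $\C(3)$ is an NMDS code with $n=q+1$, $k=4$, and one can feed $n-k = q-3$ into Equation (\ref{eqn-DL281}) of Lemma \ref{lem-DLwtd}: every weight $A_{q-3+s}(\C(3))$ for $s\in\{1,\dots,4\}$ is an explicit polynomial in $q$ plus the term $(-1)^s\binom{4}{s}A_{q-3}(\C(3))$. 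So the entire weight distribution is pinned down once $A_{q-3}(\C(3))$ is known, and the whole problem collapses to counting the minimum-weight codewords.

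Next I would compute $A_{q-3}(\C(3))$ directly. By Delsarte duality (as in the proof of Theorem \ref{NTHM29:1}), $\C(3)=\{(\tr(a_1\beta^j + a_2\beta^{2j}))_{j=0}^{q} : a_1,a_2\in\gf(q^2)\}$, so a codeword has weight $q-3$ exactly when the function $j\mapsto \tr(a_1\beta^j+a_2\beta^{2j})$ vanishes on precisely $4$ of the $q+1$ positions $U_{q+1}=\{1,\beta,\dots,\beta^q\}$. Equivalently, I would count $4$-subsets $\{x,y,z,w\}\subseteq U_{q+1}$ that are the zero set of such a trace function, using the matrix criterion developed in Lemma \ref{NLEM::17}: the relevant condition is the vanishing of the second elementary symmetric polynomial $e_2(x,y,z,w)=xy+xz+xw+yz+yw+zw=0$ together with the requirement that the zero set be *exactly* these four points (not a larger set). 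The count of unordered $4$-subsets of $U_{q+1}$ with $e_2=0$ is a character-sum / counting problem over the circle group $U_{q+1}\cong \Z_{q+1}$; one then subtracts any degenerate configurations. After dividing by the appropriate multiplicity (each minimum-weight codeword and its $\gf(q)^*$-multiples correspond to one such $4$-set, giving a factor $q-1$), this yields $A_{q-3}(\C(3)) = \frac{(q-1)^2 q(q+1)}{24}$.

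Finally I would substitute this value of $A_{q-3}(\C(3))$ into the four instances $s=1,2,3,4$ of Equation (\ref{eqn-DL281}) and simplify; the term $A_{q-2}(\C(3))$ should come out $0$ (consistent with the list in the statement having no $z^{q-2}$ term, which is a useful sanity check), and the remaining coefficients $A_{q-1},A_q,A_{q+1}$ should match the claimed polynomials. I expect the main obstacle to be the combinatorial count of $4$-subsets of the $(q+1)$-st roots of unity with $e_2=0$: one has to set up the right character sum over $\Z_{q+1}$ and, crucially, exclude the degenerate solutions where the vanishing set is strictly larger than four points or where the four "points" are not distinct — getting these correction terms exactly right, rather than the leading-order count, is the delicate part. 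Since the case $q=3^m$ is the one treated in \cite{DingTang19}, I would in practice cite that reference for the value of $A_{q-3}(\C(3))$ (and for the NMDS property of $\C(3)$), and present the substitution into Lemma \ref{lem-DLwtd} as the routine remainder of the argument.
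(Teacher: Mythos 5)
The paper does not actually prove this lemma: it is imported verbatim from \cite{DingTang19}, so there is no internal proof to compare against. Your reduction is sound and mirrors how the paper itself handles the analogous statement for the extended code (Theorem \ref{NTHM23}): once $\C(3)$ is known to be a $[q+1,4,q-3]$ NMDS code, Lemma \ref{lem-DLwtd} determines $A_{q-2},A_{q-1},A_{q},A_{q+1}$ from the single value $A_{q-3}(\C(3))$, and with the claimed $A_{q-3}=\frac{(q-1)^2q(q+1)}{24}$ the $s=1$ instance indeed yields $A_{q-2}=0$, so the stated list is internally consistent. Your identification of the minimum-weight count is also correct in principle: a codeword of $\C(3)$ is $j\mapsto f(\beta^j)$ with $f(x)=a_1x+a_1^qx^{-1}+a_2x^2+a_2^qx^{-2}$, and $x^2f(x)$ is a quartic whose $x^2$-coefficient vanishes, so its zero set in $U_{q+1}$ is a $4$-subset with vanishing second elementary symmetric polynomial; conversely each such $4$-subset supports exactly $q-1$ minimum-weight codewords (the needed compatibility $e_1^qe_4=e_3$ holds automatically because the roots lie in $U_{q+1}$, and a degree-$4$ polynomial cannot vanish at a fifth point, so your worry about larger zero sets takes care of itself). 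Hence $A_{q-3}(\C(3))=(q-1)N_4$ with $N_4$ the number of such $4$-subsets, exactly as you set it up.

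The genuine gap is that the one hard step --- proving $N_4=\frac{(q-1)q(q+1)}{24}$ when $q=3^m$ --- is never executed: you gesture at a character-sum count over $U_{q+1}$, acknowledge the degenerate-configuration bookkeeping is delicate, and then fall back on citing \cite{DingTang19} for the value of $A_{q-3}(\C(3))$. That fallback is no worse than the paper's own treatment (which cites the entire lemma), but it means your write-up is a citation plus a consistency check rather than a proof. Be aware also of a mild circularity in your scaffolding: the NMDS property of $\C(3)$ that you take from Theorem \ref{NTHM:17} is, in the case $p=3$ relevant here, itself quoted from \cite{DingTang19} in that theorem's proof, so an independent argument would additionally have to establish $d(\C_{(q,q+1,3,1)})=4$ for $q=3^m$ (which in fact follows from the paper's own construction in the proof of Theorem \ref{NTHM:17}, since the choice $y=-x$, $w=x^2z^{-1}$ works for any odd characteristic including $3$).
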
 

The following family of NMDS codes were claimed but not proved in \cite{TFZD21}. Here we prove they 
are indeed NMDS and settle the weight distribution of $\overline{\C_3}$.   

\begin{theorem}\label{thm-sdjoint3}
Let $q=2^m$ with $m \geq 4$ being even. Then $\overline{\C_3}$ is a $[q+2, 5, q-3]$ NMDS code with the following weight distribution: 
\begin{itemize}
\item $A_{0}(\overline{\C_3})= 1$. 
\item $A_{q-3}(\overline{\C_3})=  \frac{q^4 - 4q^3 - q^2 + 4q}{24}$. 
\item $A_{q-2}(\overline{\C_3})=  \frac{q^5 - 4q^4 + 17q^3 + 4q^2 - 18q}{24}$. 
\item $A_{q-1}(\overline{\C_3})=  \frac{3q^4 - 4q^3 - 3q^2 + 4q}{4}$. 
\item $A_{q}(\overline{\C_3})=  \frac{3q^5 - 2q^4 + 17q^3 + 14q^2 - 20q - 12}{12}$. 
\item $A_{q+1}(\overline{\C_3})=  \frac{8q^5 + 13q^4 - 44q^3 - 13q^2 + 36q}{24}$. 
\item $A_{q+2}(\overline{\C_3})=  \frac{3q^5 - 8q^4 + 7q^3 - 2q}{8}$. 
\item $A_{i}(\overline{\C_3})= 0$ for other $i$. 
\end{itemize}
\end{theorem}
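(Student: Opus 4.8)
The plan is to reduce the whole computation to the single unknown $A_{q-3}(\overline{\C_3})$, evaluate it from results already in hand, and then unwind the Faldum--Willems formulas.

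First I would note that for $q=2^m$ with $m$ even one has $q\equiv 1\pmod 3$, hence $3\nmid(q+1)$, so the second part of Theorem~\ref{NTHM18} applies: $\overline{\C_3}$ is a $[q+2,5,q-3]$ NMDS code. Thus $n=q+2$, $k=5$, $d(\overline{\C_3})=n-k=q-3$, and Lemma~\ref{lem-DLwtd} tells us that the nonzero weights of $\overline{\C_3}$ lie in $\{q-3,q-2,q-1,q,q+1,q+2\}$ and that each $A_{q-3+s}(\overline{\C_3})$, $s=1,\dots,5$, is the explicit expression \eqref{eqn-DL281} evaluated at $n=q+2$, $k=5$, with $A_{n-k}=A_{q-3}(\overline{\C_3})$ as the only free parameter.

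Next I would pin down that parameter. By Theorem~\ref{NTHM29:1}, i.e. Equation~\eqref{NEQ:23}, $A_{q-3}(\overline{\C_3})=A_{q-3}(\C(3))$, and Lemma~\ref{lem-sdjoint3} gives $A_{q-3}(\C(3))=\frac{(q-4)(q-1)q(q+1)}{24}$ for $q=2^m$ with $m\ge 4$ even. Plugging $n=q+2$, $k=5$, and this value of $A_{n-k}$ into \eqref{eqn-DL281} and letting $s$ run through $1,2,3,4,5$ yields the six asserted closed forms once the alternating sums $\sum_{j=0}^{s-1}(-1)^j\binom{q-3+s}{j}(q^{s-j}-1)$ are expanded and the resulting degree-$5$ polynomials in $q$ are collected; for instance the $s=1$ case gives immediately $A_{q-2}(\overline{\C_3})=\binom{q+2}{4}(q-1)-5A_{q-3}(\overline{\C_3})=\frac{(q-1)q(q+1)(q^2-4q+18)}{24}$, which matches the stated value.

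Since the substitution is mechanical, the only genuine obstacle is the arithmetic bookkeeping for $s=2,3,4,5$: expanding each alternating binomial sum and collecting terms without error. As a safeguard I would check that the final list satisfies $\sum_i A_i(\overline{\C_3})=q^5$, and (optionally) cross-check one coefficient against the MacWilliams transform of the weight distribution of $(\overline{\C_3})^\perp$ furnished by \eqref{eqn-DL282}; either identity immediately flags a slip in the closed forms.
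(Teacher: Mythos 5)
Your proposal is correct and follows essentially the same route as the paper: parameters and the NMDS property from Theorem \ref{NTHM18} (using $3\nmid(q+1)$ for even $m$), the value $A_{q-3}(\overline{\C_3})=A_{q-3}(\C(3))=\frac{(q-4)(q-1)q(q+1)}{24}$ from Theorem \ref{NTHM29:1} and Lemma \ref{lem-sdjoint3}, and then the remaining weights from the NMDS formulas of Lemma \ref{lem-DLwtd}. The only difference is that you spell out the $s=1$ computation and propose the $\sum_i A_i=q^5$ sanity check, which the paper leaves implicit.
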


\begin{proof}
The conclusions on the parameters of $\overline{\C_3}$ and $(\overline{\C_3})^\perp$ follow from Theorem \ref{NTHM18}. It remains to prove the weight distribution of the code $\overline{\C_3}$. It follows from Lemma \ref{lem-sdjoint3} and Equation (\ref{NEQ:23}) that
\begin{equation}\label{NEQ:24}
A_{q-3}(\overline{\C_3})	=\frac{(q-4)(q-1)q(q+1)}{24}.
\end{equation}
Recall that $\overline{\C_3}$ is an NMDS code over $\gf(q)$. From Lemma \ref{lem-DLwtd} and Equation (\ref{NEQ:24}), the desired weight distribution follows. 
\end{proof}

\begin{example} 
Let $q=2^4$. Then $\overline{\C_3}$ has parameters $[18, 5, 13]$ and weight enumerator 
\begin{eqnarray*}
1+2040z^{13}+  35700z^{14}+  44880z^{15}+ 257295z^{16}+ 377400z^{17}+ 
331260z^{18}. 
\end{eqnarray*} 
The dual code $(\overline{\C_3})^\perp$ has parameters $[18, 13, 5]$. Hence, $\overline{\C_3}$ is NMDS in this case. 
\end{example} 

 Similar to the proof of Theorem \ref{thm-sdjoint3}, one can prove the following. 

\begin{theorem}\label{NTHM23}
Let $q=3^m$ with $m \geq 2$. Then $\overline{\C_3}$ is a $[q+2, 5, q-3]$ NMDS code with the following weight distribution: 
\begin{itemize}
\item $A_{0}(\overline{\C_3})= 1$. 
\item $A_{q-3}(\overline{\C_3})=  \frac{q^4-q^3-q^2+q}{24}$. 
\item $A_{q-2}(\overline{\C_3})=  \frac{q^5 - 4q^4 + 2q^3 + 4q^2 - 3q}{24}$. 
\item $A_{q-1}(\overline{\C_3})=  \frac{3q^4 +q^3 - 3q^2 -q}{4}$. 
\item $A_{q}(\overline{\C_3})=  \frac{3q^5 - 2q^4 + 2q^3 + 14q^2 - 5q - 12}{12}$. 
\item $A_{q+1}(\overline{\C_3})=  \frac{8q^5 + 13q^4 - 29q^3 - 13q^2 + 21q}{24}$. 
\item $A_{q+2}(\overline{\C_3})=  \frac{3q^5 - 8q^4 + 6q^3 - q}{8}$. 
\item $A_{i}(\overline{\C_3})= 0$ for other $i$. 
\end{itemize}
\end{theorem}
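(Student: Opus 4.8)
The plan is to mirror the proof of Theorem~\ref{thm-sdjoint3} exactly, replacing the input on the weight distribution of $\C(3)$ for $q=2^m$ by the corresponding input for $q=3^m$. First I would invoke Theorem~\ref{NTHM18}: since $q=3^m$ with $m\geq 2$, we have $q\equiv 0\pmod 3$, hence $3\nmid(q+1)$, so Result~2 of Theorem~\ref{NTHM18} applies and $\overline{\C_3}$ is a $[q+2,5,q-3]$ NMDS code over $\gf(q)$ (and $(\overline{\C_3})^\perp$ is a $[q+2,q-3,5]$ code). This settles the parameters and the fact that the code is NMDS, so the only remaining task is the weight distribution.

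Next I would compute the single ``seed'' coefficient $A_{q-3}(\overline{\C_3})$. By Equation~(\ref{NEQ:23}) (which is just Theorem~\ref{NTHM29:1} specialised to $u=3$), $A_{q-3}(\overline{\C_3})=A_{q-3}(\C(3))$. Since $q=3^m$ with $m\geq2$, Lemma~\ref{nlem20} gives $A_{q-3}(\C(3))=\frac{(q-1)^2q(q+1)}{24}$, and one checks $\frac{(q-1)^2q(q+1)}{24}=\frac{q^4-q^3-q^2+q}{24}$, matching the claimed value of $A_{q-3}(\overline{\C_3})$. Because $\overline{\C_3}$ is an $[n,k,n-k]$ NMDS code with $n=q+2$, $k=5$, the weight distribution formula~(\ref{eqn-DL281}) of Lemma~\ref{lem-DLwtd} expresses every $A_{n-k+s}(\overline{\C_3})$ for $s\in\{1,\dots,5\}$ as an explicit binomial-type expression plus $(-1)^s\binom{5}{s}A_{q-3}(\overline{\C_3})$; substituting the seed value and simplifying yields the remaining six nonzero coefficients $A_{q-2},\dots,A_{q+2}$, and $A_i=0$ for all other $i$ since the minimum weight is $q-3$ and the length is $q+2$.

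The proof is therefore a two-line deduction ``parameters from Theorem~\ref{NTHM18}; seed from Lemma~\ref{nlem20} and Equation~(\ref{NEQ:23}); rest from Lemma~\ref{lem-DLwtd}'', exactly as in Theorem~\ref{thm-sdjoint3}. The only real work is the routine polynomial simplification of the five instances of~(\ref{eqn-DL281}) with $n=q+2$, $k=5$, which I would carry out in the background rather than display; the main thing to be careful about is bookkeeping with the binomial coefficients $\binom{q+2}{5-s}$ and $\binom{q-3+s}{j}$ and getting the alternating signs right, and double-checking via the consistency relation $\sum_{i}A_i(\overline{\C_3})=q^5$. I expect no genuine obstacle, since Lemma~\ref{nlem20} supplies precisely the one piece of data not forced by the NMDS structure.
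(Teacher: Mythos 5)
Your proposal is correct and follows essentially the same route as the paper, which proves this theorem by noting it is "similar to the proof of Theorem \ref{thm-sdjoint3}": parameters and the NMDS property from Theorem \ref{NTHM18} (valid since $3\nmid q+1$ for $q=3^m$), the seed value $A_{q-3}(\overline{\C_3})=A_{q-3}(\C(3))=\frac{(q-1)^2q(q+1)}{24}$ from Equation (\ref{NEQ:23}) and Lemma \ref{nlem20}, and the remaining coefficients from the NMDS weight-distribution formula of Lemma \ref{lem-DLwtd}. No gaps; the only work left is the routine polynomial simplification you describe.
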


\begin{example} 
Let $q=3^2$. Then $\overline{\C_3}$ has parameters $[11, 5, 6]$ and weight enumerator 
\begin{eqnarray*}
1+240z^{6}+  1440z^{7}+  5040z^{8}+ 13880z^{9}+ 22320z^{10}+ 16128z^{11}. 
\end{eqnarray*} 
The dual code $(\overline{\C_3})^\perp$ has parameters $[11, 6, 5]$. Hence, $\overline{\C_3}$ is NMDS in this case. 
\end{example} 

\subsection{The case $u=4$} 

In this subsection, we study the parameters of $\overline{\C_4}$ and its dual. Before studying $\overline{\C_4}$, we need to do some preparations. Let $\C(4)$ denote the dual of the BCH code $\C_{(q, q+1, 4,1)}$, then $\C(4)$ is the cyclic code of length $q+1$ over $\gf(q)$ with check polynomial $\m_{\beta}(x) \m_{\beta^2}(x) \m_{\beta^3}(x)$, where $\beta$ is a primitive $(q+1)$-th root of unity in $\gf(q^2)$. When $q=2^m$ with $m\geq 4$, the parameters of the BCH code $\C_{(q, q+1, 4,1)}$ were studied in \cite{TangDing20}.

\begin{lemma}\label{lem-sdjoint7} \cite{TangDing20}
Let $q=2^m$ with $m \geq 5$ being odd. Then $\C(4)$ is a $[q+1, 6, q-5]$ NMDS code over $\gf(q)$ and  
 $$ 
 A_{q-5}(\C(4))=\frac{(q-1)(q-8)}{30} \binom{q+1}{4}. 
 $$  
\end{lemma}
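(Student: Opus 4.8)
The plan is to realise $\C(4)$ as a trace code and to convert weights into root counts of an explicit sextic. By Delsarte's theorem (as in the proof of Theorem \ref{NTHM29:1}), since the nonzeros of $\C(4)$ are $\beta^{\pm1},\beta^{\pm2},\beta^{\pm3}$, every codeword has the form $\bc(a_1,a_2,a_3)=\big(\tr(a_1x+a_2x^2+a_3x^3)\big)_{x\in U_{q+1}}$ with $a_1,a_2,a_3\in\gf(q^2)$, so $\dim\C(4)=6$. Using $x^q=x^{-1}$ on $U_{q+1}$ and multiplying by $x^3$, the zero coordinates of $\bc(a_1,a_2,a_3)$ are exactly the roots in $U_{q+1}$ of
\[
F_a(x)=a_3x^6+a_2x^5+a_1x^4+a_1^qx^2+a_2^qx+a_3^q,
\]
a conjugate-reciprocal polynomial whose $x^3$-coefficient vanishes. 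As $\deg F_a\le6$ and $F_a\not\equiv0$ for a nonzero codeword, every nonzero codeword has at most $6$ zeros, so $d(\C(4))\ge q-5$ at once.

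Next I would pin down the minimum-weight condition and the NMDS property. A weight-$(q-5)$ codeword corresponds to six distinct $s_1,\dots,s_6\in U_{q+1}$ at which $F_a$ vanishes; as in Lemma \ref{NLEM::17}, these points support a codeword iff the $6\times6$ matrix $\mathbf{M}=(s_l^{\,i})$ with rows $i\in\{-3,-2,-1,1,2,3\}$ is singular, and the generalized-Vandermonde identity
\[
\det\mathbf{M}=\Big(\prod_{l}s_l^{-3}\Big)\,\prod_{i<j}(s_j-s_i)\,e_3(s_1,\dots,s_6)
\]
(the gap at exponent $3$ yields the Schur polynomial $s_{(1,1,1)}=e_3$, the third elementary symmetric function) shows this happens iff $e_3(s_1,\dots,s_6)=0$. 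For the dual $\C_{(q,q+1,4,1)}$, applying the same machinery to $5$ points (via a partial-fraction/Lagrange expansion of $\sum_l c_l/(x-s_l)$ at $0$ and $\infty$) shows that a word of weight $\le5$ forces five distinct points of $U_{q+1}$ with $e_2=e_3=0$. Ruling this out is precisely where the hypothesis that $m$ is odd first enters (for $m$ even, by Lemma \ref{nlem:12} such configurations exist and $d(\C_{(q,q+1,4,1)})=5$); combined with the bound $d(\C_{(q,q+1,4,1)})\ge5$ coming from $4\nmid(q+1)$ (Lemma \ref{nlem:10}) and the weight-$6$ word above, this gives $d(\C_{(q,q+1,4,1)})=6$, and together with $d(\C(4))=q-5$ it yields that $\C(4)$ is NMDS.

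I would then count minimum-weight codewords by a clean multiplicity argument. Fix $S=\{s_1,\dots,s_6\}\subset U_{q+1}$ with $e_3(S)=0$ and set $\pi=\prod_{s\in S}s\in U_{q+1}$. A codeword vanishes exactly on $S$ iff $F_a=a_3\prod_{s\in S}(x-s)$ with $a_3\ne0$; the conjugate-reciprocal constraints then collapse, using the identities $e_{6-k}(S)=\pi\,e_k(S)^{q}$ (valid because $s^{-1}=s^{q}$ on $U_{q+1}$), to the single equation $a_3^{\,q-1}=\pi$. Since $a_3\mapsto a_3^{\,q-1}$ is $(q-1)$-to-one from $\gf(q^2)^\ast$ onto $U_{q+1}$, each admissible $S$ carries exactly $q-1$ minimum-weight codewords and distinct subsets give distinct codewords. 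Hence
\[
A_{q-5}(\C(4))=(q-1)\,N,\qquad N:=\#\{S\subseteq U_{q+1}:|S|=6,\ e_3(S)=0\},
\]
so it remains only to prove $N=\tfrac{q-8}{30}\binom{q+1}{4}$, which is equivalent to the stated value of $A_{q-5}(\C(4))$.

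The hard part is the exact evaluation of $N$, i.e.\ counting six-element subsets of $U_{q+1}$ on which $e_3$ vanishes, and this is where the parity of $m$ enters a second time. I would count ordered tuples: solving $e_3=0$ for the last coordinate expresses it as a rational function of the first five, after which one counts when this value lands in $U_{q+1}$ and stays distinct from the others, handling the degenerate locus $e_2=0$ separately. Equivalently, one counts $\gf(q)$-points of the variety cut out by $e_3=0$ inside $U_{q+1}^6$, which reduces to evaluating quadratic-character and Gauss-type sums over $\gf(q)$; the parity of $m$ governs these evaluations (whether the relevant elements are squares) and produces the factor $q-8$. I expect this character-sum bookkeeping, together with the passage from ordered tuples to unordered subsets and the removal of coincidences, to be the main obstacle; once $N$ is obtained, multiplying by $q-1$ gives $A_{q-5}(\C(4))=\frac{(q-1)(q-8)}{30}\binom{q+1}{4}$, with the remaining weights (not needed here) following from Lemma \ref{lem-DLwtd}.
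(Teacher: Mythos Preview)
The paper does not prove this lemma; it is quoted verbatim from \cite{TangDing20} and used as a black box in Theorem~\ref{thm-sdjoint4}. So there is no in-paper argument to compare against. That said, your outline is essentially the strategy one finds in \cite{TangDing20}: realise $\C(4)$ via Delsarte, identify the zero set of a codeword with the roots in $U_{q+1}$ of a conjugate-reciprocal sextic with vanishing $x^3$-coefficient, deduce $d(\C(4))\ge q-5$, and then show that each admissible $6$-subset $S$ carries exactly $q-1$ minimum-weight codewords via the norm equation $a_3^{\,q-1}=\prod_{s\in S}s$. Your Schur-polynomial identification of the gap determinant with $e_3$ and the reduction $e_{6-k}(S)=\pi\,e_k(S)^q$ are correct, so the factorisation $A_{q-5}(\C(4))=(q-1)\,N$ is on solid ground.

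Two points deserve attention. First, your description of the dual side is slightly off: for five points $s_1,\dots,s_5\in U_{q+1}$, the $6\times5$ matrix with row exponents $\{\pm1,\pm2,\pm3\}$ has rank $\le4$ if and only if $e_2(s_1,\dots,s_5)=0$ alone, not $e_2=e_3=0$. (This is exactly the computation carried out in Lemma~\ref{NLEM2::33} for odd $q$, and the polynomial identities there hold in any characteristic.) So the NMDS part amounts to showing that no $5$-subset of $U_{q+1}$ has $e_2=0$ when $m$ is odd; this is where the parity hypothesis is genuinely used, and it is not automatic from what you have written.

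Second, the evaluation of $N=\#\{S\subseteq U_{q+1}:|S|=6,\ e_3(S)=0\}$ is, as you acknowledge, the crux. Your plan (solve for the sixth point rationally, then count landings in $U_{q+1}$ via character sums depending on the parity of $m$) is the right shape, but this is precisely the part that requires real work in \cite{TangDing20}; without carrying it through you have a reduction rather than a proof. In short: your framework is correct and matches the cited source, but the two substantive steps---nonexistence of $e_2=0$ five-tuples for odd $m$, and the exact count $N=\tfrac{q-8}{30}\binom{q+1}{4}$---remain to be executed.
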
  

\begin{theorem}\label{thm-sdjoint4}
Let $q=2^m$ with $m \geq 4$. Then the following hold. 
\begin{enumerate}
\item If $m$ is even, then $\overline{\C_4}$ has parameters $[q+2, 7, q-5]$ and $(\overline{\C_4})^\perp$ has parameters $[q+2, q-5,6]$. 
\item If $m$ is odd, then $\overline{\C_4}$ is a $[q+2, 7, q-5]$ NMDS code with the following weight distribution: 
\begin{itemize}
\item $A_{0}(\overline{\C_4})= 1$. 
\item $A_{q-5}(\overline{\C_4})= \frac{q^6 - 11q^5 + 25q^4 - 5q^3 - 26q^2 + 16q }{720} $. 
\item $A_{q-4}(\overline{\C_4})=  \frac{q^7 - 11q^6 + 75q^5 - 155q^4 + 24q^3 + 166q^2 - 100q}{720}$. 
\item $A_{q-3}(\overline{\C_4})=  \frac{3q^6 - 17q^5 + 27q^4 + q^3 - 30q^2 + 16q}{48}$. 
\item $A_{q-2}(\overline{\C_4})=  \frac{3q^7 - 7q^6 + 101q^5 - 133q^4 - 64q^3 + 140q^2 - 40q}{144}$. 
\item $A_{q-1}(\overline{\C_4})=  \frac{8q^7 + 47q^6 - 69q^5 + 167q^4 + 141q^3 - 214q^2 - 80q}{144}$. 
\item $A_{q}(\overline{\C_4})=  \frac{45q^7 + 93q^6 + 127q^5 - 255q^4 - 200q^3 + 402q^2 + 28q - 240}{240}$. 
\item $A_{q+1}(\overline{\C_4})=  \frac{264q^7 + 151q^6 - 845q^5 + 415q^4 + 181q^3 - 566q^2 + 400q}{720}$. 
\item $A_{q+2}(\overline{\C_4})=  \frac{53q^7 - 133q^6 + 99q^5 - 19q^4 + 8q^2 - 8q}{144}$. 
\item $A_{i}(\overline{\C_4})= 0$ for other $i$. 
\end{itemize} 	
\end{enumerate}
\end{theorem}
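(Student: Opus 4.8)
The plan is to handle the two cases separately and, in each, to reduce the assertion to results already proved in the paper together with the cited Lemma \ref{lem-sdjoint7}, so that no genuinely new argument is needed beyond a routine computation in the odd case.

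\emph{The even case.} When $m$ is even and $m\ge 4$, write $u=4=p^{s}$ with $p=2$ and $s=2$. Then $s\mid m$ and $s<m$, so Theorem \ref{nthm:13} applies verbatim with this $u$: it gives that $\overline{\C_4}$ has parameters $[q+2,\,2u-1,\,q-2u+3]=[q+2,7,q-5]$ (the alternative value $q-2u+4$ occurs only when $u=2$) and that $(\overline{\C_4})^\perp$ has parameters $[q+2,\,q-2u+3,\,u+2]=[q+2,q-5,6]$. This is exactly part (1), so nothing further is needed there.

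\emph{The odd case.} Assume now $m\ge 5$ is odd (which is what ``$m\ge 4$ odd'' amounts to, so Lemma \ref{lem-sdjoint7} is available); the argument parallels the proof of Theorem \ref{thm-sdjoint3}. First, Lemma \ref{lem-sdjoint7} says $\C(4)$ is a $[q+1,6,q-5]$ NMDS code with $A_{q-5}(\C(4))=\frac{(q-1)(q-8)}{30}\binom{q+1}{4}$; since $\C(4)$ is NMDS, its dual $\C_{(q,q+1,4,1)}$ is AMDS and hence $d(\C_{(q,q+1,4,1)})=(q+1)-(q-5)=6$. Next I transfer these data to $\overline{\C_4}$. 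By Theorem \ref{thm-sdjoin1} and Lemma \ref{lem-extendedCodeParam}, $\overline{\C_4}$ is a $[q+2,7,d]$ code with $d\in\{q-5,q-4\}$; Theorem \ref{NTHM29:1} gives $d(\overline{\C_4})=d(\C(4))=q-5$ and, crucially, $A_{q-5}(\overline{\C_4})=A_{q-5}(\C(4))$; and Theorem \ref{thm-fund21jproj} gives $d((\overline{\C_4})^\perp)=d(\C_{(q,q+1,4,1)})+1=7$. Hence $\overline{\C_4}$ has parameters $[q+2,7,q-5]$ and $(\overline{\C_4})^\perp$ has parameters $[q+2,q-5,7]$; since $d(\overline{\C_4})+d((\overline{\C_4})^\perp)=q+2$, the code $\overline{\C_4}$ is NMDS, so Lemma \ref{lem-DLwtd} applies to it.

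\emph{The weight distribution.} It remains to compute the weights of $\overline{\C_4}$. I would put $n=q+2$ and $k=7$ into Equation \eqref{eqn-DL281}, substitute $A_{n-k}(\overline{\C_4})=A_{q-5}(\overline{\C_4})=\frac{(q-1)(q-8)}{30}\binom{q+1}{4}$, and evaluate the result for $s=1,2,\dots,7$; collecting terms yields $A_{q-4}(\overline{\C_4}),A_{q-3}(\overline{\C_4}),\dots,A_{q+2}(\overline{\C_4})$ as stated, and all remaining $A_i$ vanish because the minimum distance is $q-5$ and the length is $q+2$. Only Equation \eqref{eqn-DL281} is needed, since the one free parameter $A_{n-k}$ in Lemma \ref{lem-DLwtd} has already been pinned down. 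I expect the main obstacle to be purely bookkeeping: the polynomial simplification of the seven expressions coming out of \eqref{eqn-DL281} is long (and is best confirmed with a computer-algebra check), and one must make sure that the two facts the whole argument rests on are in place before Lemma \ref{lem-DLwtd} is invoked, namely $d(\C(4))=q-5$ with the stated value of $A_{q-5}(\C(4))$ from Lemma \ref{lem-sdjoint7}, and $d((\overline{\C_4})^\perp)=7$ \emph{exactly} (not less) from Theorem \ref{thm-fund21jproj} together with the NMDS property of $\C(4)$.
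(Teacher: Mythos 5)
Your proposal is correct and follows essentially the same route as the paper: part (1) directly from Theorem \ref{nthm:13}, and part (2) by combining Lemma \ref{lem-sdjoint7}, Theorem \ref{NTHM29:1} and Theorem \ref{thm-fund21jproj} to get the NMDS property and $A_{q-5}(\overline{\C_4})$, then invoking Lemma \ref{lem-DLwtd} exactly as in the proof of Theorem \ref{thm-sdjoint3}. You merely spell out the intermediate steps (e.g.\ $d(\C_{(q,q+1,4,1)})=6$ and $d((\overline{\C_4})^\perp)=7$) that the paper leaves implicit, and your starting value $\frac{(q-1)(q-8)}{30}\binom{q+1}{4}$ indeed reproduces the stated $A_{q-5}(\overline{\C_4})$.
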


\begin{proof}
1) The first desired result follows directly from Theorem \ref{nthm:13}.

2) With Lemma \ref{lem-DLwtd}, Theorem \ref{NTHM29:1}  and Lemma \ref{lem-sdjoint7}, similar to the proof of Theorem \ref{thm-sdjoint3}, we can prove the second desired result. Details are omitted here.  	
\end{proof}

\begin{example} 
Let $q=2^4$.  Then $\overline{\C_4}$ has parameters $[18, 7, 11]$ and is AMDS. The dual code $(\overline{\C_4})^\perp$ has parameters $[18, 11, 6]$, but is not AMDS. 
\end{example} 

\begin{example} 
Let $q=2^5$. Then $\overline{\C_4}$ has parameters $[34, 7, 27]$ and weight enumerator 
\begin{eqnarray*}
1+ 1014816z^{27}+  34588312z^{28}+  55814880z^{29}+ 686184752z^{30}+ \\ 
2244500192z^{31}+  6875142087z^{32}+ 12784990240z^{33}+ 11677503088z^{34}. 
\end{eqnarray*} 
The dual code $(\overline{\C_4})^\perp$ has parameters $[34, 27, 7]$. Hence, $\overline{\C_4}$ 
is NMDS in this case. 
\end{example} 

Theorem \ref{thm-sdjoint4} documents another family of NMDS codes in this paper. 



Starting from now on, let $q \geq 9$ be an odd prime power. Let $\tr(x)$ denote the trace function from $\gf(q^2)$ to $\gf(q)$. To determine the parameters of $\C_{(q, q+1, 4, 1)}$ and $\C(4)$, we need the following preparations.

Suppose $x_1,x_2,x_3,x_4,x_5$ are five pairwise distinct elements in $U_{q+1}=\{1, \beta, \cdots,\beta^q \}$. Define a $6\times 5$ matrix $\mathbf{M}$ by 
\begin{align*}
\begin{bmatrix}
x_1^{-3} & x_2^{-3} & x_3^{-3} & x_4^{-3} & x_5^{-3}\\
x_1^{-2} & x_2^{-2} & x_3^{-2} & x_4^{-2} & x_5^{-2}\\
x_1^{-1} & x_2^{-1} & x_3^{-1} & x_4^{-1} & x_5^{-1}\\
x_1      & x_2      & x_3      & x_4      & x_5     \\	
x_1^{2}  & x_2^{2}  & x_3^{2}  & x_4^{2}  & x_5^{2}\\
x_1^{3}  & x_2^{3}  & x_3^{3}  & x_4^{3}  & x_5^{3}\\
\end{bmatrix}.
\end{align*}
For any $i\in \{\pm 1, \pm 2,\pm 3\}$, let $\mathbf{M}[i]$ denote the submatrix of $\mathbf{M}$ obtained by deleting the row $(x_1^i, x_2^i, x_3^i, x_4^i, x_5^i)$ of the matrix $\mathbf{M}$. Let $I=\{1,2,3,4,5\}$, the {\it elementary symmetric polynomial} (ESP) of degree $i$ in $5$ variables $x_1,x_2,\cdots, x_5$, written $\sigma_{5,i}$, is defined by 
$$\sigma_{5,i}=\sum_{J\subseteq I, |J|=i} \prod_{j\in J} x_j. $$

\begin{lemma}\label{NLEM2::33}
Let $q\geq 9$ be an odd prime power and $4\mid(q-1)$. Then $d(\C_{(q, q+1, 4,1)})=5$ if and only if there are five pairwise distinct elements $x_1, x_2, x_3, x_4,x_5$ in $U_{q+1}$ such that	$\sigma_{5,2}=0$.
\end{lemma}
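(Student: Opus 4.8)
The plan is to combine a lower bound on $d(\C_{(q, q+1, 4, 1)})$ with an exact rank criterion for the existence of a weight-$5$ codeword, and then to exploit that $U_{q+1}$ is closed under $x \mapsto x^{-1} = x^{q}$.

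First I would note that, since $q$ is odd and $4 \mid (q-1)$, we have $q \equiv 1 \pmod{4}$, hence $4 \nmid (q+1)$; Lemma \ref{nlem:10}(2) with $u = 4$ (legitimate as $q \geq 9$) then gives $d(\C_{(q, q+1, 4, 1)}) \geq 5$. So $d(\C_{(q, q+1, 4, 1)}) = 5$ is equivalent to the existence of a nonzero codeword of weight at most $5$, which --- because $d \geq 5$ --- then has weight exactly $5$. Using the parity-check matrix of $\C_{(q, q+1, 4, 1)}$ with rows $(\beta^{ij})_{j=0}^{q}$, $i \in \{\pm 1, \pm 2, \pm 3\}$, such a codeword exists precisely when, for five pairwise distinct $x_1, \dots, x_5 \in U_{q+1}$ (the values $\beta^{j}$ on its support), the five corresponding columns are $\gf(q)$-linearly dependent, that is, $\mathbf{M}\bx^{T} = \bzero$ has a nonzero solution in $\gf(q)^{5}$; exactly as for $u = 3$ in Lemma \ref{NLEM::17}, \cite[Lemma 16]{LDM21} converts this into $\rank(\mathbf{M}) < 5$.

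The core step is to read off $\rank(\mathbf{M})$ from the ESPs $\sigma_{5,i} = \sigma_{5,i}(x_1, \dots, x_5)$. For each $i$, after scaling the $k$-th column of $\mathbf{M}[i]$ by a suitable power of $x_k$ the minor becomes a generalized Vandermonde determinant, hence factors as a nonzero monomial in the $x_k$ times the nonvanishing Vandermonde $\prod_{1 \leq s < t \leq 5}(x_s - x_t)$ times an explicit symmetric polynomial in the $\sigma_{5,j}$; one gets that $\det(\mathbf{M}[3])$ is a nonzero multiple of $\sigma_{5,2}$, $\det(\mathbf{M}[-3])$ a nonzero multiple of $\sigma_{5,3}$, and the remaining four are nonzero multiples of $\sigma_{5,1}\sigma_{5,2} - \sigma_{5,3}$, $\sigma_{5,2}^{2} - \sigma_{5,1}\sigma_{5,3}$, $\sigma_{5,3}^{2} - \sigma_{5,2}\sigma_{5,4}$ and $\sigma_{5,3}\sigma_{5,4} - \sigma_{5,2}\sigma_{5,5}$. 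Hence $\rank(\mathbf{M}) < 5$ (all six minors vanish) holds if and only if $\sigma_{5,2} = \sigma_{5,3} = 0$. (A quicker route: scaling every column of $\mathbf{M}$ by $x_k^{3}$ turns the row exponents into $\{0, 1, 2, 4, 5, 6\}$ without changing the rank, and a linear relation among the six rows is then a polynomial $f(x) = c_0 + c_1x + c_2x^2 + c_4x^4 + c_5x^5 + c_6x^6$ vanishing at all $x_k$; since the $x_k$ are distinct, $f = (\alpha x + \gamma)\prod_{k=1}^{5}(x - x_k)$ and the coefficient of $x^3$ in $f$ is $\gamma\,\sigma_{5,2} - \alpha\,\sigma_{5,3}$, so the space of relations is $2$-dimensional exactly when $\sigma_{5,2} = \sigma_{5,3} = 0$ and $1$-dimensional otherwise.)

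It remains to drop the condition $\sigma_{5,3} = 0$ when $x_1, \dots, x_5 \in U_{q+1}$. Here I would use $x_k^{-1} = x_k^{q}$ and the elementary identity $\sigma_{5,3}(x_1^{-1}, \dots, x_5^{-1}) = \sigma_{5,2}(x_1, \dots, x_5)/\sigma_{5,5}(x_1, \dots, x_5)$: its left side equals $\sigma_{5,3}(x_1, \dots, x_5)^{q}$ and $\sigma_{5,5}(x_1, \dots, x_5) = \prod_{k} x_k \neq 0$, so $\sigma_{5,2}(x_1, \dots, x_5) = 0$ forces $\sigma_{5,3}(x_1, \dots, x_5)^{q} = 0$, hence $\sigma_{5,3}(x_1, \dots, x_5) = 0$. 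Therefore the existence of five pairwise distinct $x_1, \dots, x_5 \in U_{q+1}$ with $\sigma_{5,2} = 0$ is the same as with $\sigma_{5,2} = \sigma_{5,3} = 0$, and combining the three previous paragraphs yields the stated equivalence. I expect the rank computation of the third paragraph to be the main obstacle: the other ingredients are citations (Lemma \ref{nlem:10}, \cite[Lemma 16]{LDM21}) or one-line symmetric-function manipulations, whereas pinning $\rank(\mathbf{M})$ to exactly the pair of ESP conditions needs either the careful evaluation of all six minors $\det(\mathbf{M}[i])$ or the auxiliary-polynomial argument.
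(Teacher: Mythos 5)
Your proposal is correct, and its skeleton coincides with the paper's: the lower bound $d(\C_{(q,q+1,4,1)})\geq 5$ from Lemma \ref{nlem:10}, the reduction to a nonzero $\gf(q)$-solution of $\mathbf{M}\bx^T=\bzero$, the rank criterion from \cite[Lemma 16]{LDM21}, and finally the evaluation of when $\rank(\mathbf{M})\leq 4$. Where you differ is in that last step, and the difference is worth noting. The paper computes the six $5\times 5$ minors with the relation $x_k^q=x_k^{-1}$ already substituted, so each minor appears as a combination of $\sigma_{5,1},\sigma_{5,2},\sigma_{5,5}$ and their $q$-th powers, and the conclusion $\rank(\mathbf{M})\leq 4\iff\sigma_{5,2}=0$ is read off from $\det(\mathbf{M}[3])$. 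You instead first establish the \emph{generic} condition $\rank(\mathbf{M})\leq 4\iff\sigma_{5,2}=\sigma_{5,3}=0$ (your list of six generic minors is consistent with the paper's after applying $\sigma_{5,i}^q=\sigma_{5,5-i}/\sigma_{5,5}$, so the two computations agree), and only then invoke $U_{q+1}$ to show $\sigma_{5,2}=0$ forces $\sigma_{5,3}=0$. This separation is conceptually cleaner, and your ``quicker route'' --- identifying the left kernel of $\mathbf{M}$ with polynomials $(\alpha x+\gamma)\prod_k(x-x_k)$ whose $x^3$-coefficient $\gamma\sigma_{5,2}-\alpha\sigma_{5,3}$ vanishes --- replaces all six determinant evaluations with a two-line interpolation argument; it also makes transparent why exactly the pair $(\sigma_{5,2},\sigma_{5,3})$ governs the rank. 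The cost is that the generic-minor route, if taken literally, requires the same amount of computation as the paper's; the interpolation argument is the genuinely cheaper path and I would recommend presenting that version.
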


\begin{proof}
Since $4\mid(q-1)$, we get that $4\nmid{(q+1)}$. By Lemma \ref{nlem:10}, we deduce that $d(\C_{(q, q+1, 4, 1)})\geq  5$. Therefore, $d(\C_{(q, q+1, 3, 1)})=5$ if and only if there are five pairwise distinct elements $x_1, x_2, x_3, x_4,x_5$ in $U_{q+1}$ such that the system of homogeneous linear equations 
\begin{equation}\label{NEQ1::17}
    \mathbf{M} \bx^T=\bzero
\end{equation}
has a nonzero solution in $\gf(q)^{5}$. By \cite[Lemma 16]{LDM21}, (\ref{NEQ1::17}) has a nonzero solution in $\gf(q)^{5}$ if and only if $\rank(\mathbf{M})\leq 4$. 

It is easily verified that
\begin{align*}
\det(\mathbf{M}[-3])&=\frac{\prod_{1\leq j<i\leq 5}(x_i-x_j)}{\sigma_{5,5}} \times \sigma_{5,2}^q,\\
\det(\mathbf{M}[-2])&=\frac{\prod_{1\leq j<i\leq 5}(x_i-x_j)}{\sigma_{5,5}} \times (\sigma_{5,1}^q \sigma_{5,2}^q-\sigma_{5,5}^q \sigma_{5,2})	,\\
\det(\mathbf{M}[-1])&=\frac{\prod_{1\leq j<i\leq 5}(x_i-x_j)}{\sigma_{5,5}} \times (\sigma_{5,2}^{2q}-\sigma_{5,5}^q \sigma_{5,1}^q \sigma_{5,2})	,\\
\det(\mathbf{M}[1])&=\frac{\prod_{1\leq j<i\leq 5}(x_i-x_j)}{\sigma_{5,5}^3} \times (\sigma_{5,2}^2-\sigma_{5,5} \sigma_{5,1} \sigma_{5,2}^q),\\
\det(\mathbf{M}[2])&=\frac{\prod_{1\leq j<i\leq 5}(x_i-x_j)}{\sigma_{5,5}^3} \times (\sigma_{5,1} \sigma_{5,2}-\sigma_{5,5} \sigma_{5,2}^q),\\
\det(\mathbf{M}[3])&=\frac{\prod_{1\leq j<i\leq 5}(x_i-x_j)}{\sigma_{5,5}^3} \times \sigma_{5,2}.
\end{align*}
It then follows that $\rank(\mathbf{M})\leq 4$ if and only if $\sigma_{5,2}=0$. This completes the proof. 
\end{proof}

\begin{lemma}\label{nlem30::33}
Let $q\geq 9$ be an odd prime power and $4\mid(q-1)$. Then the following hold.
\begin{enumerate}
\item $\tr(\beta^i)\neq \tr(\beta^j)$ for any $i\neq j$ with $1\leq i, j\leq \frac{q-1}2$.
\item $\tr(\beta^i)\notin  \{0,2,-2\}$ for any $1\leq i\leq \frac{q-1}2$.
\item $\tr(\beta^i)=-1$ with $1\leq i\leq \frac{q-1}2$ if and only if $3\mid(q+1)$ and $i=\frac{q+1}3$.
\end{enumerate}
\end{lemma}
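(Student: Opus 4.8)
The plan is to work entirely inside $\gf(q^2)$ with $\beta$ a primitive $(q+1)$-th root of unity, exploiting the basic fact that for $x\in U_{q+1}$ one has $x^q=x^{-1}$, so that $\tr(x)=x+x^q=x+x^{-1}$. Thus $\tr(\beta^i)=\beta^i+\beta^{-i}$ for every $i$, and the whole lemma becomes a set of elementary statements about the map $i\mapsto \beta^i+\beta^{-i}$ on the range $1\le i\le \frac{q-1}{2}$. I would first record that this map takes values in $\gf(q)$ (clear, since it is fixed by the Frobenius $x\mapsto x^q$), and that $\beta^i\neq \pm 1$ for $i$ in this range because $2i$ is strictly between $0$ and $q+1$ and is not equal to $\frac{q+1}{2}\cdot 2$ only when $i\ne \frac{q+1}{2}$; since $i\le\frac{q-1}{2}<\frac{q+1}{2}$ this is automatic, and for part 2 the case distinctions will use $q$ odd. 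These observations reduce everything to solving $\beta^i+\beta^{-i}=c$ for specific constants $c$.

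For part 1: if $\tr(\beta^i)=\tr(\beta^j)$ then $\beta^i+\beta^{-i}=\beta^j+\beta^{-j}$, i.e.\ $(\beta^i-\beta^j)(1-\beta^{-i-j})=0$. Hence $\beta^i=\beta^j$ or $\beta^{i+j}=1$; the first gives $i\equiv j\pmod{q+1}$, impossible for distinct $i,j$ in $[1,\frac{q-1}{2}]$, and the second gives $i+j\equiv 0\pmod{q+1}$, also impossible since $2\le i+j\le q-1<q+1$. For part 2: $\tr(\beta^i)=0$ forces $\beta^{2i}=-1$, hence $4i\equiv 0$ and $2i\equiv 0\pmod{q+1}$ (using $q$ odd so $\gcd(2,q+1)=2$), which would need $2i\equiv 0$ with $2\le 2i\le q-1$, impossible; $\tr(\beta^i)=\pm 2$ forces $(\beta^i\mp 1)^2=0$, so $\beta^i=\pm1$, already excluded. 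For part 3: $\tr(\beta^i)=-1$ means $\beta^{2i}+\beta^i+1=0$, i.e.\ $\beta^{3i}=1$ with $\beta^i\ne 1$; since $\beta$ has order $q+1$ this says $3i\equiv 0\pmod{q+1}$ and $i\not\equiv 0\pmod{q+1}$, which over the range $1\le i\le\frac{q-1}{2}$ has a solution precisely when $3\mid(q+1)$, and then $i\in\{\frac{q+1}{3},\frac{2(q+1)}{3}\}$; only $i=\frac{q+1}{3}$ lies in $[1,\frac{q-1}{2}]$, giving the stated equivalence.

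The steps are all short; the only mild subtlety is bookkeeping the residue conditions modulo $q+1$ together with the constraint $1\le i\le\frac{q-1}{2}$ and repeatedly invoking that $q$ is odd so that $\gcd(2,q+1)=2$ (which is where the hypothesis $4\mid(q-1)$, equivalently $q\equiv 1\pmod 4$, is actually irrelevant to parts 1--3 and is only carried along from the ambient setup). I expect no real obstacle here; the one place to be careful is part 2 with $c=\pm 2$, where one must genuinely use that we are in odd characteristic so that $x+x^{-1}=\pm 2$ has only the double root $x=\pm1$ and no other solution, and then note $\beta^i=\pm1$ is impossible in the given range as established at the outset.
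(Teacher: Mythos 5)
Most of your argument is sound and is essentially an element-level version of what the paper does with polynomials: your parts 1 and 3, and the $\pm 2$ case of part 2, are correct (the paper instead reads part 1 off the factorization $x^{q+1}-1=(x-1)(x+1)\prod_{i=1}^{(q-1)/2}(x^2-\tr(\beta^i)x+1)$ and handles part 2 via $\gcd(x^{q+1}-1,x^2+1)$). However, there is a genuine error in your treatment of the case $\tr(\beta^i)=0$, and in your closing remark. From $\beta^{2i}=-1$ you correctly get $4i\equiv 0\pmod{q+1}$, but the inference ``hence $2i\equiv 0\pmod{q+1}$ (using $q$ odd so $\gcd(2,q+1)=2$)'' is not justified: $(q+1)\mid 4i$ forces $(q+1)\mid 2i$ only when $4\nmid(q+1)$, i.e.\ precisely when $q\equiv 1\pmod 4$; the mere evenness of $q+1$ gives nothing here. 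Consequently your final claim that the hypothesis $4\mid(q-1)$ ``is actually irrelevant to parts 1--3'' is false. If $q\equiv 3\pmod 4$, take $i=\frac{q+1}{4}$, which satisfies $1\le i\le\frac{q-1}{2}$; then $\beta^{2i}=\beta^{(q+1)/2}=-1$, so $\beta^{-i}=-\beta^{i}$ and $\tr(\beta^i)=0$, and the statement of part 2 fails. So $4\mid(q-1)$ is exactly what makes the zero case work; the paper uses it at the same spot, since $x^{q+1}-1\equiv x^2-1\pmod{x^2+1}$ relies on $(q-1)/2$ being even.

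The repair is one line: since $q\equiv 1\pmod 4$ we have $q+1=2t$ with $t$ odd, so $(q+1)\mid 4i$ gives $t\mid 2i$, hence $t\mid i$ and $(q+1)\mid 2i$, contradicting $\beta^{2i}=-1$. With that correction (and with the statement that only parts 1 and 3 and the $\pm 2$ case are independent of $4\mid(q-1)$), your proof is complete and runs parallel to, though slightly more directly than, the paper's.
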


\begin{proof}
1) It is easily seen that 
\begin{equation*}
x^{q+1}-1=(x-1)(x+1)\prod_{i=1}^{\frac{q-1}2} (x^2-\tr(\beta^i)x+1).
\end{equation*}
Since $\gcd(q+1, q)=1$, we deduce that $x^{q+1}-1$ has no multiple roots. Hence, $\tr(\beta^i)\neq \tr(\beta^j)$ for any $i\neq j$ with $1\leq i,j\leq \frac{q-1}2$. 

2) Since $4\mid(q-1)$, we deduce that 
$$x^{q+1}-1=x^2(x^{q-1}-1)+x^2-1\equiv x^2-1\pmod{x^2+1}.$$
It follows that $\gcd(x^{q+1}-1, x^2+1)=\gcd(x^2-1, x^2+1)=1$. It then follows that $\tr(\beta^i)\neq 0$ for any $1\leq i\leq \frac{q-1}2$. We now prove that $\tr(\beta^i)\notin \{2, -2\}$ for any $1\leq i\leq \frac{q-1}2$. Since 
$$\tr(\beta^{\frac{q+1}2-i})=\tr(-\beta^{-i})=-\tr(\beta^i),$$ to prove the desired result we only need to prove $\tr(\beta^i)\neq 2$ for any $1\leq i\leq \frac{q-1}2$. It is easily verified that $\tr(\beta^i)=2$ with $1\leq i\leq \frac{q-1}2$ if and only if $\beta^{2i}-2\beta^i+1=0$, i.e., $\beta^i=1$ with $1\leq i\leq \frac{q-1}2$, a contradiction. The desired result follows. 

3) It is clear that $\tr(\beta^i)=-1$ if and only if $\beta^{2i}+\beta^{i}+1=0$. Notice that 
$$\beta^{3i}-1=(\beta^{i}-1)(\beta^{2i}+\beta^i+1),$$
and $\beta^i\neq 1$ for any $1\leq i\leq \frac{q-1}2$. Hence, $\beta^{2i}+\beta^{i}+1=0$ with $1\leq i\leq \frac{q-1}2$ if and only if $\beta^{3i}=1$. It follows that $\tr(\beta^i)=-1$ with $1\leq i\leq \frac{q-1}2$ if and only if $3\mid(q+1)$ and $i=\frac{q+1}{3}$. This completes the proof.	
\end{proof}

\begin{lemma}\label{NLEM::34}
Let $q=p^m$ and $4\mid(q-1)$, where $p$ is an odd prime and $m$ is a positive integer. Then the following hold.
\begin{enumerate}
\item For any $j$ with $1\leq j\leq q$, $$\sum_{i=1}^{\frac{q-1}2} \tr(\beta^{ji})=-1-(-1)^j.$$	
\item If $3\mid(q-1)$, 
$$-\sum_{i=1}^{\frac{q-1}2} (\tr(\beta^i)+1)^{-1}=\frac{q-1}{6}\bmod{p},$$
where $i \bmod{p}$ denotes the unique integer $j$ such that $0\leq j\leq p-1$ and $p\mid(i-j)$.  
\item If $3\mid (q+1)$, 
 $$-\sum_{i=1, \atop i\neq \frac{q+1}3}^{\frac{q-1}2} (\tr(\beta^i)+1)^{-1}=\frac{q-1}{2}\bmod{p}.$$ 
\end{enumerate}	
\end{lemma}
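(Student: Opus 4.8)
The plan is to convert every sum into one over the full group $U_{q+1}=\{1,\beta,\dots,\beta^{q}\}$ of $(q+1)$-st roots of unity and to exploit three elementary facts: in characteristic $p$ one has $q\equiv 0$ and $q+1\equiv 1$; one has $\beta^{(q+1)/2}=-1$ (its square is $1$ and it is not $1$); and for $\gamma\in U_{q+1}$ the trace collapses to $\tr(\gamma)=\gamma+\gamma^{q}=\gamma+\gamma^{-1}$, which is invariant under $\gamma\mapsto\gamma^{-1}$. Since the exponents $\pm 1,\pm 2,\dots,\pm\frac{q-1}{2}$ together with $0$ and $\frac{q+1}{2}$ form a complete residue system modulo $q+1$, any sum over $i\in\{1,\dots,\frac{q-1}{2}\}$ of a symmetric function of $\beta^{i}$ and $\beta^{-i}$ is, after accounting for the two ``boundary'' indices $\gamma=1$ and $\gamma=-1$, one half of the corresponding sum over all of $U_{q+1}$.

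For Part 1 I would write $\sum_{i=1}^{(q-1)/2}\tr(\beta^{ji})=\sum_{i=1}^{(q-1)/2}\bigl(\beta^{ji}+\beta^{-ji}\bigr)=\sum_{k=0}^{q}\beta^{jk}-\beta^{0}-\beta^{j(q+1)/2}$ using the residue-system remark above. Then $\beta^{j(q+1)/2}=(-1)^{j}$ since $\beta^{(q+1)/2}=-1$, while $\sum_{k=0}^{q}\beta^{jk}=0$ because $1\le j\le q$ makes $\beta^{j}$ a nontrivial $(q+1)$-st root of unity, so the geometric sum vanishes. This gives $-1-(-1)^{j}$.

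For Parts 2 and 3 the key algebraic identity is $\gamma+\gamma^{-1}+1=\gamma^{-1}(\gamma^{2}+\gamma+1)=\gamma^{-1}(\gamma-\omega)(\gamma-\omega^{2})$, where $\omega$ is a primitive cube root of unity; here $\omega\in\gf(q)$ and $\omega\notin U_{q+1}$ when $3\mid q-1$, whereas $\omega\in U_{q+1}$ and $\omega\notin\gf(q)$ when $3\mid q+1$. Using the partial-fraction expansion $\frac{\gamma}{(\gamma-\omega)(\gamma-\omega^{2})}=\frac{A}{\gamma-\omega}+\frac{B}{\gamma-\omega^{2}}$ with $A=\frac{\omega}{\omega-\omega^{2}}$ and $B=\frac{\omega^{2}}{\omega^{2}-\omega}$, the task reduces to evaluating $\sum_{\gamma}\frac{1}{\gamma-\omega}$ and its analogue with $\omega^{2}$. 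I would obtain these from the logarithmic derivative of $P(x)=x^{q+1}-1=\prod_{\gamma\in U_{q+1}}(x-\gamma)$, namely $\sum_{\gamma\in U_{q+1}}\frac{1}{x-\gamma}=\frac{P'(x)}{P(x)}=\frac{(q+1)x^{q}}{x^{q+1}-1}$. In Part 2, $\omega\notin U_{q+1}$, so I substitute $x=\omega,\omega^{2}$ directly, using $\omega^{q}=\omega$, $\omega^{q+1}=\omega^{2}$ and $q+1\equiv 1\pmod p$; writing $T=\sum_{i=1}^{(q-1)/2}(\tr(\beta^{i})+1)^{-1}$, the symmetrization gives $2T=\sum_{\gamma\in U_{q+1}}(\gamma+\gamma^{-1}+1)^{-1}+\frac{2}{3}$ (the $\frac{2}{3}$ removing the $\gamma=1$ and $\gamma=-1$ terms, whose values are $\frac13$ and $-1$), and simplification via $1+\omega+\omega^{2}=0$ (so $(1-\omega)^{2}=-3\omega$ and $(\omega-\omega^{2})^{2}=-3$) yields $\sum_{\gamma\in U_{q+1}}(\gamma+\gamma^{-1}+1)^{-1}=-\frac13$, hence $T=\frac16$. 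In Part 3, $\omega,\omega^{2}\in U_{q+1}$ and $\gamma+\gamma^{-1}+1$ vanishes exactly at $\gamma=\omega,\omega^{2}$, i.e.\ at the single index $i=\frac{q+1}{3}$ (together with its negative) in the range $1\le i\le\frac{q-1}{2}$ (cf.\ Lemma \ref{nlem30::33}); so $\frac{1}{\gamma-\omega}$ is singular at $\gamma=\omega$ and I would instead use $\sum_{\gamma\in U_{q+1},\,\gamma\ne\omega}\frac{1}{\gamma-\omega}=\frac{P''(\omega)}{2P'(\omega)}$, which is $0$ because $P''(x)=(q+1)q\,x^{q-1}\equiv 0$ in characteristic $p$. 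After also deleting the $\gamma=\omega^{2}$ term, partial fractions give $\sum_{\gamma\in U_{q+1}\setminus\{\omega,\omega^{2}\}}(\gamma+\gamma^{-1}+1)^{-1}=\frac13$; removing the $\gamma=1,-1$ contributions then leaves $2T'=\frac13-\frac13+1=1$, so $T'=\frac12$, where $T'$ denotes the sum in Part 3. Finally, since $q-1\equiv -1\pmod p$ one has $-\frac16=\frac{q-1}{6}\bmod p$ and $-\frac12=\frac{q-1}{2}\bmod p$, which are the stated forms.

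The genuinely routine but error-prone work is the bookkeeping of the correction terms — identifying which elements of $U_{q+1}$ are actually excluded ($\pm 1$ only, in Part 2; $\pm 1$ and both cube roots $\omega,\omega^{2}$ in Part 3, the latter pair being exactly the single omitted index $i=\frac{q+1}{3}$ and its negative) and carrying the small constants through the cyclotomic identities $\omega^{3}=1$, $1+\omega+\omega^{2}=0$ without sign slips. The only structurally important observation is that $(x^{q+1}-1)''\equiv 0$ modulo $p$, which forces the singular sum in Part 3 to collapse to $0$ and, pleasantly, makes the final answer $\omega$-free.
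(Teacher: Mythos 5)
Your proof is correct, but it takes a genuinely different route from the paper's. For Parts 2 and 3 the paper constructs explicit polynomial identities — e.g.\ $x\frac{x^{q+1}-1}{x-1}=(x^2+x+1)\sum_{j=0}^{(q-1)/3}x^{3j}-1$ when $3\mid(q-1)$, and a more intricate B\'ezout-type identity when $3\mid(q+1)$ — to expand each $(\tr(\beta^i)+1)^{-1}$ as an exponential sum $\sum_j \beta^{(3j+1)i}$, and then sums over $i$ by applying Part 1 termwise. You instead factor $\gamma+\gamma^{-1}+1=\gamma^{-1}(\gamma-\omega)(\gamma-\omega^2)$, apply partial fractions, and evaluate via the logarithmic derivative $\sum_{\gamma\in U_{q+1}}(x-\gamma)^{-1}=P'(x)/P(x)$ with $P(x)=x^{q+1}-1$, using $q+1\equiv 1\pmod p$ and, in Part 3, the fact that $P''\equiv 0$ in characteristic $p$ to kill the singular sum. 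I checked your constants: the global sum is $-\tfrac13$ in Part 2 and $\tfrac13$ over $U_{q+1}\setminus\{\omega,\omega^2\}$ in Part 3, the boundary corrections at $\gamma=\pm1$ are $\tfrac13$ and $-1$, and the folding factor of $2$ is handled correctly, giving $T=\tfrac16$ and $T'=\tfrac12$ and hence the stated residues since $q\equiv 0\pmod p$. (One harmless slip: the standard identity is $\sum_{\gamma\ne\omega}\frac{1}{\omega-\gamma}=\frac{P''(\omega)}{2P'(\omega)}$, so your version is off by a sign in general — irrelevant here since $P''(\omega)=0$.) Your approach buys a cleaner, more conceptual computation that avoids guessing the paper's ad hoc polynomial identities, at the cost of working with the cube roots of unity in $\gf(q)$ or $\gf(q^2)$ and tracking which of them lie in $U_{q+1}$; the paper's approach stays entirely inside $\gf(q)[x]$ and reuses Part 1 more directly.
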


\begin{proof}
1) It is easily seen that 
\begin{align*}
\sum_{i=1}^{\frac{q-1}2} \tr(\beta^{ji})&=\sum_{i=1}^{\frac{q-1}2}[ (\beta^j)^i+(\beta^{j})^{q+1-i}]\\
&=\sum_{i=0}^{q} (\beta^j)^i-1-(\beta^j)^{\frac{q+1}2}\\
&=-1-(-1)^j
\end{align*}
for any $1\leq j\leq q$. The desired result follows.  

2) If $3\mid(q-1)$, then 
$$x \frac{x^{q+1}-1}{x-1}=(x^2+x+1)\sum_{j=0}^{\frac{q-1}3}x^{3j} -1.$$
Hence, 
$$(\beta^{2i}+\beta^i+1) \sum_{j=0}^{\frac{q-1}3}\beta^{3ij}=1$$
for each $1\leq i\leq q$. It follows that 
\begin{align*}
(\tr(\beta^i)+1)^{-1}&=\beta^i(\beta^{2i}+\beta^i+1)^{-1}\\
&=\sum_{j=0}^{\frac{q-1}3} \beta^{(3j+1)i}\\
&= \sum_{j=0}^{\frac{q-7}{6}}(\beta^{(3j+1)i} +\beta^{-(3j+1)i})+\beta^{(\frac{q+1}2)i}\\
&=(-1)^i+ \sum_{j=0}^{\frac{q-7}{6}} \tr(\beta^{(3j+1)i}), 
\end{align*}
where the third equation follows from the fact that $\beta^{[3 (\frac{q-1}3-j )+1]i}=\beta^{(q-3j)i}=\beta^{-(3j+1)i}$. Consequently, 
\begin{align*}
-\sum_{i=1}^{\frac{q-1}2}(\tr(\beta^i)+1)^{-1}&=-\sum_{i=1}^{\frac{q-1}2}[(-1)^i+\sum_{j=0}^{\frac{q-7}6}\tr(\beta^{(3j+1)i})]\\
&=-\sum_{i=1}^{\frac{q-1}2}\sum_{j=0}^{\frac{q-7}6}\tr(\beta^{(3j+1)i})\\
&=-\sum_{j=0}^{\frac{q-7}6} \sum_{i=1}^{\frac{q-1}2} \tr(\beta^{(3j+1)i})\\
&=\sum_{j=0}^{\frac{q-7}6}[1+(-1)^{3j+1}]\\
&=\frac{q-1}{6} \bmod{p},
\end{align*}
where the fourth equation follows from Result 1. The desired result follows.

3) If $3\mid(q+1)$, then
$$(x^2+x+1) \sum_{j=0}^{\frac{q-5}3}(2+3j) x^{3j}(x-1)+3\sum_{j=0}^{\frac{q-2}3} x^{3j}=1.$$

Let $i$ be an integer with $1\leq i\leq \frac{q-1}2$ and $i\neq \frac{q+1}3$, then $\sum_{j=0}^{\frac{q-2}3} \beta^{3ij}=0$. It follows that
 $$(\beta^{2i}+\beta^i+1) \sum_{j=0}^{\frac{q-5}3}(2+3j) \beta^{3ij}(\beta^i-1)=1.$$
Consequently, 
\begin{align*}
(\tr(\beta^i)+1)^{-1}&=\beta^i(\beta^{2i}+\beta^i+1)^{-1}\\
&=\sum_{j=0}^{\frac{q-5}3}(2+3j)(\beta^{(2+3j)i}-\beta^{(1+3j)i})\\
&=\sum_{j=0}^{\frac{q-5}3}(2+3j)\beta^{(2+3j)i}-\sum_{j=1}^{\frac{q-2}3}[2+3(\frac{q-2}{3}-j)]\beta^{[1+3(\frac{q-2}{3}-j)]i}\\
&=\sum_{j=0}^{\frac{q-5}3}(2+3j)\beta^{(2+3j)i}+\sum_{j=1}^{\frac{q-2}3}3j \beta^{-(2+3j)i}\\
&=2\sum_{j=0}^{\frac{q-5}3}\beta^{(2+3j)i}+3\sum_{j=1}^{\frac{q-5}3}j  \tr(\beta^{(2+3j)i})-2\beta^i\\
&=-2\tr(\beta^i)+3\sum_{j=1}^{\frac{q-5}3}j  \tr(\beta^{(2+3j)i}),
\end{align*}
where the sixth equation follows from the fact that 
\begin{align*}
\sum_{j=0}^{\frac{q-5}3}\beta^{(2+3j)i}-\beta^i&=\beta^{2i}\frac{ (\beta^{3i})^{\frac{q-2}3}-1}{\beta^{3i}-1}-\beta^i\\
&=\frac{\beta^{qi}-\beta^{2i}}{\beta^{3i}-1}-\beta^i\\
&=-\tr(\beta^i).
\end{align*}
Therefore, 
\begin{align*}
-\sum_{i=1, \atop i\neq \frac{q+1}3}^{\frac{q-1}2}(\tr(\beta^i)+1)^{-1}&=\sum_{i=1, \atop i\neq \frac{q+1}3}^{\frac{q-1}2}[2\tr(\beta^i)-3\sum_{j=1}^{\frac{q-5}3}j  \tr(\beta^{(2+3j)i})]\\
&=\sum_{i=1}^{\frac{q-1}2}[2\tr(\beta^i)-3\sum_{j=1}^{\frac{q-5}3}j  \tr(\beta^{(2+3j)i})]-2\tr(\beta^{\frac{q+1}3})+3\sum_{j=1}^{\frac{q-5}3}j \tr(\beta^{\frac{2(q+1)}3} )\\
&=3\sum_{j=1}^{\frac{q-5}3}j[1+(-1)^{2+3j}]+2-3\sum_{j=1}^{\frac{q-5}3}j\\
&=\frac{q-1}2 \bmod{p}.
\end{align*}
This completes the proof. 
\end{proof}

\begin{lemma}\label{NLEM::35}
Let $q\geq 9$ be an odd prime power, $4\mid(q-1)$ and $3\nmid(q+1)$. Then there exist two distinct integers $i$ and $j$ with $1\leq i, j\leq \frac{q-1}2$ such that $(\tr(\beta^i)+1)(\tr(\beta^j)+1)+1=0$.
\end{lemma}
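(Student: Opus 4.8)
The plan is to produce the two required indices $i,j$ by a counting argument on the number of pairs that satisfy the equation $(\tr(\beta^i)+1)(\tr(\beta^j)+1)+1=0$, exploiting the structure of the trace values already established in Lemmas \ref{nlem30::33} and \ref{NLEM::34}. Write $t_i=\tr(\beta^i)$ for $1\le i\le \frac{q-1}2$. By Result 1 of Lemma \ref{nlem30::33} these $\frac{q-1}2$ values are pairwise distinct, and by Result 2 none of them equals $0$, $2$ or $-2$; moreover, since $3\nmid(q+1)$, Result 3 shows that none equals $-1$ either, so all the quantities $t_i+1$ are nonzero and we may safely invert them. The equation we want is equivalent to asking for $i\ne j$ with $t_j+1 = -(t_i+1)^{-1}$, i.e. $t_j = -(t_i+1)^{-1}-1$.

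First I would set $f(x) = -(x+1)^{-1}-1$, an involution on $\gf(q)\setminus\{-1\}$ with the single fixed point structure to be checked (the fixed-point equation $(x+1)^2 = -1$ has solutions only when $-1$ is a square, which fits the hypothesis $4\mid(q-1)$, so there can be at most two fixed points). The set $S=\{t_1,\dots,t_{\frac{q-1}2}\}$ has size $\frac{q-1}2$; what we must rule out is that $f(S)\cap S$ consists only of the (at most two) fixed points of $f$, equivalently that $f$ maps $S$ almost entirely outside $S$. To get a contradiction from that, I would use the explicit sum evaluated in Result 2 of Lemma \ref{NLEM::34} (valid here since $3\mid(q-1)$ when $3\nmid(q+1)$, as $q\equiv 1\pmod 3$): $-\sum_{i} (t_i+1)^{-1} = \frac{q-1}{6}\bmod p$. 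This identity says that the sum over $S$ of $-(x+1)^{-1}$, hence essentially the sum of $f(x)+1$ over $S$, is a controlled nonzero residue, whereas if $f(S)$ were disjoint from $S$ (up to fixed points) one could compare $\sum_{x\in S} f(x)$ with $\sum_{y\in f(S)} y$ and with the total sum $\sum_{x\in\gf(q)\setminus\{-1\}} x = 1$ over the complement, forcing an inconsistency between these residues mod $p$.

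Concretely, the key steps in order: (i) reduce the statement to finding $x\in S$ with $f(x)\in S$ and $f(x)\ne x$; (ii) record $|S| = \frac{q-1}2$ and that $f$ is an involution of $\gf(q)\setminus\{-1\}$ with at most two fixed points, so $S$ meeting its own $f$-image nontrivially is "expected" once $|S|$ exceeds half of $q-1$; (iii) suppose for contradiction $f(S)\cap S \subseteq \mathrm{Fix}(f)$, so that $S$ and $f(S\setminus\mathrm{Fix}(f))$ are disjoint subsets of the $(q-1)$-element set $\gf(q)\setminus\{-1\}$, which by cardinality forces $f(S\setminus\mathrm{Fix}(f))$ to be essentially the complement of $S$; (iv) sum the identity $x\in\gf(q)\setminus\{-1\}$ and split the sum according to this partition, invoking Lemma \ref{NLEM::34}(2) to evaluate $\sum_{x\in S}(x+1)^{-1} \bmod p$ and the elementary fact $\sum_{x\in\gf(q)}x = 0$ to evaluate $\sum_{x\in S} x$; (v) derive a congruence mod $p$ that cannot hold, e.g. comparing $\frac{q-1}{6}$ with a residue coming from the complementary sum. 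The main obstacle I expect is step (iv)–(v): keeping precise track of the fixed points of $f$ (whether there are $0$ or $2$, and their exact values in terms of a square root of $-1$) and making sure the resulting congruence mod $p$ is genuinely contradictory rather than vacuously satisfiable — this may require a small case split on $p\bmod$ something, or a slightly sharper bookkeeping of which elements of $\gf(q)\setminus\{\pm 1,-1\}$ actually appear as trace values. A fallback, if the global counting is too coarse, is to exhibit $i,j$ explicitly: guess $t_j = f(t_i)$ for a well-chosen $i$ (for instance tied to a small multiplicative order element of $U_{q+1}$) and verify the membership $f(t_i)\in S$ directly using the characterization $t_k = \beta^k+\beta^{-k}$, which turns the condition into a polynomial identity in $\beta$ over $\gf(q^2)$ solvable by the same root-counting style used in Lemmas \ref{NLEM::17} and \ref{NLEM2::33}.
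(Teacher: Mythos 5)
Your strategy---recast the condition as $t_j=f(t_i)$ for the involution $f(x)=-(x+1)^{-1}-1$, force the set of trace values to meet its $f$-image by inclusion--exclusion, and kill the degenerate case where the intersection consists only of fixed points by the sum identities of Lemma \ref{NLEM::34}---is essentially the paper's proof (its $S_1,S_2$ are just your $S+1$ and $f(S)+1$). But there is a genuine gap: you justify invoking Lemma \ref{NLEM::34}(2) by asserting that $3\nmid(q+1)$ gives $q\equiv 1\pmod 3$. That is false in characteristic $3$: e.g.\ $q=9$ and $q=81$ satisfy all hypotheses of the lemma, yet $3\mid q$, so $3\nmid(q-1)$ and the identity $-\sum_i(\tr(\beta^i)+1)^{-1}=\frac{q-1}{6}\bmod p$ is unavailable; your sum-based contradiction therefore does not cover $p=3$ at all. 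The paper closes this with two devices your proposal lacks: it first disposes of $5\mid(q+1)$ (which includes $q=9$) by the explicit pair $i=\frac{q+1}{5}$, $j=\frac{2(q+1)}{5}$, for which $(\tr(\beta^i)+1)(\tr(\beta^j)+1)+1=2\sum_{l=0}^{4}\beta^{l(q+1)/5}=0$; and for $p=3$ with $5\nmid(q+1)$ it shows via $(\tr(\beta^i)+1)^2+1=\beta^{-2i}(\beta^{5i}+1)/(\beta^i+1)$ that there are \emph{no} fixed-point (diagonal) coincidences, so the counting bound alone finishes. Your ``fallback'' of guessing an explicit pair is too vague to substitute for either step.

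A second, smaller defect is the counting itself. With ambient set $\gf(q)\setminus\{-1\}$ you have $|S|=\frac{q-1}{2}$, which is exactly half of $q-1$, not more, so cardinality alone does not prevent $S$ and $f(S)$ from being disjoint, and in your step (iii) the union may miss elements whose sum you cannot control. The forcing comes from exclusions you recorded but did not use: since $\tr(\beta^i)\notin\{0,\pm2\}$ (Result 2 of Lemma \ref{nlem30::33}), both $S$ and $f(S)$ avoid $\{-2,-1,0\}$ (in the paper's shifted picture, $S_1\cup S_2\subseteq\gf(q)\setminus\{-1,0,1\}$), so the union has at most $q-3$ elements and $|S\cap f(S)|\geq 2$; then the only scenario needing the sum argument is when the intersection consists of exactly the two fixed points and the union is all of the $(q-3)$-element set, which is the clean situation in which Lemma \ref{NLEM::34} (for $p>3$, together with $\sum_i\tr(\beta^i)=0$ from its first part, not from $\sum_{x\in\gf(q)}x=0$) yields $\frac{2(q-1)}{3}\equiv 0\pmod p$, a contradiction. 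With these repairs your argument coincides with the paper's for $p>3$; without the characteristic-$3$ treatment it does not prove the lemma as stated.
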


\begin{proof}
If $5\mid(q+1)$, then 
\begin{align*}
(\tr(\beta^{\frac{q+1}5})+1)(\tr(\beta^{\frac{2(q+1)}{5}})+1)+1=2\sum_{i=0}^4 \beta^{\frac{i(q+1)}{5}}=0.
\end{align*}
The desired result follows. Below we suppose $5 \nmid(q+1)$. 

Let 
$$\Delta=\left \{i: \ 1\leq i\leq \frac{q-1}2 \right\},$$ 
then $|\Delta|=\frac{q-1}2$. By Result 3 of Lemma \ref{nlem30::33}, $\tr(\beta^i)\neq -1$ for any $i\in \Delta$. Let 
$$S_1=\left\{\tr(\beta^i)+1: \ i\in \Delta \right\}$$ 
and $S_2=\left\{ -(\tr(\beta^i)+1)^{-1}: \ i\in \Delta \right\}$. By Result 1 of Lemma \ref{nlem30::33}, we deduce that $|S_1|=|S_2|=|\Delta|$. It is clear that 
\begin{equation}\label{NEQ3::18}
|S_1\cap S_2|=|S_1|+|S_2|-|S_1\cup S_2|=2|\Delta|-|S_1\cup S_2|.	
\end{equation}
It follows from Result 2 of Lemma \ref{nlem30::33} that $S_1\cup S_2\subseteq \gf(q)\backslash \{-1, 0, 1\}$. It follows from Equation (\ref{NEQ3::18}) that $|S_1\cap S_2|\geq q-1-(q-3)=2$. Define 
$$N:=\left|\left\{i\in \Delta: (\tr(\beta^i)+1)^2+1=0 \right\}\right|,$$
 then $N\in \{0,1,2\}$. Clearly, to prove the desired conclusion, it suffices to prove $|S_1\cap S_2|\geq 3$ or $N\leq 1$.  
 
 Suppose on the contrary that $|S_1\cap S_2|=2$ and $N=2$, then $S_1 \cup S_2=\gf(q)\backslash \{-1,0,1\}$ and $S_1 \cap S_2=\{a, -a\}$, where $a\in \gf(q)^*$ and $a^2=-1$. We seek a contradiction by considering two cases.

{\it Case 1}: $p=3$. It is easily verified that 
\begin{align*}
(\tr(\beta^i)+1)^2+1&=\beta^{2i}+\beta^{-2i}-\beta^i-\beta^{-i}+1	\\
&=\beta^{-2i}(\beta^{4i}-\beta^{3i}+\beta^{2i}-\beta^{i}+1) \\
&=\beta^{-2i}\left(\frac{\beta^{5i}+1}{\beta^i+1}\right).
\end{align*}
It follows that $(\tr(\beta^i)+1)^2+1=0$ with $i\in \Delta$ if and only if $10\mid(q+1)$ and $i\in \{\frac{q+1}{10},\frac{3(q+1)}{10}\}$. By assumption, $5\nmid(q+1)$. Hence, $N=0$, i.e., we get a contradiction.  

{\it Case 2}: $p>3$. Since $3\nmid(q+1)$, we deduce that $3\mid(q-1)$. Since $S_1\cup S_2=\gf(q)\backslash \{-1,0,1\}$ and $S_1 \cap S_2 =\{ a, -a \}$, we have 
\begin{align}
0=\sum_{s\in S_1\cup S_2} s&=\sum_{s\in S_1} s+\sum_{s\in S_1} s-\sum_{s\in S_1\cap S_2} s \notag \\
&=\sum_{i=1}^{\frac{q-1}2}(\tr(\beta^i)+1)-\sum_{i=1}^{\frac{q-1}2}(\tr(\beta^i)+1)^{-1} \notag \\
&=(\frac{q-1}2+ \frac{q-1}{6}) \bmod{p}, \label{NEQ6::18}
\end{align}
where the last equation follows from Lemma \ref{NLEM::34}. Clearly, 
$$\frac{q-1}2+ \frac{q-1}{6}=\frac{2(q-1)}{3}\not\equiv 0\pmod{p}.$$ Hence, Equation (\ref{NEQ6::18}) is impossible, i.e., we get a contradiction.

Collecting the conclusions in Cases 1 and 2 yields $|S_1\cap S_2|\geq 3$ or $N\leq 1$. This completes the proof. 
\end{proof}

\begin{lemma}\label{NLEM9:11}
Let $q>5$ be an odd prime, $4\mid(q-1)$ and $3\mid(q+1)$. Then there exist two distinct integers $i$ and $j$ with $1\leq i, j\leq \frac{q-1}2$ such that $(\tr(\beta^i)+1)(\tr(\beta^j)+1)+1=0$.
\end{lemma}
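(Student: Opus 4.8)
The plan is to mirror the proof of Lemma~\ref{NLEM::35}, with the extra index $i_0:=\tfrac{q+1}{3}$ handled separately (recall $\tr(\beta^{i_0})=-1$ by Result~3 of Lemma~\ref{nlem30::33}, so $\tr(\beta^{i_0})+1=0$). First I would dispose of the case $5\mid(q+1)$ exactly as in the proof of Lemma~\ref{NLEM::35}: if $\zeta=\beta^{(q+1)/5}$ is a primitive fifth root of unity, then $i=\tfrac{q+1}{5}$ and $j=\tfrac{2(q+1)}{5}$ lie in $\{1,\dots,\tfrac{q-1}{2}\}$, are distinct with distinct traces, and $(\tr(\zeta)+1)(\tr(\zeta^2)+1)+1=2\sum_{k=0}^{4}\zeta^{k}=0$. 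So from now on assume $5\nmid(q+1)$.

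Put $\Delta=\{1,\dots,\tfrac{q-1}{2}\}\setminus\{i_0\}$, so $|\Delta|=\tfrac{q-3}{2}$, and set
\[
S_1=\{\,\tr(\beta^i)+1 : i\in\Delta\,\},\qquad S_2=\{\,-(\tr(\beta^i)+1)^{-1} : i\in\Delta\,\}.
\]
By Results~1--3 of Lemma~\ref{nlem30::33}, $i\mapsto\tr(\beta^i)$ is injective on $\Delta$ and $\tr(\beta^i)+1\notin\{-1,0,1\}$ there, so $|S_1|=|S_2|=\tfrac{q-3}{2}$ and $S_1\cup S_2\subseteq\gf(q)\setminus\{-1,0,1\}$. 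Moreover $S_2=\psi(S_1)$, where $\psi(s)=-s^{-1}$ is the involution of $\gf(q)^{*}$ whose only fixed points are the two square roots $\pm a$ of $-1$ (which exist since $4\mid(q-1)$) and which swaps $1$ and $-1$; in particular $S_1\cup S_2$ is $\psi$-invariant. Finally, Result~1 of Lemma~\ref{NLEM::34} (with $j=1$) gives $\sum_{s\in S_1}s=\tfrac{q-1}{2}$ and Result~3 of Lemma~\ref{NLEM::34} gives $\sum_{s\in S_2}s=\tfrac{q-1}{2}$, so $\sum_{s\in S_1}s+\sum_{s\in S_2}s=q-1\equiv-1\pmod q$.

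Suppose, for contradiction, that no admissible pair exists. For $s\in S_1\cap S_2$ one has $s=\tr(\beta^j)+1$ and $s=-(\tr(\beta^i)+1)^{-1}$ for some $i,j\in\Delta$, so $(\tr(\beta^i)+1)(\tr(\beta^j)+1)+1=0$; this forces $i=j$ and $s^2=-1$, hence $S_1\cap S_2\subseteq\{a,-a\}$. The $\psi$-invariant set $S_1\cup S_2$ misses exactly $2-|S_1\cap S_2|$ elements of $\gf(q)^{*}\setminus\{1,-1\}$, forming a union of $\psi$-orbits. If $|S_1\cap S_2|=0$, then $S_1\sqcup S_2=\gf(q)\setminus\{-1,0,1\}$, whose elements sum to $0$, contradicting $\sum_{S_1}s+\sum_{S_2}s\equiv-1$. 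If $|S_1\cap S_2|=1$, the single missing element is $\psi$-fixed, hence equals $-a$ (since $a\in S_1\cup S_2$), giving $\sum_{S_1}s+\sum_{S_2}s=2a$, so $2a=-1$ and $\tfrac14=a^2=-1$, i.e.\ $q=5$, impossible. If $|S_1\cap S_2|=2$, the two missing elements $b,c$ form a $\psi$-orbit distinct from $\{a,-a\}$, so $c=-b^{-1}$ and $bc=-1$, while $\sum_{S_1}s+\sum_{S_2}s=-(b+c)=-1$ gives $b+c=1$; thus $b,c$ are the roots of $t^2-t-1$, so $5$ is a square in $\gf(q)$. If $q\equiv\pm2\pmod5$ this is false and we are done; combined with $5\nmid(q+1)$, the only remaining case is $5\mid(q-1)$.

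The main obstacle is this last case $5\mid(q-1)$; together with $4\mid(q-1)$ and $3\mid(q+1)$ it forces $q\equiv41\pmod{60}$, hence $q\ge41$. Here I would pass to the explicit descriptions $S_1=\{c\in\gf(q)^{*}:\chi((c-3)(c+1))=-1\}$ and, after substituting $c\mapsto-c^{-1}$, $S_2=\{c\in\gf(q)^{*}:\chi((3c+1)(c-1))=-1\}$, where $\chi$ is the quadratic character of $\gf(q)$; these come from $\{\tr(\beta^i):1\le i\le\tfrac{q-1}{2}\}=\{t:\chi(t^2-4)=-1\}$ and $\tr(\beta^{i_0})=-1$. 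Then $|S_1\cap S_2|$ equals, up to an $O(1)$ correction at the four distinct roots of $(c-3)(c+1)$ and $(3c+1)(c-1)$, the standard double character sum $\tfrac14\sum_{c}\bigl(1-\chi((c-3)(c+1))\bigr)\bigl(1-\chi((3c+1)(c-1))\bigr)$; expanding it and using $\sum_c\chi((c-3)(c+1))=-1$, $\sum_c\chi((3c+1)(c-1))=-\chi(3)$, and the Weil bound $\bigl|\sum_c\chi((c-3)(c+1)(3c+1)(c-1))\bigr|\le3\sqrt q$ for the squarefree quartic yields $|S_1\cap S_2|\ge\tfrac14(q-3\sqrt q)-2$, which is $>2$ for $q\ge41$. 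This contradicts $|S_1\cap S_2|\le2$ and finishes the proof. I expect the delicate point to be exactly this character-sum estimate together with checking that the congruence sieve leaves no small $q$ below the Weil threshold; a fifth-root-of-unity construction valid also for $5\mid(q-1)$, should one exist, would streamline the argument.
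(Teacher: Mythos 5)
Your proof is correct, and up to the final sub-case it follows the paper's own argument closely: the same dispatch of $5\mid(q+1)$ via fifth roots of unity, the same sets $T_1,T_2$ (your $S_1,S_2$) indexed by $\Gamma=\{1,\dots,\frac{q-1}{2}\}\setminus\{\frac{q+1}{3}\}$, the same use of Lemma \ref{NLEM::34} to get $\sum_{T_1}s+\sum_{T_2}s=-1$, the same elimination of $|T_1\cap T_2|\in\{0,1\}$ (your handling of the size-one case, identifying the unique missing element as the $\psi$-fixed point $-a$ and concluding $2a=-1$, is a mild streamlining of the paper's version), and the same reduction of the size-two case to $t^2-t-1=0$, hence to $5$ being a square and so to $5\mid(q-1)$. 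Where you genuinely diverge is in killing this last case $q\equiv 41\pmod{60}$. The paper stays elementary: it observes that $3\notin\{-1,0,1,t_1,t_2\}$, so $3\in T_1\cup T_2$; by Result 2 of Lemma \ref{nlem30::33} it cannot lie in $T_1$, so $3\in T_2$, forcing $3\beta^{2i}+4\beta^{i}+3=0$ for some $i$, and since $20\mid(q-1)$ the discriminant $-20$ is a square, whence $\beta^{i}\in\gf(q)$ --- impossible because $\ord(\beta^i)$ divides $q+1$ and exceeds $2$. You instead rewrite $S_1$ and $S_2$ as quadratic-character conditions on $(c-3)(c+1)$ and $(3c+1)(c-1)$ and apply the Weil bound to the squarefree quartic to force $|S_1\cap S_2|\geq 3$ for $q\geq 41$, which the congruence $q\equiv 41\pmod{60}$ guarantees; your bookkeeping (the sums $-1$ and $-\chi(3)$ for the two quadratics, the correction of at most $2$ at the four distinct roots, and the threshold $q-3\sqrt{q}>16$) checks out. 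Both routes are valid; the paper's trick is shorter and purely elementary, while your character-sum count is heavier machinery but more systematic, and would in principle settle all sufficiently large $q$ without the case distinction on the residue of $q$ modulo $5$.
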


\begin{proof}
 Similar to Lemma \ref{NLEM::35}, if $5\mid(q+1)$, we have 
\begin{align*}
(\tr(\beta^{\frac{q+1}5})+1)(\tr(\beta^{\frac{2(q+1)}{5}})+1)+1=0.
\end{align*}
The desired result follows. Below we suppose $5 \nmid(q+1)$. 

Let 
$$\Gamma=\left \{i: \ 1\leq i\leq \frac{q-1}2, \ i\neq \frac{q+1}3 \right\},$$ then $|\Gamma|=\frac{q-3}2$. By Result 3 of Lemma \ref{nlem30::33}, $\tr(\beta^i)\neq -1$ for any $i\in \Gamma$. Let 
$$T_1=\left\{\tr(\beta^i)+1: \ i\in \Gamma \right\}$$ and
 $T_2=\left\{ -(\tr(\beta^i)+1)^{-1}: \ i\in \Gamma \right\}$. By Result 1 of Lemma \ref{nlem30::33}, we deduce that $|T_1|=|T_2|=|\Gamma|$. Similar to  Lemma \ref{NLEM::35}, we have 
\begin{equation}\label{NEQ3::19}
|T_1\cap T_2|=2|\Gamma|-|T_1\cup T_2|,
\end{equation}
and $T_1 \cup T_2\subseteq  \gf(q)\backslash \{-1, 0, 1\}$. Similar to Lemma \ref{NLEM::35}, to prove the  desired result, it suffices to prove that $|T_1 \cap T_2|\geq 3$ or $$T:=|\{i\in \Gamma: \ (\tr(\beta^i)+1)^2+1=0 \}|\leq 1.$$  

We first prove $|T_1\cap T_2|\geq 2$ by distinguishing the following two cases. 

{\it Case 1}: Suppose $|T_1\cap T_2|=0$, then $T_1 \cup T_2=\gf(q)\backslash \{-1,0,1\}$. By Lemma \ref{NLEM::34}, we obtain that
\begin{align*}
0&=\sum_{t\in T_1\cup T_2} t= \sum_{t \in T_1} t+\sum_{t \in T_2} t\\
&=\sum_{i=1, \atop i\neq \frac{q+1}3}^{\frac{q-1}2}(\tr(\beta^i)+1)-\sum_{i=1, \atop  i\neq \frac{q+1}3}^{\frac{q-1}2}(\tr(\beta^i)+1)^{-1}\\ 
&=\sum_{i=1}^{\frac{q-1}2}(\tr(\beta^i)+1)-\sum_{i=1, \atop i\neq \frac{q+1}3}^{\frac{q-1}2}(\tr(\beta^i)+1)^{-1}=-1,
\end{align*}
a contradiction. 

{\it Case 2}: Suppose $|T_1\cap T_2|=1$, then there is an element $a\in \gf(q)\backslash \{-1, 0, 1\}$ such that $T_1 \cap T_2=\{a\}$. It is easily verified that if $a\in T_1\cap T_2$, then $-a^{-1}\in T_1\cap T_2$. Hence, $a^2=-1$. It follows from Equation (\ref{NEQ3::19}) that $|T_1\cup T_2|=q-4$. Consequently, there is an element $b\in \gf(q)\backslash \{-1, 0, 1\} $ such that $T_1\cup T_2=\gf(q)\backslash \{-1, 0, 1, b\}$. By Lemma \ref{NLEM::34}, we obtain that
$$-b=\sum_{t \in T_1\cup T_2} t=\sum_{t\in T_1} t +\sum_{t\in T_2} t-\sum_{t \in T_1\cap T_2} t=-1-a,$$
i.e., $b=1+a$. We now prove that $-a\in T_1 \cup T_2$. Clearly, to prove the desired result, we only need to prove $1+a\neq -a$. Suppose $1+a=-a$, then $a=\frac{q-1}2$. Consequently, 
\begin{equation}\label{NEQ8:20}
	a^2=(\frac{q-1}2)^2\equiv -1\pmod{q}.
\end{equation}
Suppose $q=4k+1$, by Equation (\ref{NEQ8:20}), we get that $k\equiv 1\pmod{q}$. It follows that $5\equiv 0\pmod{q}$. Consequently, $q=5$, a contradiction. That is to say, $-a\in T_1 \cup T_2$. Notice that $-(-a)^{-1}=-a$, we deduce that $-a\in T_1\cap T_2$, which contradicts the fact that $T_1\cap T_2=\{ a \}$.

Collecting the conclusions in Cases 1 and 2 yields $|T_1 \cap T_2|\geq 2$.

Finally, we prove that $|T_1 \cap T_2|\geq 3$ or $T\leq 1$. Suppose on the contrary that $|T_1 \cap T_2|=2$ and $T=2$, then $T_1 \cap T_2=\{a, -a\}$, where $a \in \gf(q)^*$ and $a^2=-1$. It follows from Equation (\ref{NEQ3::19}) that $|T_1\cup T_2|=q-5$. Consequently, there exist two distinct elements $t_1$ and $t_2$ in $\gf(q)\backslash \{-1, 0, 1\}$ such that $T_1 \cup T_2=\gf(q)\backslash \{-1, 0, 1, t_1,t_2\}$. By Lemma \ref{NLEM::34}, we obtain that
$$-(t_1+t_2)=\sum_{t \in T_1\cup T_2} t=\sum_{t\in T_1} t +\sum_{t\in T_2} t-\sum_{t \in T_1\cap T_2} t=-1.$$ 
Therefore, 
\begin{equation}\label{NEQ8:21}
t_1+t_2=1.	
\end{equation}
It is easily verified that if $t \in T_1\cup T_2$, then $-t^{-1} \in T_1 \cup T_2$. It follows that $t_2=-(t_1)^{-1}$. By Equation (\ref{NEQ8:21}), we deduce that 
\begin{equation}\label{NEQ8:22}
	t_1^2-t_1-1=0.
\end{equation}
It is easily verified that $x^2-x-1$ is reducible over $\gf(q)$ if and only if $5$ is a quadratic residue modulo $q$. By \cite{IR1990}, $5$ is a quadratic residue modulo $q$ if and only if $q\equiv \pm 1\pmod{5}$. We seek a contradiction by considering the following two cases.

{\it Case 1}: $5\nmid(q-1)$. By assumption, $5\nmid(q+1)$. Consequently, $x^2-x-1$ is irreducible over $\gf(q)$. It follows that Equation (\ref{NEQ8:22}) is impossible.   

{\it Case 2}: $5\mid(q-1)$. It is easily verified that $3\notin \{-1,0,1, t_1,t_2\}$. Consequently, $3\in T_1 \cup T_2$. By Result 2 of Lemma \ref{nlem30::33}, $3\notin T_1$. Hence, $3\in T_2$. It follows that there is an integer $i$ with $1\leq i\leq \frac{q-1}2$ such that $\tr(\beta^i)+1=-3^{-1}$. It then follows that $3\beta^{2i}+4\beta^{i}+3=0$. Notice that $20\mid(q-1)$, we deduce that $3x^2+4x+3$ is reducible over $\gf(q)$. Hence, $\beta^i\in \gf(q)$, which contradicts the fact that $1\leq i\leq \frac{q-1}2$.

Collecting the conclusions in Cases 1 and 2 yields $|T_1\cap T_2|\geq 3$ or $T\leq 1$. This completes the proof. 
\end{proof}

The parameters of $\C_{(q, q+1, 4, 1)}$ and $\C(4)$ are documented in the following theorem.

\begin{theorem}\label{NTH2::36}
Let $q\geq 9$ be an odd prime power. Then $\C_{(q, q+1, 4, 1)}$ has parameters $[q+1,q-5, d]$ and $\C(4)$ has parameters $[q+1, 6, q-5]$, where 
\begin{enumerate}
\item if $4\mid(q+1)$, then $d=4$;
\item if $4\mid (q-1)$ and $3\nmid(q+1)$, then $d=5$;
\item if $4\mid(q-1)$ and $3\mid(q+1)$, then $5\leq d\leq 6$, and $d=5$ provided that $5\nmid q$.  
 \end{enumerate}
\end{theorem}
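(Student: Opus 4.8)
The plan is to first fix the dimensions and then reduce everything to one inequality together with a three‑way case analysis of the exact minimum distance. For $q\geq 9$ the $q$‑cyclotomic cosets of $1,2,3$ modulo $q+1$ are $\{1,q\}$, $\{2,q-1\}$, $\{3,q-2\}$; these are pairwise disjoint and each of size $2$, so the generator polynomial $\m_\beta(x)\m_{\beta^2}(x)\m_{\beta^3}(x)$ of $\C_{(q,q+1,4,1)}$ has degree $6$. Hence $\C_{(q,q+1,4,1)}$ is a $[q+1,q-5,d]$ code and its dual $\C(4)$ is a $[q+1,6]$ code. Since $\C(4)\subseteq\C_4$ (the inclusion $\C(u)\subseteq\C_u$ recorded above) and $d(\C_4)=q-5$, while the Singleton bound gives $d(\C(4))\leq q-4$, we have $d(\C(4))\in\{q-5,q-4\}$, the value $q-4$ occurring precisely when $\C(4)$ — equivalently $\C_{(q,q+1,4,1)}$ — is MDS, i.e. when $d(\C_{(q,q+1,4,1)})=7$. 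Thus it is enough to establish (i) $d(\C_{(q,q+1,4,1)})\leq 6$ for every odd $q\geq 9$, which then forces $\C(4):[q+1,6,q-5]$ uniformly, and (ii) the claimed value of $d$ in each of the three cases.

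For (i) I would proceed as in the proofs of Lemmas \ref{NLEM::17} and \ref{NLEM2::33}: a nonzero codeword of weight $\leq 6$ in $\C_{(q,q+1,4,1)}$ exists iff there are six pairwise distinct $x_1,\dots,x_6\in U_{q+1}$ for which the $6\times 6$ matrix with rows $(x_\ell^k)_\ell$, $k\in\{\pm1,\pm2,\pm3\}$, is singular, where one passes from a $\gf(q^2)$‑singular matrix to a nonzero $\gf(q)$‑solution via \cite[Lemma 16]{LDM21} and the fact that the rows occur in conjugate pairs. Pulling out $\prod_\ell x_\ell^{-3}$ and the Vandermonde factor $\prod_{\ell<\ell'}(x_{\ell'}-x_\ell)$, the determinant becomes a nonzero multiple of the third elementary symmetric polynomial $\sigma_{6,3}(x_1,\dots,x_6)$. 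I would take $\{x_1,\dots,x_6\}=\{1,-1,\beta,\beta^{-1},\beta^2,\beta^{-2}\}$, which are pairwise distinct for $q\geq 9$; since $\prod_{\ell=1}^{6}(1+x_\ell t)=(1-t^2)\bigl(1+\tr(\beta)t+t^2\bigr)\bigl(1+\tr(\beta^2)t+t^2\bigr)$, the coefficient of $t^3$, namely $\sigma_{6,3}$, is $0$. So $d(\C_{(q,q+1,4,1)})\leq 6$, and with the BCH bound $d\geq 4$ we get $d\in\{4,5,6\}$.

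For (ii): Case 1 ($4\mid q+1$) is immediate from Lemma \ref{nlem:10}, giving $d=4$. In Cases 2 and 3 we have $4\nmid q+1$, so Lemma \ref{nlem:10} gives $d\geq 5$; combined with (i) this is already the assertion $5\leq d\leq 6$ of Case 3. To obtain $d=5$ — in Case 2, and in Case 3 when $5\nmid q$, which is where Lemma \ref{NLEM9:11} applies — I invoke Lemma \ref{NLEM2::33}: since $4\mid q-1$ here it suffices to produce five pairwise distinct elements of $U_{q+1}$ with $\sigma_{5,2}=0$. A direct expansion gives $\sigma_{5,2}\bigl(\beta^i,\beta^{-i},\beta^j,\beta^{-j},1\bigr)=\bigl(\tr(\beta^i)+1\bigr)\bigl(\tr(\beta^j)+1\bigr)+1$, so I need distinct $i,j$ with $1\leq i,j\leq\frac{q-1}{2}$ and $(\tr(\beta^i)+1)(\tr(\beta^j)+1)+1=0$ — exactly the conclusion of Lemma \ref{NLEM::35} in Case 2 and of Lemma \ref{NLEM9:11} in Case 3. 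For such $i<j\leq\frac{q-1}{2}$ the five elements $\beta^{\pm i},\beta^{\pm j},1$ are pairwise distinct because none of $i,j,2i,2j,i+j$ is $\equiv 0\pmod{q+1}$ and $i\neq j$, so Lemma \ref{NLEM2::33} yields $d=5$.

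The main obstacle is step (ii) for Cases 2 and 3: everything there hinges on the existence of the pair $(i,j)$, i.e. on Lemmas \ref{NLEM::35} and \ref{NLEM9:11}, whose proofs require the delicate cardinality bookkeeping for the sets $\{\tr(\beta^i)+1\}$ and $\{-(\tr(\beta^i)+1)^{-1}\}$ together with the explicit sums of Lemma \ref{NLEM::34}, and where quadratic-residue side conditions are what force the ``$5\nmid q$'' restriction in Case 3. A secondary point needing care is the reduction in (i) to singularity over $\gf(q)$ rather than over $\gf(q^2)$, handled by the conjugate-pair structure of the rows and \cite[Lemma 16]{LDM21}; the two determinant/symmetric-function computations and all distinctness checks are then routine.
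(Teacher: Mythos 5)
Your dimension count, the reduction of $d(\C(4))=q-5$ to ``$\C_{(q,q+1,4,1)}$ is not MDS,'' Case~1, and Case~2 all match the paper, and your uniform upper bound $d\le 6$ via $\sigma_{6,3}(1,-1,\beta,\beta^{-1},\beta^2,\beta^{-2})=0$ is a correct and rather cleaner alternative to the paper's route (which gets $d\le 6$ case by case, in Case~III through the subcode $\C_{(q,q+1,6,1)}\subseteq\C_{(q,q+1,4,1)}$ and $6\mid(q+1)$). The computation $\sigma_{5,2}(\beta^{\pm i},\beta^{\pm j},1)=(\tr(\beta^i)+1)(\tr(\beta^j)+1)+1$ and the use of Lemmas~\ref{NLEM2::33} and \ref{NLEM::35} in Case~2 are exactly as in the paper.

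There is, however, a genuine gap in Case~3: you invoke Lemma~\ref{NLEM9:11} for the field size $q$, but that lemma is stated and proved only for $q$ an odd \emph{prime}. Its proof is intrinsically prime-specific (it argues with congruences modulo $q$ as an integer, e.g.\ $a=\frac{q-1}{2}$, $a^2\equiv-1\pmod q$ forcing $q=5$, and it uses quadratic reciprocity for $5$ modulo the prime $q$), so it cannot be quoted for prime powers such as $q=17^3$, which satisfies $4\mid(q-1)$, $3\mid(q+1)$, $5\nmid q$ and so falls squarely in Case~3. The paper bridges this by a descent to the prime subfield: $3\mid(q+1)$ forces $q=p^m$ with $p\equiv 2\pmod 3$ and $m$ odd, whence $4\mid(q-1)$ gives $4\mid(p-1)$ and $p>5$; one then applies the prime-case lemmas to the length-$(p+1)$ narrow-sense BCH code over $\gf(p)$ with designed distance $4$ to get a weight-$5$ codeword there, and lifts it into $\C_{(q,q+1,4,1)}$ via $x^{i}\mapsto x^{\frac{q+1}{p+1}i}$ (using $\gamma=\beta^{\frac{q+1}{p+1}}$ a primitive $(p+1)$-th root of unity in $\gf(p^2)$). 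Without this descent step, your argument establishes $d=5$ in Case~3 only when $q$ itself is prime; for non-prime $q$ you are left with $5\le d\le 6$. You should either add the descent or prove a version of Lemma~\ref{NLEM9:11} valid for prime powers.
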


\begin{proof}
It is clear that $\C_{(q, q+1, 4, 1)}$ has length $q+1$ and dimension $q-5$, and $\C(4)$ has length $q+1$ and dimension $6$, respectively. Notice that $d(\C(4))\in \{q-5,q-4\}$, and $d(\C(4))=q-4$ if and only if $\C(4)$ is MDS. Therefore, we only need to study the minimum distance of $\C_{(q, q+1, 4, 1)}$. 

{\it Case I}: $4\mid(q+1)$. By Lemma \ref{nlem:10}, we obtain that $d(\C_{(q, q+1, 4, 1)})=4$. The desired result follows.

{\it Case II}: $4\mid(q-1)$ and $3\nmid(q+1)$. By Lemma \ref{NLEM::35}, there exist two distinct integers $i$ and $j$ with $1\leq i, j\leq \frac{q-1}2$ such that $(\tr(\beta^i)+1)(\tr(\beta^j)+1)+1=0$. Let $x_1=\beta^i$, $x_2=\beta^{-i}$, $x_3=\beta^j$, $x_4=\beta^{-j}$ and $x_5=1$, then $|\{x_1,x_2,x_3,x_4, x_5\}|=5$. It is easily verified that
\begin{align*}
\sigma_{5,2}&= x_1x_2+(x_1+x_2)(x_3+x_4)+x_3x_4+(x_1+x_2+x_3+x_4)x_5	\\
&=2+\tr(\beta^i) \tr(\beta^j)+\tr(\beta^i) +\tr(\beta^j) \\
&=(\tr(\beta^i)+1)(\tr(\beta^j)+1)+1=0.
\end{align*}
By Lemma \ref{NLEM2::33}, we deduce that $d(\C_{(q, q+1, 4, 1)})=5$.

{\it Case III}: $4\mid(q-1)$ and $3\mid(q+1)$. In this case, $6\mid(q+1)$. It is clear that $$\C_{(q, q+1, 6, 1)}\subseteq \C_{(q, q+1, 4, 1)}.$$ Consequently, $d(\C_{(q, q+1, 4, 1)} )\leq d(\C_{(q, q+1, 6, 1)})$. By Lemma \ref{nlem:10}, we deduce that 
$$d(\C_{(q, q+1, 4, 1)} )\in \{5,6 \}.$$
Below we prove that $d(\C_{(q, q+1, 4, 1)} )=5$ for $5\nmid q$. 

Suppose $q=p^m$, where $p>5$ is an odd prime and $m$ is a positive integer. Since $3\mid(q+1)$, we deduce that $m$ is odd. Let $\gamma=\beta^{\frac{q+1}{p+1}}$, then $\gamma$ is a primitive $(p+1)$-th root of unity in $\gf(p^2)$. Let $\C$ be the cyclic code of length $p+1$ over $\gf(p)$ with generator polynomial $\m_{\gamma}(x)\m_{\gamma^2}(x)\m_{\gamma^3}(x)$, then $\C$ is the narrow-sense BCH code of length $p+1$ over $\gf(p)$ with designed distance $4$. Since $4\mid(q-1)$ and $m$ is odd, we get that $4\mid(p-1)$. Similar to Case II, we can prove that $d(\C)=5$. Consequently, there are five pairwise distinct integers $i_1,i_2,i_3,i_4, i_5$ with $0\leq i_j\leq p$ and $c_i\in \gf(p)^*$ such that $$c_1x^{i_1}+c_2x^{i_2}+c_3x^{i_3}+c_4x^{i_4}+c_5x^{i_5}\in \C.$$ 
It is easily verified that
$$c_1x^{{\frac{q+1}{p+1}}i_1}+c_2x^{\frac{q+1}{p+1}i_2}+c_3x^{\frac{q+1}{p+1}i_3}+c_4x^{\frac{q+1}{p+1}i_4}+c_5x^{\frac{q+1}{p+1}i_5}\in \C_{(q, q+1, 4,1)}.$$
 It then follows that $d(\C_{(q, q+1, 4, 1)})=5$. 
 
Collecting all the conclusions in Cases I, II and III, we complete the proof of this theorem.
\end{proof}

The parameters of $\overline{\C_4}$ and $(\overline{\C_4})^\perp$ are documented in the following theorem.

\begin{theorem}
Let $q\geq 9$ be an odd prime power. Then $\overline{\C_4}$ has parameters $[q+2,7, q-5]$ and $(\overline{\C}_4)^\perp$ has parameters $[q+2, q-5, d]$, where 
\begin{enumerate}
\item if $4\mid(q+1)$, then $d=5$;
\item if $4\mid (q-1)$ and $3\nmid(q+1)$, then $d=6$;
\item if $4\mid(q-1)$ and $3\mid(q+1)$, then $6\leq d\leq 7$, and $d=6$ provided that $5\nmid q$. 
 \end{enumerate}	
\end{theorem}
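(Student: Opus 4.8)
The plan is to read this theorem off from the earlier structural results, with essentially no new computation: the statement about $\overline{\C_4}$ follows from Theorem~\ref{NTHM29:1} together with the parameters of $\C(4)$ recorded in Theorem~\ref{NTH2::36}, and the statement about $(\overline{\C_4})^\perp$ follows from Theorem~\ref{thm-fund21jproj} together with the minimum distance of $\C_{(q,q+1,4,1)}$ recorded in the same theorem.

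First I would fix the ambient parameters. Since $q\geq 9$ we have $4\leq\lfloor\frac{q+1}2\rfloor$, so Theorem~\ref{thm-sdjoin1} applied with $u=4$ shows that $\C_4$ is a $[q+1,7,q-5]$ MDS cyclic code over $\gf(q)$. By Lemma~\ref{lem-extendedCodeParam}, $\overline{\C_4}$ is then a $[q+2,7,\overline d]$ code with $\overline d\in\{q-5,q-4\}$, and hence $(\overline{\C_4})^\perp$ has length $q+2$ and dimension $(q+2)-7=q-5$. To pin down $\overline d$ I would invoke Theorem~\ref{NTHM29:1} with $u=4$, which gives $d(\overline{\C_4})=d(\C(4))$; by Theorem~\ref{NTH2::36} the code $\C(4)$ has parameters $[q+1,6,q-5]$ for every odd prime power $q\geq 9$, so $d(\C(4))=q-5$ and therefore $d(\overline{\C_4})=q-5$, i.e. $\overline{\C_4}$ is a $[q+2,7,q-5]$ code. (Equivalently, if $\overline{\C_4}$ were MDS then $(\overline{\C_4})^\perp$ would be an $[q+2,q-5,8]$ MDS code, which contradicts the bound $d((\overline{\C_4})^\perp)\leq 7$ obtained below.)

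Next I would determine $d:=d((\overline{\C_4})^\perp)$. Since $4\leq\lfloor\frac{q+1}2\rfloor$, Theorem~\ref{thm-fund21jproj} applies with $u=4$ and gives $d=d(\C_{(q,q+1,4,1)})+1$, so it only remains to substitute the three cases of Theorem~\ref{NTH2::36}. If $4\mid(q+1)$, then $d(\C_{(q,q+1,4,1)})=4$, whence $d=5$. If $4\mid(q-1)$ and $3\nmid(q+1)$, then $d(\C_{(q,q+1,4,1)})=5$, whence $d=6$. If $4\mid(q-1)$ and $3\mid(q+1)$, then Theorem~\ref{NTH2::36} yields $5\leq d(\C_{(q,q+1,4,1)})\leq 6$, hence $6\leq d\leq 7$, and it further gives $d(\C_{(q,q+1,4,1)})=5$, hence $d=6$, whenever $5\nmid q$. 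Since an odd $q$ satisfies exactly one of $4\mid(q+1)$, or $4\mid(q-1)$ with $3\nmid(q+1)$, or $4\mid(q-1)$ with $3\mid(q+1)$, this covers all cases and finishes the proof.

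I do not expect any real obstacle here: all the difficulty has already been absorbed into Theorem~\ref{NTH2::36}, whose proof rests on the rank/elementary-symmetric-polynomial criterion of Lemma~\ref{NLEM2::33} and the trace-sum identities of Lemmas~\ref{nlem30::33}, \ref{NLEM::34}, \ref{NLEM::35} and \ref{NLEM9:11}. The only genuinely undetermined point -- the reason case~(3) can only be stated as the range $6\leq d\leq 7$ -- is the exact value of $d(\C_{(q,q+1,4,1)})$ when $4\mid(q-1)$, $3\mid(q+1)$ and $q$ is a power of $5$ (the smallest such value being $q=125$); resolving it would require deciding whether, in that residue class, there exist five pairwise distinct $(q+1)$-th roots of unity in $\gf(q^2)$ whose second elementary symmetric polynomial vanishes, which the present techniques leave open.
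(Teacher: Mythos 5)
Your proposal is correct and follows essentially the same route as the paper's own proof: deduce the length and dimension from Theorem~\ref{thm-sdjoin1} and Lemma~\ref{lem-extendedCodeParam}, get $d(\overline{\C_4})=d(\C(4))=q-5$ from Theorems~\ref{NTHM29:1} and~\ref{NTH2::36}, and get $d((\overline{\C_4})^\perp)=d(\C_{(q,q+1,4,1)})+1$ from Theorem~\ref{thm-fund21jproj}, reading off the three cases from Theorem~\ref{NTH2::36}. Your closing remark correctly identifies that the only unresolved point (the range $6\leq d\leq 7$ when $4\mid(q-1)$, $3\mid(q+1)$ and $q$ is a power of $5$) is exactly the conjecture left open in the paper.
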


\begin{proof}
By Theorem \ref{thm-sdjoin1}, $\C_{4}$ has parameters $[q+1, 7, q-5]$. Then $\overline{\C_4}$ has length $q+2$ and dimension $7$. Consequently, $(\overline{\C_4})^\perp$ has length $q+2$ and dimension $q-5$. By Theorems \ref{NTHM29:1} and \ref{NTH2::36}, we obtain that $d(\overline{\C_4})=d(\C(4))=q-5$. By Theorems \ref{thm-fund21jproj} and \ref{NTH2::36}, we deduce that
$$d((\overline{\C_4})^\perp)=d(\C_{(q, q+1,4,1)})+1.$$ The desired conclusions on the minimum distance of $(\overline{\C_4})^\perp$ follow from Theorem \ref{NTH2::36}. This completes the proof. 
\end{proof} 

\begin{example}
Let $q=5^2$. Then the following hold.
\begin{enumerate}
	\item The code $\C_4$ has parameters $[26,7,20]$ and $(\C_4)^\perp$ has parameters $[26,19,8]$.
	\item The BCH code $\C_{(q, q+1, 4, 1)}$ has parameters $[26, 20, 5]$ and $\C(4)$ has parameters $[26,6,20]$.
	\item The code $\overline{\C_4}$ has parameters $[27,7,20]$ and $(\overline{\C_4})^\perp$ has parameters $[27,20,6]$.
\end{enumerate}	
\end{example}

For the case $u=4$, there is only one case where the minimum distance of $(\overline{\C_u})^\perp$ remains open, but we have the following conjecture.

\begin{conj}
	Let $q=5^m$ with $m\geq 3$ being odd, then $d(\C_{(q, q+1, 4, 1)})=5$ and $d((\overline{\C_4})^\perp)=6$.
\end{conj}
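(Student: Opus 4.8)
The plan is to reduce the conjecture to a single existence statement over $\gf(q)$ and then settle it by an explicit quadratic character sum that exploits $p=5$. Since $q=5^m$ with $m$ odd satisfies $4\mid(q-1)$ and $3\mid(q+1)$, Theorem~\ref{NTH2::36}(3) already gives $d(\C_{(q,q+1,4,1)})\in\{5,6\}$, and Theorem~\ref{thm-fund21jproj} gives $d((\overline{\C_4})^\perp)=d(\C_{(q,q+1,4,1)})+1$; so it suffices to prove $d(\C_{(q,q+1,4,1)})=5$, whence $d((\overline{\C_4})^\perp)=6$. As in Case~II of the proof of Theorem~\ref{NTH2::36}, for distinct $i,j\in\{1,\dots,\frac{q-1}2\}$ the elements $\beta^i,\beta^{-i},\beta^j,\beta^{-j},1$ of $U_{q+1}$ are pairwise distinct with $\sigma_{5,2}=(\tr(\beta^i)+1)(\tr(\beta^j)+1)+1$, so by Lemma~\ref{NLEM2::33} it is enough to exhibit distinct $i,j\in\{1,\dots,\frac{q-1}2\}$ with $(\tr(\beta^i)+1)(\tr(\beta^j)+1)=-1$. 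This is the conclusion of Lemma~\ref{NLEM9:11}, but that lemma requires $q$ prime, so its proof has to be rebuilt for $q=5^m$.

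Following the strategy of Lemma~\ref{NLEM9:11}, set $\Gamma=\{1,\dots,\frac{q-1}2\}\setminus\{\frac{q+1}3\}$, $T_1=\{\tr(\beta^i)+1:i\in\Gamma\}$ and $T_2=\{-(\tr(\beta^i)+1)^{-1}:i\in\Gamma\}$. By Lemma~\ref{nlem30::33} the map $i\mapsto\tr(\beta^i)$ is injective on $\Gamma$ and avoids $\{0,2,-2\}$ there, so $|T_1|=|T_2|=\frac{q-3}2$, $T_1\cup T_2\subseteq\gf(q)\setminus\{-1,0,1\}$, and the involution $f(v)=-v^{-1}$ interchanges $T_1$ and $T_2$. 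A valid pair $i\neq j$ exists as soon as $T_1\cap T_2$ contains some $v$ with $v^2\neq-1$. The characteristic-$5$ structure is decisive here: the two square roots of $-1$ in $\gf(q)$ are $2$ and $3$, which lie in $\gf(5)$, so an element $v\in T_1$ with $v^2=-1$ would force $\tr(\beta^i)\in\{1,2\}$ for the index $i$ realizing $v$; since $\tr(\beta^i)\neq2$, injectivity leaves at most one such $v$. Hence it suffices to show $|T_1\cap T_2|\geq2$.

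To compute $|T_1\cap T_2|$ I would pass to the quadratic character $\chi$ of $\gf(q)$. Since $\{\tr(\beta^i):1\le i\le\frac{q-1}2\}$ is exactly the set of $t$ with $x^2-tx+1$ irreducible over $\gf(q)$, i.e. with $\chi((t-2)(t+2))=-1$, and $\tr(\beta^{(q+1)/3})=-1$, one obtains $T_1=\{s\in\gf(q)^\ast:\chi((s-3)(s+1))=-1\}$; substituting $s=-v^{-1}$ with $v=\tr(\beta^i)+1$ and simplifying the resulting quadratic over $\gf(5)$ gives $T_2=\{s\in\gf(q)^\ast:\chi((s-3)(s-1))=1\}$. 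Then $4\,|T_1\cap T_2|=\sum_{s\in\gf(q)}\big(1-\chi((s-3)(s+1))\big)\big(1+\chi((s-3)(s-1))\big)$ up to a bounded correction at the finitely many $s$ where $s$ or a linear factor vanishes. Expanding, the only character sums that occur are $\sum_s\chi((s-3)(s+1))=-1$, $\sum_s\chi((s-3)(s-1))=-1$, and $\sum_s\chi\big((s-3)^2(s^2-1)\big)=\sum_{s\neq3}\chi(s^2-1)=0$; since $-2=3$ in $\gf(5)$, the degree-$4$ factor equals $(s-3)^2(s^2-1)$, a perfect square times $s^2-1$, so no Weil bound is needed. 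This yields $|T_1\cap T_2|=\frac{q-1}4$, which is $\geq2$ for every $q=5^m$ with $m\geq3$; the borderline $q=5$, where $\frac{q-1}4=1$, is genuinely excluded, since there $\C_{(5,6,4,1)}$ is the repetition code and $d=6$.

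The step I expect to be the main obstacle is the bookkeeping in this translation: pinning down the shifts $s=t+1$ and $s=-(t+1)^{-1}$, verifying the $\gf(5)$-identities that make the two quadratics defining $T_1$ and $T_2$ share the factor $s-3$, and carefully removing the exceptional $s$ from the character-sum identity so that $|T_1\cap T_2|=\frac{q-1}4$ comes out exactly. The remaining pieces — the reductions, the injectivity and forbidden-value facts for traces, and the degeneration of the quartic in characteristic $5$ — are routine.
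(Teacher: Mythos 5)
The statement you are proving is left as an open conjecture in the paper, so there is no proof of record to compare against; what you propose is a genuine resolution, and I believe it is correct. The paper's own method for the case $4\mid(q-1)$, $3\mid(q+1)$ (Case III of Theorem \ref{NTH2::36}) descends to the prime subfield and invokes Lemma \ref{NLEM9:11}, which is proved only for primes $q>5$; that route is structurally unavailable for $q=5^m$ because the prime subfield is $\gf(5)$ itself, where the statement fails. Your replacement — an exact quadratic-character count of $|T_1\cap T_2|$ combined with the observation that in characteristic $5$ the square roots of $-1$ are $2$ and $-2$, so Lemma \ref{nlem30::33}(2) leaves at most one $v\in T_1$ with $v^2=-1$ — is a cleaner and sharper argument than the inclusion--exclusion used in Lemmas \ref{NLEM::35} and \ref{NLEM9:11}, and it correctly reduces the whole problem to $|T_1\cap T_2|\geq 2$.

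I checked the bookkeeping you flagged as the main risk, and it closes exactly as you predicted. With $T_1=\{s\in\gf(q)^*:\chi((s-3)(s+1))=-1\}$ (the excluded index $i=\frac{q+1}{3}$ corresponds precisely to $s=0$, and $\chi(-3)=\chi(2)=-1$ since $m$ is odd, so the restriction $s\neq 0$ is genuinely needed) and $T_2=\{s\in\gf(q)^*:\chi((s-3)(s-1))=1\}$ (using $(-s^{-1}-3)(-s^{-1}+1)=-s^{-2}(3s+1)(s-1)$, $\chi(-1)=1$, $\chi(3)=-1$, $3^{-1}=2=-3$), one gets
\begin{align*}
4\,|T_1\cap T_2|&=\sum_{s\notin\{0,\pm1,3\}}\bigl(1-\chi((s-3)(s+1))\bigr)\bigl(1+\chi((s-3)(s-1))\bigr)\\
&=(q-4)+1+1+1=q-1,
\end{align*}
where each of the three correction terms evaluates to $1$ using $\sum_s\chi(s^2+bs+c)=-1$ and $\chi(2)=\chi(3)=-1$, $\chi(1)=\chi(-1)=1$. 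Hence $|T_1\cap T_2|=\frac{q-1}{4}\geq 31$ for $m\geq 3$, which also correctly isolates $q=5$ as the degenerate case. The resulting pair $i\neq j$ with $(\tr(\beta^i)+1)(\tr(\beta^j)+1)=-1$ feeds into Lemma \ref{NLEM2::33} exactly as in Case II of Theorem \ref{NTH2::36}, giving $d(\C_{(q,q+1,4,1)})=5$ and, via Theorem \ref{thm-fund21jproj}, $d((\overline{\C_4})^\perp)=6$. The only remaining work is to write out these verifications in full; I found no gap.
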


\subsection{The case $u=\lfloor \frac{q-1}2\rfloor$}  

Let $q>5$ be a prime power and $u=\lfloor \frac{q-1}2\rfloor$. Our task in this subsection is to settle the parameters of the code $\overline{\C_{u}}$ and its dual $(\overline{\C_{u}})^\perp$. First of all, we have the following result. 

\begin{corollary}
Let $q>5$ be an odd prime power and 	$u=\frac{q-1}2$. Then $\overline{\C_u}$ has parameters $[q+2, q-2, 4]$ and $(\overline{\C_u})^\perp$ has parameters $[q+2, 4, \frac{q+3}2]$. 
\end{corollary}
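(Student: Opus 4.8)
The strategy is to reduce both parameter computations to the minimum distance $d_1 := d(\C_{(q, q+1, u, 1)})$ of the narrow-sense BCH code with $u = \frac{q-1}{2}$, using Theorems \ref{thm-fund21jproj} and \ref{NTHM29:1}. First I would assemble the basic data. By Theorem \ref{thm-sdjoin1}, $\C_u$ is a $[q+1, q-2, 4]$ MDS code, so by Lemma \ref{lem-extendedCodeParam} the code $\overline{\C_u}$ is $[q+2, q-2, \bar d]$ with $\bar d \in \{4, 5\}$, and hence $(\overline{\C_u})^\perp$ has dimension $4$. Recall $\C(u) = (\widetilde{\C_u^\perp})^\perp = \C_{(q,q+1,u,1)}^\perp$, so $\C(u)$ is $[q+1, q-3, d_2]$ with $d_2 \in \{4, 5\}$ while $\C_{(q, q+1, u, 1)}$ is $[q+1, 4, d_1]$. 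Since $2 \le u \le \lfloor \frac{q+1}{2}\rfloor$ for odd $q > 5$, Theorem \ref{thm-fund21jproj} gives $d((\overline{\C_u})^\perp) = d_1 + 1$ and Theorem \ref{NTHM29:1} gives $d(\overline{\C_u}) = d(\C(u)) = d_2$.

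The key step is to show $d_1 = \frac{q+1}{2}$. Since the $q$-cyclotomic cosets modulo $q+1$ are the pairs $\{i, -i\}$, the defining set of $\C_{(q,q+1,u,1)}$ is $\{1, \dots, q\}\setminus\{\frac{q-1}{2}, \frac{q+1}{2}, \frac{q+3}{2}\}$. For the upper bound, I would exhibit the explicit word $\bc = (c_0, \dots, c_q)$ with $c_\ell = 1 + (-1)^\ell$, noting that $(-1)^\ell = \beta^{(q+1)\ell/2}$. Using $\sum_{\ell=0}^{q}\beta^{k\ell} = 1$ in $\gf(q)$ when $(q+1)\mid k$ and $0$ otherwise, one finds that $\bc$ evaluated at $\beta^j$ vanishes unless $j \in \{0, \frac{q+1}{2}\}$; since neither $0$ nor $\frac{q+1}{2}$ lies in the defining set, $\bc \in \C_{(q,q+1,u,1)}$. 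As $q$ is odd, $c_\ell = 2 \ne 0$ at the $\frac{q+1}{2}$ even indices and $c_\ell = 0$ at the odd ones, so $\wt(\bc) = \frac{q+1}{2}$ and $d_1 \le \frac{q+1}{2}$. For the lower bound, writing $q = 2u+1$, the divisibility $u \mid (q+1) = 2(u+1)$ forces $u \mid 2$, hence $q \in \{3,5\}$; so for $q > 5$ we have $u \nmid (q+1)$, and Lemma \ref{nlem:10}(2) yields $d_1 \ge u+1 = \frac{q+1}{2}$. Thus $d_1 = \frac{q+1}{2}$ and $d((\overline{\C_u})^\perp) = \frac{q+3}{2}$.

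It remains to pin down $d_2$, and I would do this by contradiction: if $d_2 = 5$ then $\C(u)$ is a $[q+1, q-3, 5]$ MDS code, so its dual $\C_{(q,q+1,u,1)}$ is a $[q+1, 4, q-2]$ MDS code; but $d_1 = \frac{q+1}{2} < q-2$ for $q > 5$, a contradiction. Hence $d_2 = 4$, so $d(\overline{\C_u}) = 4$, and therefore $\overline{\C_u}$ has parameters $[q+2, q-2, 4]$ while $(\overline{\C_u})^\perp$ has parameters $[q+2, 4, \frac{q+3}{2}]$.

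The one step that takes a little care is finding the weight-$\frac{q+1}{2}$ codeword of $\C_{(q,q+1,u,1)}$ and verifying that its spectrum is supported exactly on the two frequencies $0$ and $\frac{q+1}{2}$ that were excised from the BCH defining set; everything else is routine bookkeeping with the general results established earlier in this section.
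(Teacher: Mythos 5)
Your proof is correct. The paper's own proof is a two-line reduction: it checks $u\nmid(q+1)$ (since $q+1\equiv 2\pmod u$ and $u>2$) and $(u+1)\mid(q+1)$, then invokes Theorem \ref{nthm:11}(2); your argument is, in effect, a re-derivation of that theorem in this special case, and it follows the same skeleton (reduce everything to $d_1=d(\C_{(q,q+1,u,1)})$ via Theorem \ref{thm-fund21jproj}, get the lower bound $d_1\geq u+1$ from Lemma \ref{nlem:10}(2), and rule out the MDS case to force $d(\overline{\C_u})=4$). The one genuinely different ingredient is your upper bound $d_1\leq \frac{q+1}{2}$: you exhibit the explicit codeword $c_\ell=1+(-1)^\ell$ and verify its spectrum is supported on the two non-roots $0$ and $\frac{q+1}{2}$, whereas the paper uses the containment $\C_{(q,q+1,u+1,1)}\subseteq \C_{(q,q+1,u,1)}$ together with Lemma \ref{nlem:10}(1) applied to $u+1=\frac{q+1}{2}$, which divides $q+1$. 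Your construction is more self-contained and produces a concrete minimum-weight codeword (so it also shows $A_{\frac{q+1}{2}}(\C_{(q,q+1,u,1)})>0$ directly), at the cost of a computation the paper avoids; the paper's route is shorter but leans on the Hartmann--Tzeng-type result for the exact value at designed distance $u+1$. Your final step (if $d_2=5$ then $\C(u)$ and hence $\C_{(q,q+1,u,1)}$ would be MDS, contradicting $d_1=\frac{q+1}{2}<q-2$ for $q>5$) is the same MDS-duality argument the paper runs at the level of $\overline{\C_u}$ and $(\overline{\C_u})^\perp$, just transported to $\C(u)$ and its dual via Theorem \ref{NTHM29:1}.
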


\begin{proof}
Since $q>5$, $u>2$. Note that $q+1\equiv 2\pmod{u}$, we deduce that $u\nmid(q+1)$. Clearly, $(u+1) \mid(q+1)$. The desired results follow from  Theorem \ref{nthm:11}.
\end{proof}

Starting from now on, let $q=2^m$ with $m\geq 3$ and $u=2^{m-1}-1$, then $\C_{(q, q+1, u, 1)}$ has checked polynomial $(x-1)\m_{\beta^{u}}(x)\m_{\beta^{u+1}}(x)$, where $\beta$ is a primitive $(q+1)$-th root of unity in $\gf(q^2)$. By Delsarte's theorem, we have 
\begin{equation}\label{n24:31}
\C_{(q, q+1, u, 1)}=\left\{ \bc_{(a, b, c)}  : a \in \gf(q), \ b, \ c \in \gf(q^2) \right\},	
\end{equation}
where $\bc(a, b, c)=(a+\tr(b\beta^{u i}+c\beta^{(u+1)i}))_{i=0}^q$, and $\tr(x)$ denote the trace function from $\gf(q^2)$ to $\gf(q)$. Let $\C(u)$ denote the dual of the BCH code $\C_{(q, q+1, u, 1)}$, then $\C(u)$  is the cyclic code of length $q+1$ over $\gf(q)$ with generator polynomial $(x-1)\m_{\beta^{u}}(x)\m_{\beta^{u+1}}(x)$. To determine the parameters of $\C_{(q, q+1, u, 1)}$ and $\C(u)$, we first prove the following lemma. 

\begin{lemma}\label{nlem27:36}
Let $q=2^m$ with $m\geq 3$. Then there are three pairwise distinct integers $i, j, k$ with $1\leq i,j,k\leq 2^{m-1}$ such that $\tr(\beta^i+\beta^j+\beta^k)=0$. 	
\end{lemma}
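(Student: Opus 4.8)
The plan is to turn the statement into a problem about zero‑sum triples in an explicit subset of $\gf(q)$ and to solve that by an elementary linear‑algebra argument. Write $T=\mathrm{Tr}_{\gf(q)/\gf(2)}$ for the absolute trace of $\gf(q)$ (not the relative trace $\tr:\gf(q^{2})\to\gf(q)$ appearing in the lemma). Since $\beta^{q+1}=1$ we have $\tr(\beta^{\ell})=\beta^{\ell}+\beta^{-\ell}\in\gf(q)$, and $x^{2}+\tr(\beta^{\ell})x+1$ is exactly the minimal polynomial $\m_{\beta^{\ell}}(x)$ for $1\le\ell\le 2^{m-1}$. Using that $q+1$ is odd and $\gcd(q-1,q+1)=1$ (so that no $\beta^{\ell}$ with $1\le\ell\le 2^{m-1}$ lies in $\gf(q)$), the map $\ell\mapsto\tr(\beta^{\ell})$ is a bijection from $\{1,\dots,2^{m-1}\}$ onto
\[
S:=\{\,t\in\gf(q)^{*}\ :\ x^{2}+tx+1\text{ is irreducible over }\gf(q)\,\}=\{\,t\in\gf(q)^{*}\ :\ T(1/t)=1\,\},
\]
the second equality being the standard irreducibility criterion for $x^{2}+tx+1$ together with $T(1/t^{2})=T(1/t)$. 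So producing pairwise distinct $i,j,k$ with $\tr(\beta^{i})+\tr(\beta^{j})+\tr(\beta^{k})=0$ amounts to producing three distinct elements of $S$ summing to $0$; and in characteristic $2$ a solution $a+b+c=0$ with $a,b,c\in S$ automatically has $a,b,c$ pairwise distinct (equality of two would force $0\in S$). Hence it suffices to find $a,b\in S$ with $a+b\in S$.

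The key step is to linearise the three membership conditions. Put $x=1/a$, $y=1/b$ and $z=x+y$; then $a,b\in S$ reads $T(x)=T(y)=1$, while $a+b=(x+y)/(xy)$ lies in $S$ iff $x\ne y$ and $T\!\big(xy/(x+y)\big)=1$. A one‑line computation gives $xy/(x+y)=x^{2}/z+x$, so $T\!\big(xy/(x+y)\big)=T(x^{2}/z)+T(x)=T(x/\sqrt z)+T(x)$, where $\sqrt z$ is the unique square root of $z$ and we used $T(w^{2})=T(w)$. Thus, after normalising $T(x)=1$, the whole task reduces to: choose $z\ne 0$ with $T(z)=0$, and then $x\notin\{0,z\}$ with $T(x)=1$ and $T(x/\sqrt z)=0$.

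To finish I would choose the parameters. Since $m\ge 3$, the hyperplane $\{z\in\gf(q):T(z)=0\}$ has $q/2\ge 4$ elements, so it contains some $z$ with $z\ne 0$, $z\ne 1$. For such $z$ the $\gf(2)$‑linear functionals $x\mapsto T(x)$ and $x\mapsto T(x/\sqrt z)$ are independent (they agree only if $\sqrt z=1$, i.e. $z=1$), whence $\{x:T(x)=1,\ T(x/\sqrt z)=0\}$ is a coset of a codimension‑$2$ subspace of $\gf(q)$ and has exactly $q/4\ge 1$ elements, none equal to $0$ or $z$ since both of those have trace $0$. Any such $x$ gives $y=x+z$, $a=1/x$, $b=1/y$, and $c=a+b$, three distinct elements of $S$ with $a+b+c=0$; pulling back through the bijection $\ell\mapsto\tr(\beta^{\ell})$ yields the required $i,j,k\in\{1,\dots,2^{m-1}\}$. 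The only delicate point is the bijective identification of $S$ with $\{\tr(\beta^{\ell}):1\le\ell\le 2^{m-1}\}$, which I would justify through the elementary facts about $q+1$ noted above; the remaining ingredients (the identity $xy/(x+y)=x^{2}/z+x$ and two dimension counts) are routine, so I foresee no real obstacle. For completeness I note that one could instead argue purely existentially: counting zero‑sum triples in $S$ with additive characters gives a main term $(q/2)^{3}/q$ and, by Parseval together with the Weil bound $|K(y)|\le 2\sqrt q$ for Kloosterman sums, an error of size $O(q^{3/2})$, which is already positive for $q\ge 8$.
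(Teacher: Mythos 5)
Your proposal is correct, but it takes a genuinely different route from the paper. The paper works entirely inside $\gf(q^2)$: it sets $S_1=\{\tr(\beta^i):1\le i\le 2^{m-1}\}$ and $S_2=\{\tr(\beta)\tr(\beta^i):1\le i\le 2^{m-1}\}$, proves $|S_1\cap S_2|\ge 2$ by summing over $\gf(q)^*$ (if $|S_1\cap S_2|\le 1$ then $S_1\cup S_2=\gf(q)^*$, forcing $0=\sum_{x\in\gf(q)^*}x=\tr(\beta)^2\neq 0$), and then converts a coincidence $\tr(\beta)\tr(\beta^i)=\tr(\beta^j)$ into the desired zero-sum triple via the product identity $\tr(\beta)\tr(\beta^i)=\tr(\beta^{i+1}+\beta^{i-1})$. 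You instead characterize the image $\{\tr(\beta^\ell):1\le\ell\le 2^{m-1}\}$ intrinsically as $S=\{t\in\gf(q)^*:T(1/t)=1\}$ (via irreducibility of $\m_{\beta^\ell}(x)=x^2+\tr(\beta^\ell)x+1$ and a size count), reduce to finding a zero-sum triple in $S$, and solve that by intersecting two independent affine hyperplanes for the absolute trace $T$; the substitution $x=1/a$, $y=1/b$, $z=x+y$ and the identity $xy/(x+y)=x^2/z+x$ linearize all three membership conditions, and $q/4\ge 2$ solutions remain. Both arguments are elementary and of comparable length; yours is somewhat more structural (it pins down the trace image exactly and makes the pairwise-distinctness automatic in characteristic $2$), while the paper's stays closer to the cyclotomic setup it reuses elsewhere and needs no appeal to the absolute trace or the quadratic irreducibility criterion. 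The one point you flag as delicate --- the bijection $\ell\mapsto\tr(\beta^\ell)$ onto $S$ --- is indeed the step that needs the factorization $x^{q+1}-1=\prod_i\m_{\beta^i}(x)$ and $\gcd(q+1,q-1)=1$, and your justification of it is adequate.
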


\begin{proof}
It is easily verified that  
$$x^{q+1}+1=(x+1)\prod_{i=1}^{2^{m-1}}(x^2+\tr(\beta^i) x+1),$$
and $\gcd(q+1,q)=1$. Consequently,  
\begin{equation}\label{NEQ24:32}
\tr(\beta^i)\neq \tr(\beta^j)	
\end{equation}
for any two distinct integers $i$ and $j$ with $1\leq i, j\leq 2^{m-1}$. We first prove that $\tr(\beta^i)\neq 0$ for each $1\leq i\leq 2^{m-1}$. Suppose $\tr(\beta^i)=0$, then $x^2+1$ divides $x^{q+1}+1$, which contradicts the fact that $\gcd(2,q+1)=1$. We now prove that $\tr(\beta)\neq 1$. Suppose $\tr(\beta)=1$, then $\beta^2+\beta+1=0$. It follows that $\beta^3=1$, which contradicts the fact that $\beta$ is a primitive $(q+1)$-th root of unity. Let 
$$S_1=\{\Tr(\beta^i): 1\leq i\leq 2^{m-1}\}$$ 
and $S_2=\{ \tr(\beta) \tr(\beta^i): 1\leq i\leq 2^{m-1}\}$, then $|S_1|=|S_2|=2^{m-1}$. It is easily verified that 
\begin{align}\label{NEQ27:23}
\tr(\beta) \tr(\beta^i)&=(\beta+\beta^{-1})(\beta^i+\beta^{-i}) \notag \\
	&=\beta^{i+1}+\beta^{-(i+1)}+\beta^{i-1}+\beta^{-(i-1)} \notag \\
	&=\tr(\beta^{i+1}+\beta^{i-1}).
\end{align}
Notice that $\tr(\beta^{2^{m-1}+1})=\tr(\beta^{2^{m-1}})$, we have 
\begin{equation}\label{NEQ27:24}
\tr(\beta) \tr(\beta^{2^{m-1}})=\tr(\beta^{2^{m-1}}+\beta^{2^{m-1}-1}).	
\end{equation} 
It is clear that $\tr(\beta) \tr(\beta)=\tr(\beta^2)$, then $|S_1\cap S_2|\geq 1$. If $|S_1\cap S_2|=1$, then
$$|S_1\cup S_2|=|S_1|+|S_2|-|S_1\cap S_2|=q-1.$$
Notice that $S_1\cup S_2\subseteq \gf(q)^*$, we get that
\begin{align*}
S_1\cup S_2=\{\tr(\beta^i): 1\leq i\leq 2^{m-1}\}\cup \{\tr(\beta) \tr(\beta^i): 2\leq i\leq 2^{m-1}\}=\gf(q)^*.	
\end{align*}
It then follows that
\begin{align*}
0=\sum_{x\in \gf(q)^*}x=\sum_{i=1}^{2^{m-1}} \tr(\beta^i)+\sum_{i=2}^{2^{m-1}} \tr(\beta) \tr(\beta^i)=\tr(\beta)^2,
\end{align*}
a contradiction. Therefore, $|S_1\cap S_2|\geq 2$. Then there is an integer $i$ with $2\leq i\leq 2^{m-1}$ such that 
$$\tr(\beta) \tr(\beta^i)=\tr(\beta^j),$$
 where $1\leq j\leq 2^{m-1}$. If $2\leq i\leq 2^{m-1}-1$, by Equation (\ref{NEQ27:23}), we deduce that
 $$\tr(\beta^{i+1}+\beta^{i-1})=\tr(\beta^j).$$ 
 Since $\tr(\beta^l)\neq 0$ for each $1\leq l\leq 2^{m-1}$, we deduce that $j\notin \{i-1,i+1\}$. The desired result follows. If $i=2^{m-1}$, by Equation (\ref{NEQ27:24}), we obtain that
 $$\tr(\beta^{2^{m-1}}+\beta^{2^{m-1}-1})=\tr(\beta^j).$$ 
 Similarly, we have $j\notin \{2^{m-1},2^{m-1}-1\}$. This completes the proof. 
\end{proof}

\begin{theorem}\label{thm-feb61} 
Let $q=2^m$ with $m \geq 3$ and $u=2^{m-1}-1$. Then $\C_{(q, q+1, u, 1)}$ has parameters $[q+1, 5, q-5]$ and $\C(u)$ has parameters $[q+1, q-4, 5]$.
\end{theorem}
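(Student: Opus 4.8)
The plan is to read off the dimensions from the $q$-cyclotomic cosets modulo $q+1$, establish the two minimum distances, and derive the one for $\C(u)$ almost for free from the one for $\C_{(q,q+1,u,1)}$.

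Since $q\equiv -1\pmod{q+1}$, the $q$-cyclotomic cosets modulo $q+1$ are $C_0=\{0\}$ and $C_s=\{s,\,q+1-s\}$ for $1\le s\le 2^{m-1}$, each of size $2$; note in particular that $-u\equiv u+3$ and $-(u+1)\equiv u+2\pmod{q+1}$. Hence $\m_{\beta^u}(x)$ and $\m_{\beta^{u+1}}(x)$ have degree $2$, so the generator polynomial $(x-1)\m_{\beta^u}(x)\m_{\beta^{u+1}}(x)$ of $\C(u)$ has degree $5$, and therefore $\dim\C(u)=q-4$ and $\dim\C_{(q,q+1,u,1)}=(q+1)-(q-4)=5$. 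Moreover this generator polynomial is a multiple of $g_u(x)=\m_{\beta^u}(x)\m_{\beta^{u+1}}(x)$, so $\C(u)\subseteq\C_u$; since $\C_u$ is $[q+1,q-3,5]$ MDS by Theorem~\ref{thm-sdjoin1}, we get $d(\C(u))\ge d(\C_u)=5$.

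I would next bound $d(\C_{(q,q+1,u,1)})$ from below using the Delsarte description in~(\ref{n24:31}). Writing $x=\beta^i\in U_{q+1}$ and using $\tr(z)=z+z^q$ together with $x^q=x^{-1}$, the codeword $\bc(a,b,c)$ has $i$-th coordinate $a+bx^{u}+b^{q}x^{-u}+cx^{u+1}+c^{q}x^{-(u+1)}$. The crucial step is the substitution $y=\beta^{(u+1)i}$: since $u+1=2^{m-1}$ and $q+1$ is odd this is a bijection of $\Z_{q+1}$ onto $U_{q+1}$, and from $2(u+1)\equiv-1$ and $3(u+1)\equiv u\pmod{q+1}$ one gets $x^{u}=y^{3}$ and $x^{u+1}=y$. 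Thus the $i$-th coordinate equals $y^{-3}P(y)$ with $P(Y)=bY^{6}+cY^{4}+aY^{3}+c^{q}Y^{2}+b^{q}$, a polynomial of degree at most $6$ that vanishes identically only when $(a,b,c)=(0,0,0)$. Hence a nonzero codeword is zero on at most $6$ coordinates, which gives $d(\C_{(q,q+1,u,1)})\ge q-5$.

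For the matching upper bound I would invoke Lemma~\ref{nlem27:36}, which produces pairwise distinct $i,j,k\in\{1,\ldots,2^{m-1}\}$ with $\tr(\beta^i+\beta^j+\beta^k)=0$. Because $1\le i,j,k\le 2^{m-1}$, the six exponents $\pm i,\pm j,\pm k$ are pairwise distinct modulo $q+1$, so $z_1,\ldots,z_6=\beta^{\pm i},\beta^{\pm j},\beta^{\pm k}$ are six distinct elements of $U_{q+1}$ with $\sum_{\ell}z_\ell=\tr(\beta^i)+\tr(\beta^j)+\tr(\beta^k)=0$ and $\prod_{\ell}z_\ell=1$. Denoting by $e_r$ the $r$-th elementary symmetric function of $z_1,\ldots,z_6$, one has $e_1=0$, $e_6=1$, and, using $z_\ell^{q}=z_\ell^{-1}$ and $\prod_\ell z_\ell=1$, also $e_5=(\sum_\ell z_\ell)^{q}=0$, $e_3^{q}=e_3$ and $e_4=e_2^{q}$. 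Choosing $b=1$, $c=e_2$ and $a=e_3\in\gf(q)$, the polynomial $P(Y)$ above becomes $Y^{6}+e_2Y^{4}+e_3Y^{3}+e_4Y^{2}+1=\prod_{\ell=1}^{6}(Y-z_\ell)$, so the codeword $\bc(a,b,c)$ is zero precisely on the $6$ coordinates $i$ with $\beta^{(u+1)i}\in\{z_1,\ldots,z_6\}$; since $b=1\ne 0$ it is nonzero. Hence it has weight $q-5$, so $d(\C_{(q,q+1,u,1)})=q-5$ and $\C_{(q,q+1,u,1)}$ is $[q+1,5,q-5]$.

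To conclude, since $q-5<q-3=(q+1)-5+1$ the code $\C_{(q,q+1,u,1)}$ is not MDS; as the dual of an MDS code is MDS, $\C(u)=\C_{(q,q+1,u,1)}^{\perp}$ is not MDS either, so $d(\C(u))\le (q+1)-(q-4)=5$, and combined with $d(\C(u))\ge 5$ this yields $d(\C(u))=5$, i.e.\ $\C(u)$ is $[q+1,q-4,5]$. I expect the main obstacle to be the third paragraph: one has to pick exactly the right parameters $(a,b,c)$ so that the coordinate polynomial becomes the monic polynomial $\prod_{\ell}(Y-z_\ell)$, and verify carefully that the resulting word genuinely lies in $\C_{(q,q+1,u,1)}\subseteq\gf(q)^{q+1}$ — that is, the identities $e_3\in\gf(q)$, $e_4=e_2^{q}$, and the pairwise distinctness of the $z_\ell$ — whereas the coset count, the substitution $y=\beta^{(u+1)i}$, and the MDS-duality shortcut are routine.
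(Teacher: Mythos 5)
Your proof is correct and follows essentially the same route as the paper: the trace/Delsarte representation reduces nonzero-coordinate counting to the roots of a degree-$6$ polynomial (your substitution $y=x^{u+1}$ plays the role of the paper's squaring trick), and the weight-$(q-5)$ codeword is built from Lemma~\ref{nlem27:36} exactly as in the paper, with your elementary-symmetric-function bookkeeping matching their choice $a=(uvw)^{q/2}$, $b=1$, $c=(uv+uw+vw+1)^{q/2}$. The only cosmetic divergence is that you obtain $d(\C(u))\ge 5$ from the containment $\C(u)\subseteq\C_u$ rather than from the BCH bound on the four consecutive zeros $\beta^{2^{m-1}-1},\ldots,\beta^{2^{m-1}+2}$; both are valid.
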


\begin{proof}
It is clear that $\C_{(q, q+1, u, 1)}$ has length $q+1$ and dimension $5$. We now study the minimum distance of $\C_{(q, q+1, u, 1)}$. Let $\bc$ be a nonzero codeword of $\C_{(q, q+1, u, 1)}$. By Equation (\ref{n24:31}), there is $(a, b, c)\in \gf(q) \times \gf(q^2)^2$ and $(a, b, c)\neq (0,0,0)$ such that
 $$\bc=(a+\tr(b \beta^{ui}+c\beta^{(u+1)i}))_{i=0}^q.$$
 It is easily verified that $a+\tr(b \beta^{ui}+c\beta^{(u+1)i})=f_{a, b, c}(\beta^i)$, where  
 $$f_{a, b, c}(x)=a+b x^{2^{m-1}-1}+c x^{2^{m-1}}+c^q x^{2^{m-1}+1}+b^q x^{2^{m-1}+2}.$$ 
Note that 
$$f_{(a, b, c)}(x)^2 = a^2+b^2x^{q-2}+c^2 x^{q}+ c^{2q} x^{q+2}+b^{2q} x^{q+4}.$$
Consequently, 
\begin{align*}
f_{(a, b, c)}(\beta^i)^2&=a^2+b^2 \beta^{-3i}+c^2 \beta^{-i}+	 c^{2q} \beta^{i}+b^{2q} \beta^{3i}\\
&=(b^2+c^2\beta^{2i}+a^2 \beta^{3i}+c^{2q} \beta^{4i}+b^{2q} \beta^{6i}) \beta^{-3i}. 
\end{align*}
Let $g_{a, b, c}(x)=b^2+c^2 x^{2}+a^2 x^{3}+c^{2q} x^{4}+b^{2q} x^{6}$, then $f_{(a, b, c)}(\beta^i)=0$ if and only if $g_{a, b, c}(\beta^i)=0$. Since $(a, b, c)\neq (0,0,0)$, $g_{a, b, c}(x)\neq 0$. Consequently, $g_{a, b, c}(x)$ has at most $6$ solutions. Therefore, 
$$\wt(\bc) \geq q+1-6=q-5.$$ 
It follows that $d(\C_{(q, q+1, u,1)})\geq q-5$.

We now prove that $g_{a, b, c}(x)$ has $6$ solutions $x \in \{\beta^i:1\leq i\leq q \}$ for some $(a, b, c) \neq (0,0,0)$. By Lemma \ref{nlem27:36}, there are three pairwise distinct integers $i, j, k$ with $1\leq i,j,k\leq 2^{m-1}$ such that 
$$\tr(\beta^i+\beta^j+\beta^k)=0.$$ Clearly, $|\{\beta^{i}, \beta^{-i}, \beta^j, \beta^{-j}, \beta^k, \beta^{-k}\}|=6$.  Put $u=\tr(\beta^i)$,  $v=\tr(\beta^j)$, $w=\tr(\beta^k)$, then
\begin{align*}
\m_{\beta^i}(x)\m_{\beta^j}(x)\m_{\beta^k}(x)&=(x^2+u x+1)	(x^2+v x+1)(x^2+w x+1)\\
&=x^6+(uv+ uw+ v w+1) x^4+uvw x^3+(uv+uw+vw+1)x^2+1.		
\end{align*}
Set 
$$a=(uvw)^{q/2}, \ b=1, \ c=(uv+uw+vw+1)^{q/2}.$$ 
Then $\{\beta^{i}, \beta^{-i}, \beta^j, \beta^{-j}, \beta^k, \beta^{-k}\}$ is a set of $6$ solutions of $g_{a, b, c}(x)$. Consequently, the corresponding codeword $\bc$ has weight $q-5$. Therefore, $d(\C_{(q, q+1, u, 1)})=q-5$. 

It is clear that $\C(u)$ has length $q+1$ and dimension $q-4$. Note that $\C_{(q, q+1,u,1)}$ is not MDS, then $\C(u)$ is not MDS. As a result, $d(\C(u))\leq 5$. Notice that $\beta^i$ is a zero of $\C(u)$ for each 
$$i\in \{2^{m-1}-1, 2^{m-1}, 2^{m-1}+1, 2^{m-1}+2\}.$$ By the BCH bound, we deduce that $d(\C(u))\geq 5$. Therefore, $d(\C(u))=5$. This completes the proof.  
\end{proof} 

The parameters of the extended code $\overline{\C_u}$ are documented in the following theorem.

\begin{theorem}\label{thm-feb62} 
Let $q=2^m$ with $m \geq 3$, and $u=2^{m-1}-1$. Then $\overline{\C_u}$ has parameters $[q+2, q-3, 5]$ and $(\overline{\C_u})^\perp$ has parameters $[q+2, 5, q-4]$. 
\end{theorem}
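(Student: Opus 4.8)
The plan is to combine the structural results already established with the parameters of $\C_{(q,q+1,u,1)}$ and $\C(u)$ from Theorem \ref{thm-feb61}. First I would invoke Theorem \ref{thm-sdjoin1} with $u = 2^{m-1}-1$: the code $\C_u$ has parameters $[q+1,\,2u-1,\,q-2u+3] = [q+1,\,q-3,\,5]$. By Lemma \ref{lem-extendedCodeParam}, the extended code $\overline{\C_u}$ therefore has length $q+2$, dimension $q-3$, and minimum distance $d(\overline{\C_u}) \in \{5,6\}$. Consequently $(\overline{\C_u})^\perp$ has length $q+2$ and dimension $(q+2)-(q-3) = 5$; this disposes of the length and dimension claims for both codes immediately.

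Next I would pin down $d((\overline{\C_u})^\perp)$. Since $2 \le u \le \lfloor \frac{q+1}{2}\rfloor$, Theorem \ref{thm-fund21jproj} applies and gives $d((\overline{\C_u})^\perp) = d(\C_{(q,q+1,u,1)}) + 1$. By Theorem \ref{thm-feb61}, $d(\C_{(q,q+1,u,1)}) = q-5$, hence $d((\overline{\C_u})^\perp) = q-4$, as claimed. This is the only nontrivial input needed for the dual code, and it is already proved in the preceding theorem, so no further work is required here.

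Finally I would determine $d(\overline{\C_u})$, which is the one remaining point. By Theorem \ref{NTHM29:1}, since $2 \le u \le \lfloor\frac{q+1}{2}\rfloor$, we have $d(\overline{\C_u}) = d(\C(u))$. By Theorem \ref{thm-feb61}, $\C(u)$ has parameters $[q+1,\,q-4,\,5]$, so $d(\C(u)) = 5$ and therefore $d(\overline{\C_u}) = 5$. Combining the three steps yields the stated parameters $[q+2,\,q-3,\,5]$ for $\overline{\C_u}$ and $[q+2,\,5,\,q-4]$ for $(\overline{\C_u})^\perp$, which completes the proof. I do not anticipate any genuine obstacle in this particular theorem: all the hard work (the combinatorial argument via Lemma \ref{nlem27:36} and the solution-counting for $g_{a,b,c}(x)$) has been front-loaded into Theorem \ref{thm-feb61}, and the present statement is essentially a bookkeeping consequence of Theorems \ref{thm-sdjoin1}, \ref{thm-fund21jproj}, \ref{NTHM29:1}, \ref{thm-feb61} together with Lemma \ref{lem-extendedCodeParam}.
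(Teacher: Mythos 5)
Your proposal is correct and follows essentially the same route as the paper: the paper likewise reads off the length and dimension from Theorem \ref{thm-sdjoin1}, obtains $d(\overline{\C_u})=d(\C(u))=5$ from Theorems \ref{NTHM29:1} and \ref{thm-feb61}, and obtains $d((\overline{\C_u})^\perp)=d(\C_{(q,q+1,u,1)})+1=q-4$ from Theorems \ref{thm-fund21jproj} and \ref{thm-feb61}. Your assessment that all the substantive work lives in Theorem \ref{thm-feb61} matches the paper exactly.
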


\begin{proof}
By Theorem \ref{thm-sdjoin1}, $\C_{u}$ has parameters $[q+1, q-3, 5]$. Then $\overline{\C_{u}}$ has length $q+2$ and dimension $q-3$. Consequently, $(\overline{\C_{u}})^\perp$ has length $q+2$ and dimension $5$. By Theorems \ref{NTHM29:1} and \ref{thm-feb61}, we obtain that $d(\overline{\C_u})=d(\C(u))=5$. By Theorems \ref{thm-fund21jproj} and \ref{thm-feb61}, we deduce that
$$d((\overline{\C_u})^\perp)=d(\C_{(q, q+1,u,1)})+1=q-4.$$ 
This completes the proof. 
\end{proof} 

\begin{example}
Let $q=2^4$ and $u=2^{3}-1=7$. Then the following hold.
\begin{enumerate}
\item The code $\C_u$ has parameters $[17, 13, 5]$ and $(\C_u)^\perp$ has parameters $[17, 4, 14]$.	
\item The BCH code $\C_{(q, q+1, u, 1)}$ has parameters $[17,5,11]$ and $\C(u)$ has parameters $[17,12,5]$.
\item The extended code $\overline{\C_u}$ has parameters $[18, 13, 5]$ and $(\overline{\C_u})^\perp$ has parameters $[18,5,12]$.
\end{enumerate}
\end{example}

\subsection{The case $u=\lfloor\frac{q+1}2\rfloor$} 

Let $q>2$ be a prime power and $u=\lfloor\frac{q+1}2\rfloor$. Our task in this subsection is to settle the parameters of the code $\overline{\C_u}$ and its dual $(\overline{\C_u})^\perp$. By Theorem \ref{nthm:11}, we have the following result.

\begin{corollary}
Let $q$ be an odd prime power and $u=\frac{q+1}2$. Then $\overline{\C_u}$ has parameters $[q+2, q, 2]$ and $(\overline{\C_u})^\perp$ has parameters $[q+2,2,\frac{q+3}2]$.
\end{corollary}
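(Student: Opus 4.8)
The plan is to obtain this corollary as an immediate specialization of Theorem \ref{nthm:11}, so the only thing that really needs checking is that the hypotheses of that theorem are met by $u=\frac{q+1}2$. First I would note that, since $q$ is odd, $\lfloor\frac{q+1}2\rfloor=\frac{q+1}2$, so $u=\frac{q+1}2$ lies in the admissible range $2\le u\le\lfloor\frac{q+1}2\rfloor$ (the inequality $u\ge 2$ being equivalent to $q\ge 3$, which holds for an odd prime power $q>2$). Moreover $q+1=2u$, hence $u\mid(q+1)$, which places us in the first case of Theorem \ref{nthm:11}.

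Applying Theorem \ref{nthm:11}(1) then gives that $\overline{\C_u}$ has parameters $[q+2,\,2u-1,\,q-2u+3]$ and $(\overline{\C_u})^\perp$ has parameters $[q+2,\,q-2u+3,\,u+1]$. Substituting $u=\frac{q+1}2$ yields $2u-1=q$, $q-2u+3=2$ and $u+1=\frac{q+3}2$, so these become $[q+2,q,2]$ and $[q+2,2,\frac{q+3}2]$ respectively, which is exactly the assertion.

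I do not anticipate any genuine obstacle: the statement is a direct consequence of the already-established Theorem \ref{nthm:11} together with the trivial identity $q+1=2u$. If a self-contained argument were preferred, one could instead proceed as follows. By Theorem \ref{thm-sdjoin1}, $\C_u$ is a $[q+1,q,2]$ MDS code, so $\overline{\C_u}$ has length $q+2$ and dimension $q$, and hence $(\overline{\C_u})^\perp$ has length $q+2$ and dimension $2$. By Theorem \ref{thm-fund21jproj} and Lemma \ref{nlem:10}(1) (valid since $u\mid(q+1)$), $d((\overline{\C_u})^\perp)=d(\C_{(q,q+1,u,1)})+1=u+1=\frac{q+3}2$. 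Consequently $(\overline{\C_u})^\perp$ is not MDS, so neither is $\overline{\C_u}$, and since $d(\overline{\C_u})\in\{2,3\}$ by Lemma \ref{lem-extendedCodeParam} while an MDS code of those parameters would need $d(\overline{\C_u})=3$, we conclude $d(\overline{\C_u})=2$.
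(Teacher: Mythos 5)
Your proposal is correct and matches the paper exactly: the paper derives this corollary by citing case (1) of Theorem \ref{nthm:11} (applicable since $q+1=2u$ gives $u\mid(q+1)$) and substituting $2u-1=q$, $q-2u+3=2$, $u+1=\frac{q+3}{2}$, which is precisely your main argument. Your optional self-contained variant simply unwinds the proof of Theorem \ref{nthm:11} and is also sound.
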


Starting from now on, let $q=2^m$ with $m\geq 2$ and $u=2^{m-1}$. Recall that $\C_u$ is a $[q+1,q-1,3]$ MDS code over $\gf(q)$ and has generator polynomial $g_u(x)=\m_{\beta^{u}}(x)$, where $\beta$ is a primitive $(q+1)$-th root of unity in $\gf(q^2)$. Let $\C(u)$ be the dual of $\C_{(q, q+1, u, 1)}$, then $\C(u)$ is the cyclic code of length $q+1$ over $\gf(q)$ with generator polynomial $(x-1)\m_{\beta^u}(x)$.

\begin{lemma}\label{nlem:32}
Let $q=2^m$ with $m\geq 2$ and $u=2^{m-1}$. Then $\C_{(q, q+1, u, 1)}$ is a $[q+1, 3, q-1]$ MDS code over $\gf(q)$.	
\end{lemma}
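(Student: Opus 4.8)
The plan is to get the dimension from the degree of the generator polynomial and then pin down the minimum distance by squeezing it between the Singleton bound from above and an explicit counting argument from below, the latter using Delsarte's theorem to write the codewords down concretely as trace functions.

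\textbf{Dimension.} First I would observe that $\C_{(q, q+1, u, 1)}$ with $u=2^{m-1}$ has generator polynomial $\lcm(\m_{\beta}(x),\ldots,\m_{\beta^{u-1}}(x))=\prod_{i=1}^{2^{m-1}-1}\m_{\beta^i}(x)$. By the factorisation of $x^{q+1}-1$ recalled in Section~\ref{sec-LCDMDScycliccode}, for $1\le i\le 2^{m-1}-1$ each $\m_{\beta^i}(x)$ has degree $2$, and these $2^{m-1}-1$ polynomials are pairwise distinct since their roots lie in the pairwise disjoint cosets $\{i,-i\}\bmod(q+1)$. Hence $\deg\bigl(\prod_i\m_{\beta^i}\bigr)=2(2^{m-1}-1)=q-2$ and $\dim\C_{(q, q+1, u, 1)}=(q+1)-(q-2)=3$, so the Singleton bound gives $d(\C_{(q, q+1, u, 1)})\le q-1$.

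\textbf{Parametrising codewords.} Next I would identify the zeros of the check polynomial. Since $q+1$ is odd and $q\equiv-1$, $q^2\equiv1\pmod{q+1}$, a short computation shows $2^{m-1}q\equiv 2^{-1}\equiv 2^{m-1}+1\pmod{q+1}$, so the $q$-cyclotomic coset of $2^{m-1}$ modulo $q+1$ is $\{2^{m-1},2^{m-1}+1\}$; consequently the exponents $j$ for which $\beta^j$ is a root of the check polynomial of $\C_{(q, q+1, u, 1)}$ are exactly $\{0,2^{m-1},2^{m-1}+1\}$ (one coset of size $1$ and one of size $2$, consistent with dimension $3$). By Delsarte's theorem, every codeword then has the form $\bc_{(a,b)}=\bigl(a+\tr(b\beta^{2^{m-1}j})\bigr)_{j=0}^{q}$ with $a\in\gf(q)$, $b\in\gf(q^2)$, where $\tr$ is the trace from $\gf(q^2)$ to $\gf(q)$.

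\textbf{Lower bound on $d$ (the crux).} Finally I would show that a nonzero $\bc_{(a,b)}$ has at most two zero coordinates. Writing $\tr(y)=y+y^q$ and using $\beta^q=\beta^{-1}$, the $j$-th coordinate vanishes iff $a+bx+b^qx^{-1}=0$ with $x=\beta^{2^{m-1}j}$; and since $\gcd(2^{m-1},q+1)=1$, the map $j\mapsto x$ is a bijection of $\{0,\ldots,q\}$ onto $U_{q+1}$. If $b=0$ then $a\ne0$ and no coordinate vanishes; if $b\ne0$ the condition is equivalent to the quadratic $bx^2+ax+b^q=0$, which has at most two roots in $\gf(q^2)$, hence at most two in $U_{q+1}$. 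Therefore $\wt(\bc_{(a,b)})\ge q-1$ for every nonzero codeword, so $d(\C_{(q, q+1, u, 1)})\ge q-1$, and together with the Singleton bound $d=q-1=(q+1)-3+1$, i.e.\ the code is MDS. I do not expect a genuine obstacle; the points that need care are the cyclotomic-coset computation modulo $q+1$ and applying Delsarte's theorem with the correct coset representatives.
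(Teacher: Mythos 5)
Your proof is correct, but it takes a genuinely different route from the paper's. The paper works entirely on the dual side: $\C(u)$, the dual of $\C_{(q,q+1,u,1)}$, is the cyclic code with generator polynomial $(x-1)\m_{\beta^{u}}(x)$ of degree $3$, hence of dimension $q-2$; since $\gcd(2^{m-1},q+1)=1$, $\gamma=\beta^{u}$ is itself a primitive $(q+1)$-th root of unity, and the three roots $1,\beta^{u},\beta^{-u}$ of the generator polynomial are the consecutive powers $\gamma^{-1},\gamma^{0},\gamma^{1}$, so the BCH bound gives $d(\C(u))\geq 4$; thus $\C(u)$ is a $[q+1,q-2,4]$ MDS code and the claim follows because the dual of an MDS code is MDS. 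You instead attack the code directly: dimension from the degree of the generator polynomial, the Singleton bound from above, and from below an explicit trace parametrisation of the codewords via Delsarte's theorem together with the observation that the zero coordinates of $\bc_{(a,b)}$ with $b\neq 0$ correspond to roots of the quadratic $bx^{2}+ax+b^{q}=0$, of which there are at most two. Your supporting computations check out: the $q$-cyclotomic coset of $2^{m-1}$ modulo $q+1$ is indeed $\{2^{m-1},2^{m-1}+1\}$, and $j\mapsto\beta^{2^{m-1}j}$ is a bijection onto $U_{q+1}$ because $q+1$ is odd. The paper's argument is shorter and uses the re-indexing by $\gamma=\beta^{u}$ only implicitly through the BCH bound; yours is more hands-on, is in the same spirit as the paper's later proof of Theorem \ref{thm-feb61}, and yields the codewords explicitly, at the cost of invoking Delsarte's theorem for a bound that duality and BCH deliver in two lines.
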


\begin{proof}
To prove the desired result, we only need to prove that $\C(u)$ is an MDS code over $\gf(q)$. By definition, $\C(u)$ has length $q+1$, dimension $q-2$ and $d(\C(u))\leq 4$. Since $\gcd(u, q+1)=1$, we deduce that $\gamma=\beta^{u}$ is also a primitive $(q+1)$-th root of unity. Notice that $\gamma^{-1}=\beta^{-u}$, then $\gamma^{i}$ is a root of $(x-1)\m_{\beta^u}(x)$ for each $i\in \{-1, 0, 1\}$. By the BCH bound, $d(\C(u))\geq 4$. Therefore, $\C(u)$ is a $[q+1, q-2, 4]$ MDS code over $\gf(q)$. This completes the proof. 	
\end{proof}

\begin{theorem}\label{thm-sdjoint18}
Let $q=2^m$ with $m \geq 2$ and $u=2^{m-1}$. Then the following hold: 
\begin{enumerate}
\item $\C_{u}$ is a $[q+1, q-1, 3]$ MDS code over $\gf(q)$. 
\item $(\C_{u})^\perp$ is a $[q+1, 2, q]$ MDS code over $\gf(q)$. 
\item $\overline{\C_{u}}$ is a $[q+2, q-1, 4]$ MDS code over $\gf(q)$. 
\item $(\overline{\C_{u}})^\perp$ is a $[q+2, 3, q]$ MDS code over $\gf(q)$.    
\end{enumerate}
\end{theorem}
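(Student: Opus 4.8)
The plan is to deduce all four items from results already in hand, so the proof is essentially a bookkeeping of parameters. First I would dispose of parts (1) and (2) by specialization: since $q=2^m$ is even we have $u=2^{m-1}=\lfloor\frac{q+1}{2}\rfloor$, so Theorem \ref{thm-sdjoin1} applies directly, and substituting $u=q/2$ into the parameter formulas $[q+1,2u-1,q-2u+3]$ and $[q+1,q+2-2u,2u]$ yields $[q+1,q-1,3]$ and $[q+1,2,q]$ respectively; both are MDS (and reversible) by that theorem, and no further work is needed.

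For part (3) the goal is to determine $d(\overline{\C_u})$. Lemma \ref{lem-extendedCodeParam} already gives that $\overline{\C_u}$ has length $q+2$, dimension $q-1$, and $d(\overline{\C_u})\in\{3,4\}$. Since $u=2^{m-1}\ge 2$ when $m\ge 2$, Theorem \ref{NTHM29:1} applies with $q-2u+3=3$ and gives $d(\overline{\C_u})=d(\C(u))$. By Lemma \ref{nlem:32}, $\C_{(q,q+1,u,1)}$ is a $[q+1,3,q-1]$ MDS code, hence its dual $\C(u)$ is MDS with parameters $[q+1,q-2,4]$, so $d(\C(u))=4$ and therefore $d(\overline{\C_u})=4$. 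As $q+2-(q-1)+1=4$, the code $\overline{\C_u}$ meets the Singleton bound and is MDS.

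Part (4) I would obtain in one line from part (3) together with the standard fact that the dual of an MDS code is MDS: $(\overline{\C_u})^\perp$ has dimension $q+2-(q-1)=3$, so its parameters are $[q+2,3,(q+2)-3+1]=[q+2,3,q]$. As a cross-check one can instead invoke Theorem \ref{thm-fund21jproj}, which gives $d((\overline{\C_u})^\perp)=d(\C_{(q,q+1,u,1)})+1=(q-1)+1=q$ using Lemma \ref{nlem:32}; the two computations agree.

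There is no genuine obstacle in this theorem; the substantive work is entirely inside Lemma \ref{nlem:32}, whose proof uses that $\gcd(u,q+1)=1$, so $\gamma=\beta^u$ is again a primitive $(q+1)$-th root of unity and $\gamma^{-1},\gamma^0,\gamma^1$ are roots of the generator polynomial $(x-1)\m_{\beta^u}(x)$ of $\C(u)$, whence $d(\C(u))\ge 4$ by the BCH bound. The only point I would double-check while writing is that the code denoted $\C(u)$ here — the cyclic code with generator polynomial $(x-1)\m_{\beta^u}(x)$, i.e. the dual of $\C_{(q,q+1,u,1)}$ — is the same object $(\widetilde{\C_u^\perp})^\perp$ to which Theorem \ref{NTHM29:1} refers; this is exactly the identification made after Theorem \ref{thm-fund21jproj}, where $\widetilde{\C_u^\perp}$ is shown to be the narrow-sense BCH code $\C_{(q,q+1,u,1)}$.
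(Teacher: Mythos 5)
Your proof is correct and follows essentially the same route as the paper: parts (1) and (2) by specializing Theorem \ref{thm-sdjoin1} at $u=q/2=\lfloor\frac{q+1}{2}\rfloor$, and parts (3)--(4) by reducing everything to Lemma \ref{nlem:32}. The only cosmetic difference is the order: the paper establishes part (4) first via Theorem \ref{thm-fund21jproj} and then gets part (3) from MDS duality, whereas you establish part (3) first via Theorem \ref{NTHM29:1} and get part (4) by duality --- and you already record the paper's route as your cross-check.
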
 

\begin{proof}
The first two desired conclusions follow from Theorem \ref{thm-sdjoin1}. To  prove that the latter two desired conclusions, we only need to prove $d((\overline{\C_u})^\perp)=q$. From Theorem \ref{thm-fund21jproj} and Lemma \ref{nlem:32}, we deduce that $d((\overline{\C_u})^\perp)=d(\C_{(q, q+1, u, 1)})+1=q$. This completes the proof.  
\end{proof}

The codes $(\overline{\C_{u}})^\perp$ are the second family of hyperoval codes presented in this paper.  

\section{Summary and concluding remarks} \label{sec-final202171} 

It is observed that only a small number of families of NMDS codes with known weight distributions have been reported in \cite{DingTang19,HW22,LiHeng22a,LiHeng22b,TangDing20,WH2020,XCQ22,XuFanHan22}, as two NMDS codes over $\gf(q)$ with the same parameters $[n, k, n-k]$ may have different weight distributions. The first contribution of this paper is the several families of NMDS codes documented in Corollary \ref{Cor9::1}, Theorems \ref{NTHM:17}, \ref{NTHM18} and \ref{thm-sdjoint4}, and the settlement of the weight distributions of three families of NMDS codes (see Theorems \ref{thm-sdjoint3}, \ref{NTHM23} and \ref{thm-sdjoint4}). One of the infinite families of NMDS codes is both distance-optimal and dimension-optimal locally recoverable codes \cite{TFZD21}. The second contribution of this paper is the MDS codes $\overline{\C_2}$ documented in Corollary \ref{thm-sdjoint2} and the MDS codes $\C_{q/2}$ and $\overline{\C_{q/2}}$ documented in Theorem \ref{thm-sdjoint18} for even $q$. They are also both distance-optimal and dimension-optimal locally recoverable codes. The third contribution of this paper is the settlement of the parameters of $\overline{\C_{u}}$ and its dual for $u\in  \left\{2, 3,4, \lfloor\frac{q-1}2\rfloor, \lfloor\frac{q+1}2\rfloor\right\}$.

In this paper, we settled the parameters of the extended code $\overline{\C_u}$ for only a few values of $u$. It would be very challenging to study the parameters of the code $\overline{\C_u}$ and its dual $(\overline{\C_u})^\perp$ for $5 \leq u \leq \lfloor\frac{q-3}2\rfloor$. The reader is cordially invited to attack this problem. This shows that studying the extended code of a known linear code could be very challenging, even if the parameters of the linear code are known. This may explain why not much work on extended codes has been reported in the literature. 

Although the NMDS codes $\overline{\C_2}$, $\overline{\C_3}$ and $\overline{\C_4}$  have small dimensions (i.e., $3$, $5$ and $7$, respectively), they are interesting and valuable due to the following applications. These NMDS codes have applications in cryptography \cite{MS19}, finite geometry \cite{DeBoer96,FaldumWillems97,GP07} and distributed data storage systems (see the explanations in Section \ref{sec-intro}). Another application of these NMDS codes with small dimensions is the construction of subfield codes (i.e., the trace codes) with larger dimensions. We inform the reader that the binary subfield codes of $\overline{\C_2}$, $\overline{\C_3}$ and $\overline{\C_4}$ are distance-optimal or have the best parameters for $q=2^m$ with $m \in \{3,4,5,6\}$ in every case except one case (see Table \ref{table1}). The reader is referred to \cite{DH19,HD19,HDW20,WZ20} for information about subfield codes, which are different from the subfield subcodes of linear codes. 

\begin{table*}
{\caption{\rm The binary subfield codes of $\overline{\C_u}$
}\label{table1}
\begin{center}
\begin{tabular}{ccccc}\hline
$u$  & $m$ & subfield code  &  best distance & best dimension \\\hline
$2,3$    & $3$   &$[10,7,2]$ & optimal & $9$\\\hline
$2,3$    & $4$   & $[18,9,6]$   &  optimal & optimal \\\hline
$2,3$    & $5$   & $[34,11,12]$ & optimal & $12$ \\\hline
$2,3$    & $6$   & $[66,13,26]$ & best known & best known \\\hline
$4$    & $3$   & $[10,9,2]$     &  optimal & optimal \\\hline
$4$    & $4$   & $[18,17,2]$    &  optimal & optimal\\\hline
$4$    & $5$   & $[34,21,4]$    & $6$ & $27$  \\\hline
$4$    & $6$   & $[66,25,16]$    & best known & $29$ is best known\\\hline
\end{tabular}
\end{center}}
\end{table*}

According to our experimental data,  the two families of NMDS codes documented in Theorems \ref{thm-sdjoint3}, \ref{NTHM23} and \ref{thm-sdjoint4} do not support $2$-designs, only special NMDS codes support $2$-designs 
\cite{DingTang19,TangDing20,XCQ22}. 


\end{document}